\newcommand{\dict}{\textnormal{\textsf{DICT}}}
\newcommand{\lime}{\textnormal{\textsf{LIME}}}
\newcommand{\clime}{\textnormal{\textsf{CLIME}}}
\newcommand{\glime}{\textnormal{\textsf{GLIME}}}
\newcommand{\nim}{\textnormal{\textsf{NIME}}}
\newcommand{\SC}{\textnormal{\textsc{SC}}}
\newcommand{\IC}{\mathcal I}
\definecolor{darkgreen}{rgb}{0,0.6,0}
\newcommand{\kibitz}[2]{\ifnum\Comments=1{\color{#1}{#2}}\fi}
\newcommand{\omer}[1]{\kibitz{blue}{[Omer :#1]}}
\newcommand{\TODO}[1]{\kibitz{red}{[TODO:#1]}}
\newcommand{\mM}{\mathcal{M}}
\newcommand{\ind}{\mathds{1}}
\newcommand{\defeq}{\stackrel{\text{def}}{=}}
\theoremstyle{plain}
\newtheorem{claim}{Claim}
\newtheorem{definition}{Definition}
\newtheorem{theorem}{Theorem}
\newtheorem{proposition}{Proposition}
\newtheorem{lemma}{Lemma}
\newtheorem{corollary}{Corollary}
\theoremstyle{definition}
\newtheorem{example}{Example}
\newcommand\bl[1]{\boldsymbol{ #1 } }
\newcommand\abs[1]{\left| #1  \right|}
\DeclareMathOperator*{\argmin}{arg\,min} 
\newcommand{\ceil}[1]{\left\lceil#1\right\rceil}
\newcommand{\floor}[1]{\left\lfloor#1\right\rfloor}
\begin{document}
%
\title{
From Recommendation Systems to Facility Location Games
}

\author{Omer Ben{-}Porat, Gregory Goren, Itay Rosenberg and Moshe Tennenholtz \\
Technion - Israel Institute of Technology\\
Haifa 32000 Israel\\
\{omerbp@campus,itayrose@campus,gregory.goren@campus,moshe@ie\}.technion.ac.il \\
}

\maketitle
\begin{abstract}
Recommendation systems are extremely popular tools for matching users and contents. However, when content providers are strategic, the basic principle of matching users to the closest content, where both users and contents are modeled as points in some semantic space, may yield low social welfare. This is due to the fact that content providers are strategic and optimize their offered content to be recommended to as many users as possible. 
Motivated by modern applications, we propose the widely studied framework of facility location games to study recommendation systems with strategic content providers. Our conceptual contribution is the introduction of a \textit{mediator} to facility location models, in the pursuit of better social welfare. We aim at designing mediators that a) induce a game with high social welfare in equilibrium, and b) intervene as little as possible. In service of the latter, we introduce the notion of {\em intervention cost}, which quantifies how much damage a mediator may cause to the social welfare when an off-equilibrium profile is adopted. As a case study in high-welfare low-intervention mediator design, we consider the one-dimensional segment as the user domain. We propose a mediator that implements the socially optimal strategy profile as the unique equilibrium profile, and show a tight bound on its intervention cost. Ultimately, we consider some extensions, and highlight open questions for the general agenda.

\end{abstract}

\section{Introduction}
\label{sec:intro}
Publishing, blogging, and content creation are fundamental to data science. Indeed, many of the AI-based recommendation systems aim at matching users with data created by content providers. While at first glance this task might be viewed as purely computational, a major topic that should be tackled is the participants' {\em incentives}.

To illustrate such incentives, consider the following example inspired by Hotelling's seminal work \cite{Hotelling}. A blogging recommendation system recommends users with relevant blogs. Two players (i.e., publishers/blog owners) write one blog each. For simplicity, assume every blog is represented as a point along the $[0,1]$ segment, e.g., the mix between news and articles in the blog. Thus, each player selects a point in that segment. A set of users, where user preferences are uniformly distributed along the same segment, approach the recommendation system to have a recommendation for one blog. The social cost of the users is defined as the sum of absolute distances between preferences and recommended contents. As Hotelling demonstrates, the only pure Nash equilibrium (PNE hereinafter) is obtained when both players select the center of the segment, producing identical content. The social cost of the PNE profile is far from optimum, and users effectively suffer from strategic behavior of the players.

The difference between this example and the one considered by Hotelling in his seminal work is that in modern applications, users are not explicitly exposed to players' content, but rather the recommendation system serves as a \textit{mediator} for matching users with content. However, despite their commercial success, current recommendation systems basically (very efficiently) implement the above principle: users are matched with the ``closest'' content. A prominent example is the vector space model for document retrieval in response to a user query \cite{salton1975vector}. Indeed, the recommendation system is just a tool for having the users ``choose'' the offered content most similar to their taste.

\subsection{The connection to facility location games}
Facility location games \cite{anderson1992discrete,brenner2010location,fournier2014hotelling}, which are extensively studied in economics, operations research and computer science, portray recommendation systems with strategic content providers incredibly well, due to the nature of recommendation systems described above. This is true since in many (and perhaps even most) recommendation systems, the computational task of matching users with content is carried out by modeling both into a joint metric space. In the vast majority of facility location games literature, it is assumed that users are attracted to their nearest facility\footnote{Some exceptions appear in Subsection \ref{subsec:related}.}. In the example above, this amounts to each user selecting the closest blog to her preference.

Given a strategy profile, i.e., the selection of each player for a particular mix of news and articles in her blog, the best (in terms of social cost) recommendation system matches each user with the nearest blog to her preference. However, the way the recommendation system operates affects the strategic behavior of players, and more precisely redefines player payoffs. As a result, it may induce a game with a different set of PNE, potentially with improved social welfare (i.e., the social cost with negative sign). This calls for the design of a mediator that takes into account strategic behavior of players, aiming at achieving low social cost in equilibrium.

\subsection{Our agenda: High welfare low intervention mediator design}
Our conceptual contribution is the introduction of a mediator to facility location models, in the pursuit of better outcomes for recommendation systems with strategic content providers. We aim at designing mediators that a) induce a game with high social welfare in equilibrium, and b) intervene as little as possible.

To better describe the latter, let \nim{} be the No Intervention Mediator, namely the default mediator that displays each user with its nearest content. A question in this regard is whether a mediator that intervenes with the process can do better than \nim{}, in terms of social welfare. Clearly, one can design a mediator that \textit{dictates} each player which content to select, thereby forcing the players to any desired strategy profile. Consider the \dict{} mediator that operates as follows: it commands each player to play a particular strategy, and if she disobeys \dict{} directs no user to her content. Indeed, by adopting \dict{} the mediator designer is guaranteed to have any specific strategy profile as the unique PNE of the induced recommendation game. However, this is a huge and potential harmful intervention, and doing so may lead to poor performance in case that specific profile is not materialized. Crucially, this can happen even when the players are rational, but some of the assumptions the mediator holds regarding the players are violated, e.g., players have constraints on their strategy space, slightly different payoff functions, they are able to form coalitions and so on.

A major aspect of our approach is the quantification of {\em intervention cost}. Given the above intuition, we define the intervention cost as the maximal increase in user social cost, which might be caused if the players use  arbitrary strategies (i.e., select arbitrary contents). Formally, let $\mM$ denote a mediator, $S$ be the set of all strategy profiles, $\bl s \in S$ be a strategy profile, and let $\SC(\mM, \bl s)$ denote the social cost under $\mM$ and $\bl s$. The intervention cost of a mediator $\mM$ is defined as $\sup_{\bl s\in S} \{ \SC(\mM, \bl s)-\SC(\nim, \bl s) \}$.
Evidently, \nim{} has a zero intervention cost and high social cost, and serves as a benchmark for low intervention cost. On the other hand, \dict{} has high intervention cost but low social cost (when dictating the socially optimal strategy profile), and it serves as a benchmark for low social cost. The main challenge in our agenda is the design of a mediator with low intervention cost \textit{and} low social cost in equilibrium.

\subsection{Our results}

This paper presents a case study in high welfare low intervention mediator design. The mathematical model we adopt in service of the above is based on pure location Hotelling games \cite{Hotelling}, with the $[0,1]$ segment. Section \ref{sec:Mathematical Formulation} presents a formal mathematical model for the setting, stating former widely-known results for pure Hotelling games (or equivalently, recommendation games with \nim{} as a mediator) on the segment with uniform user distribution. In addition, in Subsection \ref{subsec:model-ic} we define the intervention cost, and bound the intervention cost of \dict{}.

The main technical results of the paper appear in Section \ref{sec:lime}. We introduce the Limited Intervention Mediator, \lime{}. We provide the intuition behind it, as well as a thorough example to illustrate how it operates. We then prove that the game induced under \lime{} possesses a unique PNE, which attains the minimal social cost. Then, we establish upper and lower bounds on its intervention cost, which are almost tight. 
We show that not only is its intervention cost lower than that of \dict{} for every $n$, it is also $O(\frac{1}{n})$. Since \dict{} and \lime{} have the same social cost under (the unique) PNE profile, the results given in this section provide a highly encouraging answer to the challenge given above.

We subsequently study three natural extensions of the basic setting. 
First, we discuss neutral mediators. Informally, a mediator is neutral if when two players swap their strategies, the mediator's direction also swaps. We show an impossibility result for neutral mediators that aim at optimizing the social cost in the two-player case. Second, we design a mediator with a configurable intervention cost. This mediator is important where e.g. one seeks to design a mediator that minimizes social cost but is penalized for intervention (similarly to regularization in machine learning models). We show that for some cases of $n$, it induces Pareto optimal solutions for the setting. Third, we consider non-uniform user distributions. We propose the General Limited Intervention Mediator, which depends on the distribution quantiles solely. We then prove that it always induces a game with a unique PNE. This becomes even more striking when one recalls that PNE may not exist in facility location games (with no mediator) \cite{osborne1993candidate}, and are generally hard to characterize (see e.g. \cite{shilony1981hotelling,ewerhart2015mixed}). We bound its intervention cost (for the uniform distribution), and show that its intervention cost is lower than that of \dict{}. 

From an algorithmic perspective, we deal with one-dimensional problems. This may sound disappointing, but recall that we intentionally focus on relatively simple, structured problems, and that this domain is extremely well-studied in the facility location literature. Due to space constraints, the proofs are deferred to the appendix.

\subsection{Related work}
\label{subsec:related}
The notion of a mediator in a game-theoretic setting was first proposed by Aumann in his seminal work on Correlated Equilibrium \cite{aumann1974subjectivity}. In the setting Aumann considers, a mediator may send signals to the players, where the signal is designed to drive the players to achieve better payoffs. The work of \citeauthor{shoham1995social} \shortcite{shoham1995social} grants stronger capabilities to the mediator, by setting constraints on participants' behavior. A different type of mediator, considered by \citeauthor{monderer2004k} \shortcite{monderer2004k}, is allowed to change player payoffs. As in our work, the mediator is designed to ``force'' players into playing some desired subset of their strategy sets with minimal interference with the payoff functions. Unlike the above work, where the mediator intervenes with the outcome of the game to improve the social welfare of the (strategic) players, in the model we study the mediator intervenes in order to improve the social welfare of the (non-strategic) users, as we describe next.

The work of \citeauthor{basat2017game} \shortcite{basat2017game} introduces a formal model of adversarial information retrieval. Each author (a player) has several strategies, where each strategy corresponds to a document she can publish. Each pair of document-query (where a document is a player's strategy and a query represents a user) has a score, termed ``quality'', which measures the extent to which the document is relevant to the query. The mediator (i.e., search engine) provides each user the document with the highest quality relative to her query, among those selected under the corresponding strategy profile. In their model, every author seeks to maximize a function that takes into account the number of queries for which her document is the most relevant (representing users directed to that document by the mediator) along with the document-query quality.

As \citeauthor{basat2017game} show, displaying each user with the document with the highest quality (i.e., no intervention) may lead to deteriorated content and low user utility. They argue against no intervention, and claim that introducing randomization into the mediator leads to an overall user utility that transcends that of no intervention at all. Nevertheless, they neither a) provide a systematic approach for doing so under PNE profiles; nor b) show that a PNE always exists. Recently, \citeauthor{ben2018Convergence} \shortcite{ben2018Convergence} claim in favor of no intervention in a similar model, and show that no intervention policy leads to convergence of any better-response dynamics; thus, a PNE is guaranteed to exist. Importantly, \citeauthor{basat2017game} \shortcite{basat2017game}  as well as \citeauthor{ben2018Convergence} \shortcite{ben2018Convergence} do not consider the task of mediator design for better outcomes.

The work of \citeauthor{ben2018game} \shortcite{ben2018game} is the first to consider mediator design in recommendation systems with strategic content providers, in a mathematical model that extends that of \citeauthor{basat2017game} \shortcite{basat2017game} and \citeauthor{ben2018Convergence} \shortcite{ben2018Convergence}. They highlight several fairness-related properties that a mediator should arguably satisfy, along with the requirement of PNE existence.  They show that the no-intervention mediator satisfies the fairness-related properties, but may lead to a game without PNEs. Then, they design a mediator that is based on the Shapley value \cite{shapley1952value}, prove it satisfies the fairness properties and the game it induces always possesses a PNE. On the other hand,  \citeauthor{ben2018game} \shortcite{ben2018game} take an axiomatic approach, and do not address user utility as one of the axioms. Moreover, they show that the user utility under their proposed mediator can be arbitrarily low. In contrast, this paper stems from a user utility optimization point of view, and so are the solution concepts it proposes.


Several extensions of pure location Hotelling games have been suggested recently, assuming non-deterministic user behavior \cite{feldman2016variations,shen2017hotelling,ben2017shapley}. More precisely, users do not select their nearest facility, but rather have a \textit{reaction function}, mapping every strategy profile to a distribution over player indices, possibly skipping all options. The work of \citeauthor{ben2017shapley} \cite{ben2017shapley} associates users with a reaction function motivated by decision theory literature, and shows that the induced facility location game always possesses a PNE, regardless of the metric space in which users are embedded.  \citeauthor{shen2017hotelling} \shortcite{shen2017hotelling} show a result of similar flavor for a different reaction function. Interestingly, every reaction function can be implemented as a mediator. Unlike this line of research, in this paper (as in the information retrieval setting) the mediator is required to direct users to facilities w.p. 1, i.e., skipping all facilities is not an option. We elaborate on the latter in Section \ref{sec:discussion}.

Finally, a different line of research in the algorithmic game theory literature \cite{nisan2007algorithmic} is the study of facility location, in the context of approximate mechanism design without money \cite{ProcTenn09}. That literature deals with the case where only one entity dictates the place of a facility (or several facilities), while user preferences are their private information and are strategically reported (see,e.g., \cite{schummer2007mechanism}). In contrast, the setting we consider is a full information setting.

\section{Mathematical Formulation}
\label{sec:Mathematical Formulation}
\begin{table*}
  \centering
  \begin{tabular}{ccccccc}
       & \multicolumn{3}{c}{Social Cost (former results)} &  & \multicolumn{2}{c}{Intervention Cost (Sections \ref{sec:Mathematical Formulation} and 
  \ref{sec:lime})} \\
   \cmidrule(l){2-4} \cmidrule(l){6-7}
   Number of players & Optimal      & \nim{}, best PNE &\nim{}, worst PNE &  &  $\IC_n({\dict{}})$ &$\IC_n({\lime{}})$ \\
\midrule
   	    $n=2$      & $\nicefrac{1}{8}$   & $\nicefrac{1}{4}$         & $\nicefrac{1}{4}$ &  & $\nicefrac{1}{4}$ & $\nicefrac{3}{16}$     \\
   	    $n=3$      & $\nicefrac{1}{12}$  & \multicolumn{2}{c}{no PNE exists}&  &{\footnotesize$\geq 0.278$} & {\footnotesize$\in(0.236,0.278)$} \\
   	        $n\geq 4$  & $\nicefrac{1}{4n}$  & $\nicefrac{1}{4(n-2)}$    & $\nicefrac{1}{4\ceil{\nicefrac{n}{2}}}$ & & {\footnotesize $\geq \nicefrac{1}{2}-\nicefrac{3}{4n}+\nicefrac{1}{4n^2}$  }    &{\footnotesize $\in\left(\frac{2n-4}{n^2},\frac{2n-3.5}{n^2}\right)$} \\
    \bottomrule
  \end{tabular}
    \caption{A summary of former results known for pure location Hotelling games, and the results of this paper. The social cost is given for three scenarios: the optimal social cost, for the profile $\bl o^n$ (see Subsection \ref{subsec:no-intervention}); the best PNE in terms of social cost under \nim{}; and the worst PNE in terms of social cost under \nim{}. Under both \dict{} and \lime{}, the induced game possesses a unique PNE, with social cost of $\nicefrac{1}{4n}$, namely the optimal social cost. For these mediators, the table reports the bounds on the intervention cost (see Subsection \ref{subsec:model-ic}) as obtained in Sections \ref{sec:Mathematical Formulation} and 
  \ref{sec:lime}. \label{table:sc+ic}}
\end{table*}

In this section we introduce our formal mathematical model.
The non-cooperative game we consider is formally defined as follows:
\begin{enumerate}
\item A continuous density function $g$ over the unit interval $[0,1]$, representing user distribution.
\item A set of players $[n]=\{1,2,\dots,n\}$, where the pure\footnote{In this paper we discuss pure strategy profiles only.} strategy set of player $i\in[n]$ is $S_i=[0,1]$. It will sometimes be convenient to say that player $i$ has/owns a \textit{facility} in $s_i$ if the strategy she adopts is $s_i$. The set of all strategy profiles is denoted by $S\defeq \prod_{i=1}^n S_i$.
\item A mediator $\mM$ is a mapping from the set of strategy profiles and users to the set of distributions over player indices, i.e.,
$ \mM: S \times [0,1] \rightarrow \Delta([n]).$
Given a pure strategy profile $\bl s=(s_1,\dots,s_n)\in S$, a user $t\in [0,1]$ and a player $i\in [n]$, we denote by $\mM(\bl s,t)_i$ the probability that $\mM$ will send user $t$ to player $i$ under the strategy profile $\bl s$.
\item Given a pure strategy profile $\bl s\in S$, the payoff of each player $i$ is the proportion of users $\mM$ directs to her facility. Namely, $\pi_i(\bl s)= \int_{0}^1 \mM(\bl s,t)_i \cdot g(t)dt. $
Each player aims at maximizing her payoff.
\end{enumerate}
According to these assumptions, every game is fully described by the number of players, the mediator, and the user distribution function, .i.e., $G\left(n,\mM,g\right)$. In addition, unless stated otherwise, $g$ is the uniform distribution; hence, for convenience, we shall use the notation $G\left(n,\mM\right)$ to describe the $n$-player game induced by selecting $\mM$ as a mediator.

\subsubsection{Equilibrium} For a vector $\bl v = (v_1,\dots v_n)$, we denote by $\bl v_{-i}\defeq(v_1,\dots v_{i-1},v_{i+1},\dots v_n)$ the vector that does not contain the $i$-th coordinate of $\bl v$.

A strategy profile $\bl s \in S$ is called a {\em pure Nash equilibrium} (PNE hereinafter) if for every $ i \in [n]$ and every $s_i' \in S_i$ it holds that $\pi_i(s_i',\bl s_{-i}) \leq \pi_i(\bl s)$.
We say that player $i$ has a {\em beneficial deviation} under a strategy profile $\bl s=\left(s_i,\bl s_{-i}\right)$ if there exists $s_i'\in S_i$ such that $\pi_i(s_i',\bl s_{-i})  > \pi_i(\bl s)$.

\subsection{Social cost}
We remark that any mediator always directs a user to one of the facilities selected by the players; thus, the player payoffs sum to one under every strategy profile $\bl s$, i.e., $\sum_{i=1}^n \pi_i(\bl s)=1$. Consequently, we view the social cost of the users as the public welfare. In the blogging example given above, we consider the distance between a user's preferences and the actual attribute of the blog as the extent to which she is satisfied with that blog. Clearly, this perspective is identical to the one considered by \cite{Hotelling}, albeit the motivation of \cite{Hotelling} is the physical world. Motivated by transportation cost, Hotelling \cite{Hotelling} defines the cost a customer (i.e., a user) suffers as the distance he has to travel to reach his nearest facility. This notion is naturally extended to the mediated setting.
\begin{definition}[Social cost]
Given a strategy profile $\bl s$ and a mediator $\mM$, the social cost is the sum of distances users must travel to reach their recommended facility. Formally, $\SC:S \rightarrow \mathbb{R}_{\geq 0}$ is defined by
\[
{\footnotesize
\SC(\mM,\bl s) = \int_0^1 \sum_{i=1}^n\mM(\bl s,t)_i \cdot\abs{s_i-t} g(t) dt.
}
\]
\end{definition}
\subsection{No intervention mediator}
\label{subsec:no-intervention}
We denote by \nim{} the No-Intervention Mediator. Namely, \nim{} sends every user to his nearest facility, breaking ties uniformly. Formally,
\[
{\footnotesize
\nim(\bl s,t)_i =\frac{\ind_{\abs{s_i -t}\leq \min_{i'} \abs{s_{i'}-t}}}{\sum_{j=1}^n \ind_{\abs{s_j -t}\leq \min_{i'} \abs{s_{i'}-t}} }
}.
\]
By implementing \nim{} we recover the non-mediated version of the setting, where under every strategy profile each user is attracted to his nearest facility, as is in pure location Hotelling games. Given $n$, we denote by $\bl o^n=(o^n_1,\dots,o^n_n)$, where $o^n_i \defeq\frac{2i-1}{2n}$, the sequence of {\em $n$-socially optimal locations}. These are optimal in the following sense:
\begin{claim}[e.g. \cite{Ben-Porat18}]
\label{koneplayer_opt}
It holds that $ \SC(\nim, \bl o^n) = \inf_{\bl s \in S} \SC(\nim, \bl s)=\frac{1}{4n}.$
\end{claim}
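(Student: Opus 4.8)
The plan is to prove the two equalities by sandwiching: exhibit that $\bl o^n$ attains social cost $\frac{1}{4n}$, and separately show that \emph{no} profile can do better, so that $\frac{1}{4n}$ is simultaneously the value at $\bl o^n$ and the infimum. The first reduction I would make is to observe that under \nim{} with the uniform density $g\equiv 1$, the per-user integrand $\sum_{i}\nim(\bl s,t)_i\abs{s_i-t}$ equals exactly $\min_i\abs{s_i-t}$ for every $t$ outside the finite tie set (which has Lebesgue measure zero and hence does not affect the integral, regardless of how ties are broken). Thus for \emph{every} profile $\SC(\nim,\bl s)=\int_0^1 \min_i\abs{s_i-t}\,dt$, and the statement becomes the classical continuous $n$-median problem on $[0,1]$.

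For achievability I would simply substitute $\bl o^n$. Since $o^n_i=\frac{2i-1}{2n}$ is the midpoint of $[\frac{i-1}{n},\frac{i}{n}]$ and is the nearest facility for every $t$ in that interval, the contribution of each interval is $\int_{(i-1)/n}^{i/n}\abs{t-o^n_i}\,dt=\frac{1}{4n^2}$, and summing over the $n$ intervals gives $\SC(\nim,\bl o^n)=n\cdot\frac{1}{4n^2}=\frac{1}{4n}$.

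For the lower bound I would exploit the Voronoi structure. Sort the facilities as $s_{(1)}\le\cdots\le s_{(n)}$ and set the boundary points $b_0=0$, $b_n=1$, and $b_i=\frac{s_{(i)}+s_{(i+1)}}{2}$, so that under nearest-facility assignment the cell of $s_{(i)}$ is exactly $[b_{i-1},b_i]$. Writing $L_i=b_i-b_{i-1}$ (so $\sum_i L_i=1$), I get
\[
\SC(\nim,\bl s)=\sum_{i=1}^n\int_{b_{i-1}}^{b_i}\abs{t-s_{(i)}}\,dt\ \ge\ \sum_{i=1}^n\frac{L_i^2}{4},
\]
where the inequality is a one-line convexity computation: for a point lying anywhere in an interval of length $L_i$, the integral of distance to that point is $\frac{(s_{(i)}-b_{i-1})^2+(b_i-s_{(i)})^2}{2}$, minimized with value $\frac{L_i^2}{4}$ when the point sits at the interval's midpoint. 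Finally Cauchy–Schwarz (equivalently Jensen) gives $\sum_i L_i^2\ge\frac{1}{n}\bigl(\sum_i L_i\bigr)^2=\frac{1}{n}$, whence $\SC(\nim,\bl s)\ge\frac{1}{4n}$ for every $\bl s$. Combined with achievability this proves the claim.

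I expect the main obstacle to be the bookkeeping in the lower bound rather than any deep idea: one must verify that the sorted facilities' nearest-neighbor cells are genuinely the intervals $[b_{i-1},b_i]$ (including the boundary facilities, which serve all mass out to $0$ and $1$, and the degenerate case of coinciding facilities, where some $L_i$ vanish but $\sum_i L_i=1$ and the decomposition still partition $[0,1]$ without over-counting). It is also worth confirming that the two inequalities are tight simultaneously exactly at the equally spaced midpoints $\bl o^n$ — equality in the center step forces $s_{(i)}=\frac{b_{i-1}+b_i}{2}$ and equality in Cauchy–Schwarz forces all $L_i=\frac1n$ — which both identifies $\bl o^n$ as the unique minimizer (up to relabeling) and certifies that the bound $\frac{1}{4n}$ cannot be improved.
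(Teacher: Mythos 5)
Your proof is correct, but there is nothing in the paper to compare it against: the paper states this claim as a known result imported from prior work (the citation to Ben{-}Porat and Tennenholtz, ``Multi-unit facility location games'') and gives no proof of it, in the body or in the appendix. Your argument is a clean, self-contained derivation: the reduction $\SC(\nim,\bl s)=\int_0^1\min_i\abs{s_i-t}\,dt$ is valid (in fact it holds pointwise for \emph{every} $t$, not merely off the tie set, since \nim{} assigns positive probability only to facilities at minimal distance, so your measure-zero caveat is unnecessary caution rather than a needed fix); the Voronoi decomposition with $b_i=\frac{s_{(i)}+s_{(i+1)}}{2}$ is the standard and correct way to split the integral, and it degrades gracefully when facilities coincide; the per-cell bound $\int_{b_{i-1}}^{b_i}\abs{t-s_{(i)}}\,dt=\frac{(s_{(i)}-b_{i-1})^2+(b_i-s_{(i)})^2}{2}\ge\frac{L_i^2}{4}$ and the Cauchy--Schwarz step $\sum_i L_i^2\ge\frac{1}{n}$ are both right, giving $\SC(\nim,\bl s)\ge\frac{1}{4n}$ with equality exactly at $\bl o^n$. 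A nice byproduct of your equality analysis is that it also proves the paper's follow-up remark --- that $\bl o^n$ is the \emph{unique} socially optimal profile up to renaming the players --- which the paper likewise asserts without proof.
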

It turns out that this is the unique socially optimal profile, up to renaming the players. Moreover, for any fixed strategy profile, it is apparent that employing \nim{} results in the lowest possible social cost w.r.t. that particular profile. Namely,
\begin{claim}
For every strategy profile $\bl s$ and every mediator $\mM$, it holds that $\SC(\nim,\bl s) \leq \SC(\mM,\bl s)$. 
\end{claim}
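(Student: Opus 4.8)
The plan is to establish the inequality pointwise in the user location $t$ and then integrate, exploiting the fact that \nim{} minimizes the per-user expected distance among all possible distributions over facilities. Fix a user $t \in [0,1]$ and an arbitrary mediator $\mM$. Since $\mM(\bl s,t)$ is a distribution over $[n]$, i.e.\ an element of $\Delta([n])$, the quantity $\sum_{i=1}^n \mM(\bl s,t)_i \abs{s_i-t}$ is a convex combination of the nonnegative distances $\abs{s_i-t}$. Each such distance satisfies $\abs{s_i-t}\geq \min_{i'}\abs{s_{i'}-t}$, and the weights $\mM(\bl s,t)_i$ sum to one, so this convex combination is bounded below by $\min_{i'}\abs{s_{i'}-t}$.

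The second step is to observe that \nim{} attains this lower bound. By its definition, $\nim(\bl s,t)$ places all its mass on the set of nearest facilities, namely those indices $i$ with $\abs{s_i-t}=\min_{i'}\abs{s_{i'}-t}$, splitting uniformly among ties. Consequently $\sum_{i=1}^n \nim(\bl s,t)_i \abs{s_i-t}=\min_{i'}\abs{s_{i'}-t}$, and combining this with the previous bound gives, for every $t$,
\[
\sum_{i=1}^n \nim(\bl s,t)_i \abs{s_i-t} \;=\; \min_{i'}\abs{s_{i'}-t} \;\leq\; \sum_{i=1}^n \mM(\bl s,t)_i \abs{s_i-t}.
\]
The final step is to multiply both sides by the nonnegative weight $g(t)$ and integrate over $[0,1]$; since integration against a nonnegative measure preserves the inequality, we obtain $\SC(\nim,\bl s)\leq \SC(\mM,\bl s)$, as desired.

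There is essentially no obstacle here: the argument is a one-line convexity (or, equivalently, ``the minimum of a set is below any of its averages'') observation applied pointwise. The only point requiring a shred of care is the tie-breaking in the definition of \nim{}, encoded in its denominator; but because all tied facilities are equidistant from $t$ at the common value $\min_{i'}\abs{s_{i'}-t}$, the uniform split among them still yields exactly $\min_{i'}\abs{s_{i'}-t}$, so the equality in the displayed line holds regardless of how ties are resolved and the claim is unaffected.
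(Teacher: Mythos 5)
Your proof is correct. The paper itself gives no proof of this claim---it is stated as ``apparent''---and your pointwise argument (under any distribution over facilities the expected distance is a convex combination bounded below by $\min_{i'}\abs{s_{i'}-t}$, which \nim{} attains exactly for every $t$ regardless of tie-breaking, after which integration against $g\geq 0$ preserves the inequality) is precisely the canonical justification one would supply.
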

However, as we show later, the set of PNE depends on the mediator; hence, a PNE under $\nim$ may not be in equilibrium under another mediator, and vice versa. Since we care about the social cost in equilibrium, this turns out to be crucial. Pure location Hotelling games (or equivalently $G(n,\nim)$ for any $n$) have been studied extensively in the past decade, and its equilibrium structure is widely known (see, e.g. \cite{eaton1975principle,fournier2014hotelling,Ben-Porat18}). For completeness, we state the following well-known results for $G(n,\nim)$:\\
$\bullet$ The only PNE for $n=2$  is $(\nicefrac{1}{2},\nicefrac{1}{2})$;\\
$\bullet$ There is no PNE for $n=3$, although a mixed NE exists \cite{shaked1982existence};\\
$\bullet$ For $n\in\{4,5\}$ a unique PNE exists (up to renaming the players). If $n\geq 6$, there are infinite PNEs.\\
For an elaborated discussion of these results, the reader is referred to \cite{eaton1975principle}. Table \ref{table:sc+ic} summarizes the social cost for \nim{}. Notice the social cost under the worst PNE, which is a factor of two of the optimal feasible social cost.

\subsection{Intervention cost}
\label{subsec:model-ic}
Indeed, a mediator that implements the $n$-socially optimal locations as a PNE can greatly decrease the social cost. By intervening with the market and using a basic punishing mechanism, one can drive the players to play any arbitrary profile, by intervention that makes that profile the only PNE.

We denote by $\dict$ the mediator\footnote{One can think of other dictatorship mediators as well, by varying the punishment in case of disobedience.} that dictates the strategy profile $\bl o^n$, the $n$-socially optimal locations. Namely, $\dict$ operates as follows: given a strategy profile $\bl s$, let $A=\{ i: s_i=o_i^n \}$. If $A$ is not empty, $\dict(\bl s)$ directs every user $t$ to its nearest facility from those of the players in $A$, i.e., $\dict(\bl s,t)_i =\ind_{i\in A} \cdot \nim(\bl s \cap A,t)_i$. Otherwise, if $A$ is empty, it directs every user to a player selected at random. It is apparent that
\begin{claim}
Consider $G(n,\dict{})$ for any $n\geq 2$. The unique PNE is $\bl o^n$.
\end{claim}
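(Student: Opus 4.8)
The plan is to show two things about the game $G(n,\dict)$: first, that the profile $\bl o^n$ is a PNE, and second, that it is the unique PNE. For the first part, I would observe that under $\bl o^n$ every player $i$ satisfies $s_i = o_i^n$, so the set $A = \{i : s_i = o_i^n\}$ equals $[n]$. Hence $\dict$ behaves exactly like $\nim$ on the full profile, directing each user to her nearest facility, and each player receives a strictly positive share of users (since the $o_i^n$ are distinct). I would then argue that no player has a beneficial deviation: if player $i$ moves to any $s_i' \ne o_i^n$, she is removed from $A$, so $\dict$ directs her zero users, giving payoff $0$. Since her payoff at $\bl o^n$ is strictly positive, the deviation is not beneficial. (The one point to check carefully is that $\pi_i(\bl o^n) > 0$ for every $i$; this follows because the optimal locations are interior and distinct, so each player is the unique nearest facility on a subinterval of positive length.)

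For uniqueness, I would take any PNE $\bl s$ and show $\bl s = \bl o^n$. The key structural fact from the definition of $\dict$ is the following dichotomy for each player $i$ in any profile: if $i \notin A$ she earns $0$, whereas by unilaterally switching to $o_i^n$ she joins $A$ and, as the nearest facility on some positive-length neighborhood of $o_i^n$, earns a strictly positive payoff. Therefore in any PNE every player must already be in $A$, i.e. $s_i = o_i^n$ for all $i$; otherwise the deviation to $o_i^n$ would be beneficial, contradicting the equilibrium condition $\pi_i(s_i', \bl s_{-i}) \le \pi_i(\bl s)$. This forces $\bl s = \bl o^n$ and establishes uniqueness.

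The main obstacle, and the step I would handle most carefully, is verifying the claim that switching into $A$ guarantees a strictly positive payoff, regardless of what the other players are doing. This is where the interior, pairwise-distinct structure of $\bl o^n = \left(\tfrac{1}{2n}, \tfrac{3}{2n}, \dots, \tfrac{2n-1}{2n}\right)$ matters: when player $i$ plays $o_i^n$, there is a neighborhood of $o_i^n$ on which $o_i^n$ is closer to $t$ than any other optimal location $o_j^n$ (since the optimal points are separated by gaps of width $\tfrac{1}{n}$), and this neighborhood has positive measure under the uniform density. Because $A$ consists only of players at their prescribed optimal locations, player $i$'s competitors within $A$ sit exactly at these separated points, so the integral defining $\pi_i$ over that neighborhood is strictly positive. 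Once this positivity is secured, the punishing structure of $\dict$ does the rest, and both existence and uniqueness follow immediately.
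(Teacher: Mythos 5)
Your overall strategy---verify $\bl o^n$ is a PNE via the punishment structure, then show every other profile admits a beneficial deviation to one's prescribed location---is the right one; the paper itself states this claim as apparent and gives no proof, so the only thing to check your argument against is the definition of $\dict$. Your existence argument is correct: after a unilateral deviation from $\bl o^n$, the set $A$ still contains the other $n-1\geq 1$ players, so the deviator receives zero users, which is strictly worse than her equilibrium payoff of $\nicefrac{1}{n}$.

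The uniqueness argument, however, rests on a mis-stated dichotomy: ``if $i \notin A$ she earns $0$'' is false when $A = \emptyset$. The definition of $\dict$ has an explicit second branch that you ignore: when \emph{no} player is at her prescribed location, every user is directed to a player chosen uniformly at random, so every player earns $\nicefrac{1}{n}$, not $0$. As written, your argument does not rule out profiles with $A=\emptyset$, because for those profiles you cannot invoke ``she earns $0$ but could earn something positive.'' The fix is immediate and should be stated: if $A=\emptyset$ and player $i$ deviates to $o_i^n$, then under the new profile she is the \emph{sole} member of $A$, so $\dict$ directs all users to her, giving payoff $1 > \nicefrac{1}{n}$; hence such profiles are not PNEs either. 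With that case added, your proof is complete. (The point you did handle carefully is indeed the one that needs care in the remaining case $A\neq\emptyset$, $i\notin A$: after the deviation, the only facilities competing for users are those of players in $A$, which sit at points of $\bl o^n$ at distance at least $\nicefrac{1}{n}$ from $o_i^n$, while players outside $A$ receive no users wherever they stand---so the deviator's payoff is strictly positive, as you argue.)
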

However, sometimes the strategy profile being materialized is not the equilibrium profile. As elaborated above, a mediator intervening with the system to decrease social cost may hence have implicit negative effects. It is therefore highly desired to design a mediator that not only drives the players to a ``good'' equilibrium (in terms of social cost), but also does not intervene ``that much''. To quantify the extent to which a mediator intervenes with the natural market, we introduce the following measure.
\begin{definition}[intervention cost]
The intervention cost of a mediator $\mM$ is the maximum difference between the social cost of $\mM$ and that of \nim{} (i.e., the No-Intervention Mediator), when the maximum is taken over all strategy profiles. Formally,
\[
{\footnotesize
\IC_n(\mM)\defeq\sup_{\bl s=(s_1,\dots s_n) \in S} \left\{ \SC(\mM,\bl s)-\SC(\nim,\bl s) \right\},
}
\]
where the subscript $n$ emphasizes the dependency on the number of players.
\end{definition}
Indeed, the intervention cost captures the measure of intervention by the mediator. By definition of \nim{}, it has an intervention cost of zero, i.e., $\IC_n(\nim)=0$ for every $n$. In addition, the intervention cost as defined above also demonstrates the great amount of intervention employed by \dict{}. 

To illustrate it, consider the consequences of a strategy profile other than $\bl o^n$ being materialized under \dict{}. Take for example the two-player game, $G(2,\dict{})$, and the strategy profile $\bl s =(s_1,s_2)=(0,1)$. This profile represents the case where the offered contents are varied to the extreme. Both players defy their commands; thus, \dict{} sends every user to a facility selected uniformly at random, and every user travels a distance of $\nicefrac{1}{2}$ in expectation. Summing over all users we get, $\SC(\dict,\bl s)=\nicefrac{1}{2}$. The social cost under no mediation is $\SC(\nim,\bl s)=\nicefrac{1}{4}$; hence, the intervention cost of \dict{} is lower bounded by $\SC(\dict s)-\SC(\nim,\bl s) =\nicefrac{1}{4}$. As we show in the appendix, for $n=2$ this is not only a lower bound but rather the actual intervention cost of \dict{}.

In fact, this profile can be extended to any $n$-player game, showing that $\IC_n(\dict)\geq \nicefrac{1}{4}$ for any $n$.  However, using a different construction we show a much tighter bound that depends on $n$. More precisely,
\begin{lemma}
\label{lemma:dict high intervention cost}
For any $n\geq 3$, it holds that $\IC_n(\dict) \geq \frac{1}{2}-\frac{3}{4n}+\frac{1}{4n^2}$.
\end{lemma}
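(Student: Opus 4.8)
The plan is to exhibit a single strategy profile $\bl s$ for which the gap $\SC(\dict,\bl s)-\SC(\nim,\bl s)$ already equals the claimed value; since $\IC_n(\dict)$ is defined as a supremum over all profiles, one such witness suffices.

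The profile I would use makes \emph{exactly one} player obey \dict{} while keeping the multiset of chosen locations equal to $\bl o^n$. Concretely, let player $1$ play her dictated location, $s_1=o^n_1=\frac{1}{2n}$, and let players $2,\dots,n$ occupy the remaining optimal locations $\{o^n_2,\dots,o^n_n\}$ via a \emph{derangement} of the index set $\{2,\dots,n\}$ — for instance the cyclic shift $s_i=o^n_{i+1}$ for $2\le i\le n-1$ and $s_n=o^n_2$. Since this permutation has no fixed point, $s_i\ne o^n_i$ for every $i\ge 2$, so the obedient set is the singleton $A=\{1\}$. This is exactly where the hypothesis $n\ge 3$ enters: a fixed-point-free permutation of the $n-1$ players $\{2,\dots,n\}$ exists iff $n-1\ge 2$.

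With $A=\{1\}$, \dict{} routes every user to the unique obedient facility, player $1$ at $\frac{1}{2n}$, so
\[
\SC(\dict,\bl s)=\int_0^1\Bigl|\tfrac{1}{2n}-t\Bigr|\,dt=\frac12-\frac{1}{2n}+\frac{1}{4n^2},
\]
a routine one-point integral. On the other hand, $\SC(\nim,\cdot)$ depends only on the multiset of facility locations, which here is precisely $\{o^n_1,\dots,o^n_n\}$; hence, by Claim~\ref{koneplayer_opt}, $\SC(\nim,\bl s)=\SC(\nim,\bl o^n)=\frac{1}{4n}$. Subtracting yields
\[
\SC(\dict,\bl s)-\SC(\nim,\bl s)=\frac12-\frac{1}{2n}+\frac{1}{4n^2}-\frac{1}{4n}=\frac12-\frac{3}{4n}+\frac{1}{4n^2},
\]
and the stated lower bound on $\IC_n(\dict)$ follows.

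The only genuinely creative step — and the main obstacle — is arriving at this profile. If one instead forces $A=\emptyset$, then $\SC(\dict,\bl s)$ is the \emph{average} distance $\frac1n\sum_i\int_0^1|s_i-t|\,dt$, and pushing it toward $\frac12$ requires piling the facilities at the endpoints, which simultaneously inflates $\SC(\nim,\bl s)$; the two demands conflict. Leaving $A=\{1\}$ at the near-endpoint location $o^n_1$ breaks this tension: $\SC(\dict,\bl s)$ becomes the distance to a \emph{single} near-extreme facility (close to $\frac12$) and is \emph{independent} of where the other $n-1$ facilities sit, which frees them to sit at the remaining optimal locations so that $\SC(\nim,\bl s)$ stays at its minimum $\frac1{4n}$. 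Once this decoupling (and the accompanying derangement requirement, which is why $n\ge 3$) is seen, the remainder is the short computation above.
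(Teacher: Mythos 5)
Your proof is correct, and it shares the paper's core witness idea: exactly one obedient player at $o^n_1$, so that \dict{} routes every user to the near-endpoint facility, while the remaining $n-1$ facilities sit at (essentially) the other optimal locations so that $\SC(\nim,\cdot)$ stays at $\frac{1}{4n}$. Where you differ is the mechanism for making players $2,\dots,n$ disobey, and the difference is worth noting. The paper perturbs them spatially, setting $s^\delta_i=o^n_i+\delta$, which yields the gap $\frac{1}{2}-\frac{3}{4n}+\frac{1}{4n^2}-O(\delta)$ and requires a continuity argument for $\SC(\nim,\cdot)$ plus the limit $\delta\to 0^+$; no single profile in that family attains the bound. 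You instead exploit the fact that \dict{}'s obedience test is \emph{identity-based} (it checks $s_i=o^n_i$) while $\SC(\nim,\cdot)$ is \emph{identity-blind} (it depends only on the multiset of locations): a derangement of players $2,\dots,n$ over the locations $\{o^n_2,\dots,o^n_n\}$ makes all of them disobey while keeping the location multiset exactly $\bl o^n$, so a single exact profile witnesses the bound, with no limit and no appeal to continuity — only Claim~\ref{koneplayer_opt} and a one-point integral. The trade-off is that $n\geq 3$ becomes structurally necessary in your argument (a derangement of $\{2,\dots,n\}$ needs $n-1\geq 2$), whereas in the paper's proof the hypothesis plays no role in the construction itself (the same perturbation applied to $n=2$ simply gives the weaker constant $\frac{3}{16}$). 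Both arguments are valid; yours is slightly cleaner in that it shows the claimed gap is actually achieved by a concrete profile rather than only approached in the limit.
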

Due to its high intervention cost (among other properties) using \dict{} as a mediator may not be the best solution. In the next section we devise a mediator that implements $\bl o^n$ as the unique  equilibrium, but unlike $\dict$ has a substantially low intervention cost.

\section{Limited Intervention Mediator (LIME)}
\label{sec:lime}

\begin{figure*}[t]
\centering
\includegraphics[trim={0 0cm 0 0cm},scale=1.0]{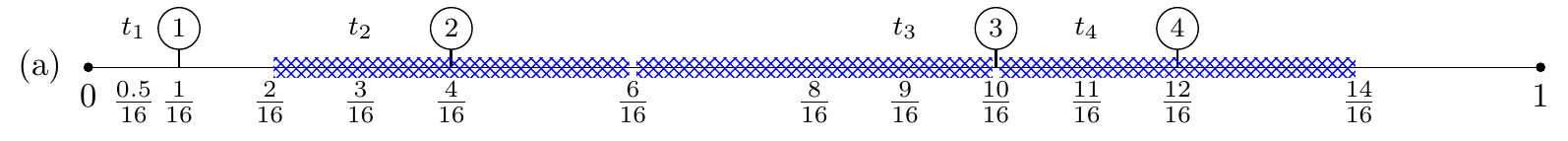}
\includegraphics[scale=1.0]{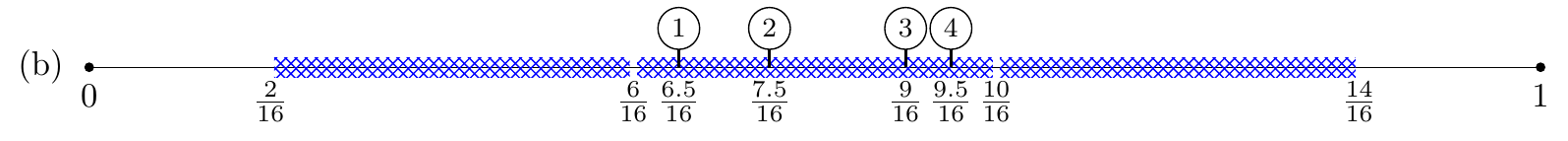}

\includegraphics[trim={0 0cm 0 0cm},scale=1.0]{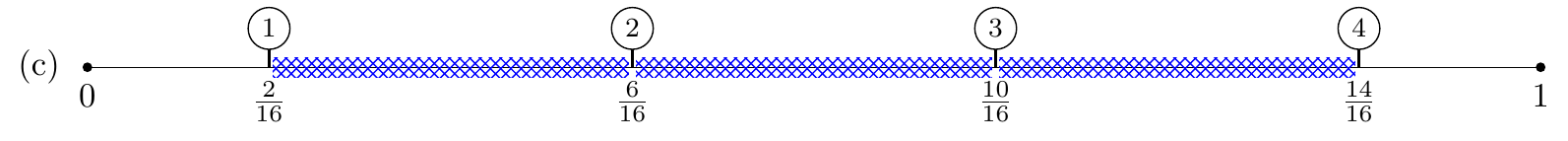}

\caption{Three strategy profiles for $G(4,\lime)$.\omer{general - the 6 should be 16!}
The blue (dotted) segments represent potentially intervened intervals (PIIs), and the numbered circle are the locations selected by the corresponding players. Sub-figure (a) visualizes Example \ref{example:lime}. \omer{I changed this one}Sub-figure (b) exemplifies a case where all players locate their facilities in the PII $(\nicefrac{6}{16},\nicefrac{10}{16})$. In this case, \lime{} directs every customer outside the PIIs to his nearest facility; every customer in $(\nicefrac{6}{16},\nicefrac{10}{6})$ is also directed to her nearest facility (Line \ref{alg:lime all in one} of Algorithm \ref{alg:lime}); and a customer inside $(\nicefrac{2}{16},\nicefrac{6}{16})$ or $(\nicefrac{10}{16},\nicefrac{14}{16})$ is directed to his nearest facility w.p. $1-\epsilon$, and with the remaining probability to a facility selected uniformly at random (Line \ref{alg:lime l or r}). The unique PNE of this game (due to Theorem \ref{thm: lime unique pne}) is demonstrated in Sub-figure (c). \label{fig:lime-example-1}}
\end{figure*}

\begin{algorithm}[t]
\caption{Limited Intervention Mediator \label{alg:lime}}
\DontPrintSemicolon
 \KwIn{ A strategy profile $\bl s$ and a user $t$}
 \KwOut{A location in $\bl s$}
  \uIf{ $\exists i\in [n-1]$ such that $t \in \left(\frac{2i-1}{2n},\frac{2i+1}{2n}  \right)$}{\vspace{0.1em} %
    $\bl l \gets \bl s \cap \left[0,\frac{2i-1}{2n}  \right], \bl r \gets \bl s \cap \left[\frac{2i+1}{2n},1  \right]$ \;
		\uIf{$\bl l \neq \emptyset$ and $\bl r \neq \emptyset$ \label{alg:lime two sides if}}
		{\KwRet $\nim(\bl l \cup \bl r, t)$\label{alg:lime l and r}  }
		\uElseIf{
		$\bl l \neq \emptyset$ or $\bl r \neq \emptyset$
		}    
		{
		 w.p. $1-\epsilon$ \KwRet $\nim(\bl l \cup \bl r, t)$, otherwise \KwRet $random(\bl s,t)$ \label{alg:lime l or r}
		}
		\uElse(\tcp*[f]{all facilities are inside $\left(\frac{2i-1}{2n},\frac{2i+1}{2n}  \right)$}){
		\KwRet $\nim(\bl s, t) \label{alg:lime all in one}$
		}
  }
  \uElse(\tcp*[f]{ $t$ is outside $\cup_{i\in[n-1]} \left(\frac{2i-1}{2n},\frac{2i+1}{2n}  \right)$}){
    \KwRet $\nim(\bl s, t)$ \label{alg:lime outside} \; 
  }\label{lime:outside}
  \SetKwProg{Fn}{Function}{:}{}
    \SetKwFunction{FMain}{$random(\bl s, t)$}
    \Fn{\FMain}{
		\KwRet an element from $\bl s$ u.a.r.
        \;
  }   
\end{algorithm}
In this section we take advantage of the game structure to devise a mediator that encourages the players to select the $n$-socially optimal locations $\bl o^n$, but intervenes substantially less than \dict{}. This is done by exploiting the equilibria structure under no mediation. According to the characterization of PNE profiles given in \cite{eaton1975principle}, under \nim{} a profile can be in equilibrium only if its peripheral facilities (i.e., leftmost and rightmost facilities) are paired. In addition, if beneficial deviations do not exist, the proportion of users coming from the left/right of any facility cannot be greater than the total proportion of users served by any other facility. 


Consider the Limited Intervention Mediator described in Algorithm \ref{alg:lime}, and referred to as \lime{} hereinafter for abbreviation. The intuition behind \lime{} is the following: between every two locations that belong to $\bl o^n$, we construct a \textit{potentially intervened interval} (PII). The users in every PII are not directed to a facility located in the same PII, but rather to the closest facility outside of it. In addition, if a PII does not have a facility located from its left or from its right (exclusive or), the users in that PII are sent to a random player w.p. $\epsilon$, where $\epsilon>0$ is an arbitrarily small constant.

To facilitate understanding of the \lime{} mediator, we provide the following example, which is further illustrated in Figure \ref{fig:lime-example-1}.a.
\begin{example}
\label{example:lime}
Consider $G(4,\lime{})$, and the strategy profile $\bl s=\left(\nicefrac{1}{16},\nicefrac{4}{16},\nicefrac{10}{16},\nicefrac{12}{16}\right)$.
User $t_1$, located in $\nicefrac{0.5}{16}$, is directed by \lime{} to player 1's facility in $\nicefrac{1}{16}$, since $t_1$ is outside the PIIs (Line \ref{alg:lime outside}). User $t_2$, who is located in $\nicefrac{3}{16}$, is inside the PII $\left(\nicefrac{2}{16},\nicefrac{6}{16}\right)$. Notice that this PII has a facility from its left and a facility from the right. Namely, according to the if condition in Line \ref{alg:lime two sides if}, $\bl r = \bl s \cap \left[\nicefrac{6}{16},1\right]\neq \emptyset$, and $\bl l = \bl s \cap \left[0,\nicefrac{2}{16}\right]\neq \emptyset$; thus, $t_2$ is directed to his nearest facility outside $\left(\nicefrac{2}{16},\nicefrac{6}{16}\right)$, which is $\nicefrac{1}{16}$ (Line \ref{alg:lime l and r}). User $t_3$ is inside the PII $\left(\nicefrac{6}{16},\nicefrac{10}{16}\right)$, and is therefore directed to his nearest facility outside this PII, player 3's facility in $\nicefrac{10}{16}$ (PIIs are open intervals, and this facility lies in the exterior of $\left(\nicefrac{6}{16},\nicefrac{10}{16}\right)$). User $t_4$ is inside $\left(\nicefrac{10}{16},\nicefrac{14}{16}\right)$, and this PII does not have a facility from its right (i.e., $\bl r=\bl s \cap \left[\nicefrac{14}{16},1\right] = \emptyset$); thus, w.p. $1-\epsilon$ \lime{} directs him to the facility in $\nicefrac{10}{16}$, and with the remaining probability he is directed to a facility selected uniformly at random (Line \ref{alg:lime l or r}).
\end{example}
 The only event not covered by Example \ref{example:lime} is the case where all the facilities are located in the same PII, as illustrated in Figure \ref{fig:lime-example-1}.b.\omer{I changed this one}

\subsection{Pure Nash equilibrium}
We now show that \lime{} is carefully constructed to mitigate players' incentives. In particular, we show that $\bl o^n$ is the unique PNE of $G(n,\lime)$. 
First, we show that under any PNE, players are not encouraged to locate their facilities in PIIs.
\begin{lemma}
\label{notininterval}
Consider $G(n,\lime{})$ for any $n\geq 2$. If $\bl s$ is an equilibrium profile, then $s_j \notin\left(\frac{2i-1}{2n},\frac{2i+1}{2n}  \right)$ for every $j\in [n]$ and $i\in [n-1]$.
\end{lemma}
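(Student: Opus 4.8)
The plan is to argue by contradiction. Suppose $\bl s$ is an equilibrium and yet some player sits strictly inside a PII, say $s_j\in\left(\frac{2i-1}{2n},\frac{2i+1}{2n}\right)$; I will exhibit a beneficial deviation for some player, contradicting the PNE assumption. The engine of the whole argument is the \emph{shielding} property that is immediate from Algorithm \ref{alg:lime}: as soon as there is one facility outside the PII $\left(\frac{2i-1}{2n},\frac{2i+1}{2n}\right)$, every user of that PII is routed to a facility of $\bl l\cup\bl r$ (Lines \ref{alg:lime l and r}, \ref{alg:lime l or r}), so a facility lying strictly inside the PII collects from it only the $O(\epsilon/n^2)$ crumb coming from the uniform random fallback of Line \ref{alg:lime l or r}. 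Such a facility thus forfeits exactly the mass a Hotelling player normally profits from most, namely the users immediately surrounding it, which strongly suggests it can relocate profitably.

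First I would dispose of the degenerate case in which \emph{all} $n$ facilities lie inside the single PII $\left(\frac{2i-1}{2n},\frac{2i+1}{2n}\right)$. There \lime{} reduces to \nim{} on every user (Lines \ref{alg:lime all in one} and \ref{alg:lime outside}), so all facilities are crammed into an interval of width $\frac1n<1$; an interior facility of this cluster (which exists when $n\ge 3$, the two-player sub-case being checked directly) captures a Voronoi mass strictly below $\frac1n$. Moving it to $0$ makes it the unique element of $\bl l$ for the PII, so it captures $(1-\epsilon)\frac1n$ of the PII's mass plus a positive left tail; for $\epsilon$ small this strictly beats its previous share, giving the contradiction.

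The main case is when at least one facility lies outside PII $i$. Here the primary tool is the \emph{boundary move}: relocate $j$ to $\frac{2i-1}{2n}$ or to $\frac{2i+1}{2n}$, whichever is more profitable. At such an endpoint $j$ becomes the closest possible member of $\bl l$ (resp. $\bl r$) for PII $i$, and hence captures at least the half of that PII nearest the chosen endpoint, a gain of at least $\frac1{2n}$ that is independent of $\epsilon$. I would compare this with $j$'s old payoff, which (besides the crumb) is essentially the masses $L$ and $R$ it draws from the two neighbouring PIIs $i-1$ and $i+1$. A short distance computation shows the left-endpoint move retains the $L$ term while adding the fresh half-PII, and symmetrically for the right, so one of the two moves strictly improves $j$'s payoff whenever $\min\{L,R\}<\frac1{2n}$.

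The hard part is the residual situation in which the interior facility draws large mass from \emph{both} neighbours, so the boundary move only weakly improves its payoff. This double-feeding can occur only when both adjacent PIIs are one-sided with $j$ as their redirect target, which constrains the placement of all remaining facilities severely and tends to push them, shielded, into other PIIs. I therefore expect to need a global selection rather than an arbitrary $j$: a counting/covering argument along the chain of occupied PIIs should exhibit some interior facility whose payoff falls strictly below the $\frac1{2n}$ guaranteed by a boundary move, and relocating \emph{that} facility yields the contradiction. Assembling this selection with the two earlier cases, carrying the $\epsilon$-randomization terms correctly, and verifying that every inequality is strict and uniform in $n$, is the principal obstacle; the shielding property and the geometry of a single boundary move are routine by comparison.
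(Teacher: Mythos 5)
Your overall engine --- contradiction, the shielding property of PIIs, and deviations to the PII endpoints $o^n_i,o^n_{i+1}$ --- is the same one the paper uses, and your all-facilities-in-one-PII case and your ``$\min\{L,R\}<\frac{1}{2n}$'' case are workable modulo small accounting slips (e.g., a facility inside a peripheral PII may also serve the outside-PII mass in $[0,\frac{1}{2n}]$ or $[\frac{2n-1}{2n},1]$, and the claimed gain of $\frac{1}{2n}$ ``independent of $\epsilon$'' fails when the target endpoint is already occupied, since the deviator must share it --- though in that very situation the shielded facility earns only $O(\epsilon)$ crumbs, so any share suffices). The genuine gap is your ``residual situation'': when the interior facility draws at least $\frac{1}{2n}$ from both neighbouring PIIs you produce no deviation at all, only the hope that ``a counting/covering argument along the chain of occupied PIIs should exhibit'' a suitable player. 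This case is not vacuous and is exactly where the lemma's content lies: for $n=4$ and $\bl s=(\frac{1}{2},0,1,1)$, the facility at $\frac{1}{2}$ sits in the middle PII and draws exactly $\frac{1}{2n}=\frac{1}{8}$ from each neighbouring PII, so your criterion yields only a weak inequality. As submitted, the proof is incomplete.

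For comparison, the paper closes this case \emph{locally}, not with a global chain argument: it cases on the occupancy of the two endpoints $o^n_i,o^n_{i+1}$ and always deviates the leftmost (or rightmost) facility inside the PII. Two mechanisms supply the strict slack your additive accounting lacks. First, if that facility is \emph{paired}, its payoff is $\frac{\alpha+\beta}{2}$ (the two-sided mass is split with its partner), while moving to the better endpoint yields at least $\max\{\alpha,\beta\}+\frac{1}{2n}>\frac{\alpha+\beta}{2}$; pairing halves the incumbent payoff but not the deviation payoff. Second, if it is lone, your premise that the boundary move forfeits the far-side draw $R$ is over-pessimistic: eligibility for a neighbouring PII's users is determined by being \emph{outside} that PII (shielding), not by raw proximity, so a move of distance $d<\frac{1}{n}$ to the endpoint keeps the deviator the nearest eligible facility on both sides, costs at most $\frac{d}{2}<\frac{1}{2n}$ of the far-side draw, and gains at least $\frac{1}{2n}$ from the PII itself --- a strict net improvement. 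In the example above this move raises the payoff from $\frac{1}{4}$ to $\frac{1}{2}$. Once this observation is in place, no selection beyond ``leftmost/rightmost facility in the offending PII'' is needed, so the global counting argument you gesture at is both unproven and unnecessary.
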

Next, we leverage Lemma \ref{notininterval} to show that there is a unique equilibrium under \lime{}, which is composed of the $n$-socially optimal locations, i.e., $\bl o^n$. The case of $n=2$ is an interesting exception and is therefore discussed separately in Subsection \ref{subsec:neutral mediators}.
\begin{theorem}
\label{thm: lime unique pne}
Consider $G(n,\lime{})$ for any $n\geq 3$.
The unique PNE (up to renaming the players) is $\bl o^n$.
\end{theorem}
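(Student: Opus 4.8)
The plan is to establish the two halves—that $\bl o^n$ is a PNE, and that it is the only one—separately, leaning on Lemma~\ref{notininterval} for uniqueness. First I would record the payoffs at $\bl o^n$ itself. Since the facilities sit exactly at the endpoints $\frac{2i-1}{2n}$ of the PIIs and never in their open interiors, every PII is two-sided ($\bl l\neq\emptyset$ and $\bl r\neq\emptyset$), so Algorithm~\ref{alg:lime} always executes Line~\ref{alg:lime l and r} or Line~\ref{alg:lime outside}; in particular no $\epsilon$-randomization ever fires, and \lime{} reproduces \nim{} verbatim on $\bl o^n$. By Claim~\ref{koneplayer_opt} this already gives $\SC(\lime,\bl o^n)=\frac{1}{4n}$, and by symmetry of the nearest-facility rule each player's payoff is exactly $\frac1n$.

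For existence I would prove the single claim that, holding the others at $\bl o^n_{-i}$, a facility placed at any $x\in[0,1]$ captures at most $\frac1n$ of the mass, so staying put is a best response. I would split on the location of $x$. If $x$ lies outside every open PII, then \lime{} still mimics \nim{} except that vacating $o^n_i$ may turn one peripheral PII one-sided; the resulting $\epsilon$-mass is spread uniformly and, as a short computation shows, strictly lowers the deviator's catchment below the $\frac1n$ that \nim{} itself would never exceed at such positions (co-locating merely halves a region; sliding outward in an edge only shrinks it). If instead $x$ lies inside some open PII $\left(\frac{2k-1}{2n},\frac{2k+1}{2n}\right)$, then by construction every user of that PII is routed to the nearest facility outside it, so player $i$ forfeits the entire interval it moved into; the incumbent optimal facilities straddling $x$ dominate all neighbouring regions, leaving player $i$ a catchment below $\frac{1}{2n}$ even when an adjacent endpoint was the one it vacated. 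The point I want to stress is that the deviations profitable under \nim{}—sliding next to a neighbour to swallow the adjacent interval—are exactly the ones landing inside a PII, and these are precisely what \lime{} neutralizes; this is the whole design rationale.

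For uniqueness I would start from Lemma~\ref{notininterval}, which confines every equilibrium facility to $[0,\frac{1}{2n}]\cup\{o^n_2,\dots,o^n_{n-1}\}\cup[\frac{2n-1}{2n},1]$, i.e.\ to the left edge, the right edge, or one of the $n-2$ interior optimal points (here $n\ge3$ guarantees at least one interior point exists, which is why $n=2$ is excluded). Viewing these as $n$ ``slots'', I would then show (a) no slot holds two or more facilities: if one did, a pigeonhole count (exactly $n$ facilities and $n$ slots) forces some interior point to be empty, and a facility from the crowded slot strictly gains by relocating onto that empty optimal point, where it absorbs the right half of the PII to its left and the left half of the PII to its right, the $\epsilon$-randomization breaking any residual indifference; hence exactly one facility per slot; and (b) the unique facility in each edge must sit at the inner boundary $o^n_1$ respectively $o^n_n$, since pushing it to that boundary strictly maximizes its share of the adjacent PII while its edge share stays constant (moving it into the PII is barred by Lemma~\ref{notininterval} and would forfeit those users anyway). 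One facility per slot with the two edge facilities pinned to $o^n_1,o^n_n$ is precisely $\bl o^n$.

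The main obstacle I anticipate is the existence case analysis—specifically bounding the catchment of a facility dropped inside a PII when one of that PII's endpoints has itself been vacated by the deviator, since then the deviator can reclaim a fragment of an adjacent interval and the bound is not obviously below $\frac1n$. A careful accounting of which of the open-interval and one-sided branches of Algorithm~\ref{alg:lime} fire, together with verifying that every $\epsilon$-term only ever penalizes the deviator rather than aiding it, is where the real work lies; the uniqueness half is comparatively routine once Lemma~\ref{notininterval} is in hand.
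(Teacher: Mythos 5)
Your plan for the existence half (showing $\bl o^n$ itself is a PNE) is sound, and it is in fact more than the paper's own write-up provides: the paper's proof is purely the uniqueness direction (assume a PNE $\bl s \neq \bl o^n$ and derive a contradiction). The genuine problem is in the uniqueness half, which you call ``comparatively routine,'' at your step (a). You claim that if some slot holds two or more facilities, then a facility \emph{in that crowded slot} strictly gains by relocating to an empty optimal point, with ``$\epsilon$-randomization breaking any residual indifference.'' This is false. Take $n=6$ and the profile $\bl s=(o^6_1,\,o^6_3,\,o^6_3,\,o^6_5,\,o^6_5,\,o^6_6)$. Both extreme points are occupied, so every PII is two-sided and no $\epsilon$-randomization fires anywhere, before or after any unilateral deviation. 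Each facility at the crowded slot $o^6_3$ earns exactly $\frac16$: the pair captures \emph{all} of the PIIs $(o^6_2,o^6_3)$ and $(o^6_3,o^6_4)$, because the neighbouring points $o^6_2,o^6_4$ are empty and the midpoint tests send every user in those PIIs to $o^6_3$. Relocating such a facility to the empty point $o^6_2$ (or $o^6_4$) yields exactly $\frac{1}{12}+\frac{1}{12}=\frac16$ again --- no strict gain and no $\epsilon$ tie-breaking. The profile is indeed not a PNE, but the witness is a facility in the \emph{other} crowded slot $o^6_5$, which earns only $\frac18$ and gains by moving to $o^6_2$. So the deviation you need cannot be found slot-by-slot; which crowded slot admits a profitable deviation depends on the global configuration. (A secondary imprecision compounds this: pigeonhole over $n$ slots only produces an empty \emph{slot}, not an empty \emph{interior} point; for $n=2$ that distinction is exactly why the co-located profiles such as $(\frac14,\frac14)$ are equilibria, per Proposition \ref{prop:two-player eq}.)

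The paper closes this hole with a global rather than local argument, and you would need something equivalent. Since the game is constant-sum ($\sum_i \pi_i(\bl s)=1$), once one shows that in a PNE no player earns less than $\frac1n$ (anyone below $\frac1n$ could occupy a vacant point of $\bl o^n$ and get at least $\frac1n$), it follows that every player earns \emph{exactly} $\frac1n$; this is the fulcrum. One then argues that an empty point of $\bl o^n$ forces \emph{both} of its neighbours to be multiply occupied --- not by exhibiting a deviation for the crowded players, but because a lone facility adjacent to an empty point would earn strictly \emph{more} than $\frac1n$, which the exact-$\frac1n$ fact forbids --- and that $o^n_1$ and $o^n_n$ must be occupied (the only step where $n\ge 3$ enters). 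Counting then yields the contradiction: these structural facts force the number of multiply occupied points to exceed the number of empty points, while pigeonhole forces the reverse inequality. Your slot decomposition and Lemma \ref{notininterval} correctly reduce the state space, but the passage from ``facilities live on the slots'' to ``one facility per slot'' requires this bidirectional use of the exact-$\frac1n$ payoff, which your draft does not supply.
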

See Figure \ref{fig:lime-example-1}.c for illustration. Importantly, Theorem \ref{thm: lime unique pne} holds for any $\epsilon<\nicefrac{1}{2}$, and implies that $\bl o^n$ is an \textit{exact} PNE of $G(n,\lime{})$ and not an approximated PNE. In addition, under the profile  $\bl o^n$ all the facilities are outside the PIIs; hence, every user is directed to his nearest facility. Consequently,
\begin{corollary}
Consider $G(n,\lime{})$ for any $n\geq 3$. The unique PNE, $\bl o^n$, attains the optimal social cost, $\SC(\lime,\bl o^n)=\frac{1}{4n}$.
\end{corollary}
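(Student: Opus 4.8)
The plan is to observe that the corollary is essentially a direct consequence of Theorem~\ref{thm: lime unique pne} together with a short verification that \lime{} coincides with \nim{} on the profile $\bl o^n$. Theorem~\ref{thm: lime unique pne} already supplies that $\bl o^n$ is the unique PNE of $G(n,\lime{})$ for $n\geq 3$, so all that remains is to evaluate $\SC(\lime,\bl o^n)$ and identify it with the optimal value.

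First I would note that the facilities of $\bl o^n$ sit exactly on the endpoints of the PIIs: the $i$-th PII is the open interval $\left(\frac{2i-1}{2n},\frac{2i+1}{2n}\right)$, whose left endpoint is $o^n_i$ and whose right endpoint is $o^n_{i+1}$. Since the PIIs are open, no facility lies strictly inside any PII. Consequently, for every user $t$ the output of Algorithm~\ref{alg:lime} reduces to that of \nim{}. Users outside all PIIs are routed to their nearest facility by Line~\ref{alg:lime outside}. For a user $t$ inside the $i$-th PII, the sets $\bl l = \bl s \cap \left[0,\frac{2i-1}{2n}\right]$ and $\bl r = \bl s \cap \left[\frac{2i+1}{2n},1\right]$ both contain a facility, namely $o^n_i$ and $o^n_{i+1}$ respectively, so the condition of Line~\ref{alg:lime two sides if} holds and the mediator returns $\nim(\bl l \cup \bl r, t)$ via Line~\ref{alg:lime l and r}. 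Because no facility is strictly interior to the PII, we have $\bl l \cup \bl r = \bl s$, whence $\nim(\bl l \cup \bl r, t) = \nim(\bl s,t)$.

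It then follows that $\SC(\lime,\bl o^n) = \SC(\nim,\bl o^n)$, since the two mediators induce identical routing for almost every $t$. Invoking Claim~\ref{koneplayer_opt} gives $\SC(\nim,\bl o^n) = \frac{1}{4n}$, which is the infimum of the social cost over all strategy profiles, so this value is optimal and the argument is complete.

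I expect no genuine obstacle here: the only point requiring care is confirming that the endpoint facilities place $\bl o^n$ outside the open PIIs and that both $\bl l$ and $\bl r$ are nonempty for every interior user, so that the randomizing branch of Line~\ref{alg:lime l or r} (and its associated $\epsilon$-loss) is never triggered under $\bl o^n$. Once that is checked, the equality $\SC(\lime,\bl o^n)=\SC(\nim,\bl o^n)$ and the resulting value $\frac{1}{4n}$ are immediate.
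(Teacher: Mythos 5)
Your proposal is correct and follows essentially the same route as the paper: the paper's justification is exactly that under $\bl o^n$ all facilities lie outside the (open) PIIs, so \lime{} directs every user to his nearest facility just as \nim{} does, and then Claim~\ref{koneplayer_opt} gives $\SC(\lime,\bl o^n)=\SC(\nim,\bl o^n)=\frac{1}{4n}$, with uniqueness of the PNE supplied by Theorem~\ref{thm: lime unique pne}. Your additional verification that both $\bl l$ and $\bl r$ are nonempty for users inside a PII (so that Line~\ref{alg:lime l and r} applies and the $\epsilon$-randomization of Line~\ref{alg:lime l or r} is never triggered) is precisely the detail the paper leaves implicit.
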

Therefore, the social cost of \lime{} under the PNE profile matches the optimal (see Table \ref{table:sc+ic}).

\subsection{Intervention cost}
\label{subsec:lime-ic}
Having demonstrated that \lime{} obtains the optimal social cost in the (unique) equilibrium, we now claim that its intervention cost is low. By definition of the \lime{} mediator, it suffices to consider users inside PIIs only, as users outside PIIs are directed to their nearest facility; hence, they do not contribute to the intervention cost. First, we lower bound its intervention cost. \omer{write that we consider epsilon to be small, and neglect it in the analysis}
\begin{lemma}
\label{lemma:lime lower bound}
For any $n\geq 3$, it holds that $\IC_n(\lime)\geq \frac{2n-4}{n^2}$.
\end{lemma}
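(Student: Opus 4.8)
The plan is to lower-bound the supremum using a single one-parameter family of profiles. Throughout I neglect the $\epsilon$-branch (equivalently, I work in the limit $\epsilon\to 0$), as is done elsewhere in the intervention-cost analysis. The starting point is that for every profile, every user $t$, and every mediator, $\lime$ never routes $t$ to a facility nearer than its closest one; hence the integrand
\[
\SC(\lime,\bl s)-\SC(\nim,\bl s)=\int_0^1\Big(\textstyle\sum_i\lime(\bl s,t)_i\abs{s_i-t}-\sum_i\nim(\bl s,t)_i\abs{s_i-t}\Big)dt
\]
is pointwise nonnegative (this is the pointwise version of the claim that $\nim$ routes to the nearest facility). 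It therefore suffices to exhibit a profile and bound the contribution of only the two extreme PIIs, $(o^n_1,o^n_2)$ and $(o^n_{n-1},o^n_n)$, discarding the (nonnegative) contribution of all remaining users.

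For a small $\delta>0$ I would place at least one facility at $o^n_1+\delta$ (inside the leftmost PII) and at least one at $o^n_n-\delta$ (inside the rightmost PII), distributing the other $n-2$ facilities arbitrarily between these two locations. Consider the leftmost PII $(o^n_1,o^n_2)$. Since every facility lies strictly to the right of $o^n_1$, we have $\bl l=\bl s\cap[0,o^n_1]=\emptyset$, so $\lime$ is in its one-sided branch (Line~\ref{alg:lime l or r}) and sends each such user to the nearest facility of $\bl r$; as the interval $(o^n_2,o^n_n-\delta)$ is facility-free, that facility is the right cluster at $o^n_n-\delta$. In contrast, the left cluster at $o^n_1+\delta$ is strictly nearer to $t$ than the right cluster for almost every $t\in(o^n_1,o^n_2)$, so $\nim$ sends almost every such user to it. Writing the per-user excess and letting $\delta\to0$ (the value is approached from below, which is all a supremum needs), the excess tends to $(o^n_n-t)-(t-o^n_1)=1-2t$, using $o^n_1+o^n_n=1$. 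Integrating,
\[
\int_{o^n_1}^{o^n_2}(1-2t)\,dt=\Big[\,t-t^2\,\Big]_{1/(2n)}^{3/(2n)}=\frac1n-\frac{2}{n^2}=\frac{n-2}{n^2}.
\]

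The rightmost PII is the mirror image of the leftmost under the reflection $t\mapsto 1-t$, under which the optimal locations ($o^n_i=1-o^n_{n+1-i}$), the construction, and both $\nim$ and $\lime$ are invariant; hence it contributes the same $\tfrac{n-2}{n^2}$. Summing the two extreme PIIs gives $\SC(\lime,\bl s)-\SC(\nim,\bl s)\ge\tfrac{2(n-2)}{n^2}$ in the limit, and taking the supremum over $\delta$ (and $\epsilon$) yields $\IC_n(\lime)\ge\tfrac{2n-4}{n^2}$. The only points needing care — the part I would write out most carefully — are the structural checks that keep the long one-sided push legitimate: that the interior facility triggers the one-sided branch rather than a cheap local redirection, that no facility sits between the clusters to truncate the push before $o^n_n-\delta$, and that $\nim$ indeed selects the near interior facility for almost every user of the PII (the sole tie occurs at the right endpoint when $n=3$, a measure-zero event). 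The value is attained only in the $\delta\to0$ limit, which is exactly why the statement is phrased with ``$\ge$''.
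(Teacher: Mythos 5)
Your proposal is correct and takes essentially the same route as the paper's proof: the paper also uses a two-cluster profile with facilities at $\frac{1}{2n}+\delta$ and $\frac{2n-1}{2n}-\delta$ (splitting the players $\lceil n/2\rceil$ left and the rest right, a choice that is immaterial once $\epsilon$ is neglected), restricts the social-cost comparison to the two extreme PIIs, and takes $\delta\to 0$ and $\epsilon\to 0$ to obtain $\frac{2n-4}{n^2}$. The only cosmetic difference is that the paper evaluates the \lime{} and \nim{} integrals separately (tracking the $\epsilon$-branch to get the refined bound $\left(1-\frac{\epsilon}{2}\right)\frac{2n-4}{n^2}$), whereas you integrate the pointwise excess $1-2t$ directly.
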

More precisely, the bound is \omer{I put epsilon over 2}$(1-\frac{\epsilon}{2})\left(\frac{2n-4}{n^2}\right)$, which increases as $\epsilon$ decreases. The expression given in Lemma \ref{lemma:lime lower bound} is obtained by taking $\epsilon$ to zero. 
Next, we move on to upper bounding its intervention cost. 
\begin{theorem}
\label{thm:LIME-IC}
For any $n\geq 3$, it holds that $\IC_n(\lime) \leq  \frac{2n-3.5}{n^2}$. 
\end{theorem}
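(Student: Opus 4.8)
I need to bound the intervention cost of \lime{} from above by $\frac{2n-3.5}{n^2}$. By the observation preceding the statement, only users inside the PIIs contribute to the intervention cost, since users outside are routed to their nearest facility exactly as under \nim{}. Let me think about what the PIIs are and how \lime{} routes users inside them.

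The PIIs are the open intervals $(\frac{2i-1}{2n}, \frac{2i+1}{2n})$ for $i \in [n-1]$. There are $n-1$ of them, each of width $\frac{1}{n}$, and they are centered at the points $\frac{i}{n}$. The socially optimal locations $o^n_i = \frac{2i-1}{2n}$ are exactly the endpoints/boundaries between consecutive PIIs. Wait, let me recompute. The points $\frac{2i-1}{2n}$ for $i = 1, \dots, n$ are the optimal locations. The PIIs sit between consecutive optimal locations: PII $i$ is $(o^n_i, o^n_{i+1}) = (\frac{2i-1}{2n}, \frac{2i+1}{2n})$. So the $n-1$ PIIs exactly fill the gaps between consecutive optimal locations. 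The leftmost region $[0, \frac{1}{2n})$ and rightmost region $(\frac{2n-1}{2n}, 1]$ are outside all PIIs, each of width $\frac{1}{2n}$.

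**Strategy for the upper bound.** The intervention cost is a supremum over all strategy profiles $\bl s$ of $\SC(\lime, \bl s) - \SC(\nim, \bl s)$. Since only users in PIIs matter, and the PIIs are disjoint, I'd want to decompose the difference as a sum over the $n-1$ PIIs, where the $i$-th term is the contribution from users in PII $i$. But the subtlety: the routing within each PII depends on whether there are facilities to the left ($\bl l$) and/or right ($\bl r$) of that PII. So for each PII, I need to analyze three cases: (a) facilities on both sides — users go to their nearest facility *outside* the PII; (b) facilities on exactly one side — users go to the nearest outside facility w.p. $1-\epsilon$ and random w.p. $\epsilon$; (c) all facilities inside the PII — users go to their nearest facility (no intervention cost).

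**Bounding the per-PII contribution.** Within a single PII of width $\frac{1}{n}$, centered at $\frac{i}{n}$, I want to maximize (over placements of facilities) the extra cost $\SC(\lime) - \SC(\nim)$ contributed by users in that PII. The key insight is that under \nim{}, a user in the PII might be served by a facility *inside* the PII (cost up to $\frac{1}{n}$), whereas \lime{} forces them to the nearest facility *outside*. The worst case intuitively is case (a) with facilities placed at the two boundaries $o^n_i$ and $o^n_{i+1}$ — then \lime{} sends every user in the PII to whichever boundary is nearer. I'd compute: if a facility is at each endpoint, \nim{} serves each half-PII optimally giving cost $2 \cdot \frac{1}{2}(\frac{1}{2n})^2$... actually I need to also consider an adversary placing a facility in the *interior* of the PII to drive down $\SC(\nim)$ while \lime{} ignores it. The optimization over a single PII reduces to a one-dimensional calculus problem: place outside-facilities as close to the PII boundary as possible (at the boundary points), and possibly an inside facility to minimize the \nim{} cost. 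I expect the maximum per-PII contribution to come out to roughly $\frac{1}{4n^2}$ or $\frac{1}{2n^2}$, and summing over $n-1$ interior PIIs plus careful treatment of the two *boundary* PIIs (which can fall into case (b), adding an $\epsilon$-dependent random-routing penalty) should yield the claimed $\frac{2n-3.5}{n^2}$.

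**Main obstacle.** The hard part will be pinning down the worst-case placement *simultaneously* across PIIs, because the same facilities serve as the "outside" neighbors for adjacent PIIs, so the per-PII optimizations are coupled and I cannot naively sum independent per-PII maxima — doing so would overcount and give a looser bound. I'd need to argue that the coupling only helps (i.e., an upper bound obtained by independently maximizing each PII is valid), and then tighten it: the factor of $2n-3.5$ rather than $2(n-1) = 2n-2$ suggests that the two extreme PIIs (adjacent to the outer boundary regions) contribute strictly less than the interior ones, presumably because only one side can host the "random routing" penalty or because the geometry near the boundary is asymmetric. Carefully accounting for this $-1.5$ correction in the numerator, and verifying it matches the lower bound's $2n-4$ up to the $\epsilon$ and $\frac{0.5}{n^2}$ slack, is where the real bookkeeping lies.
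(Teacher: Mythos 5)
Your proposal has a genuine gap: your quantitative picture of the worst case is wrong, and the idea you are missing is precisely the one that drives the bound. You estimate the worst per-PII contribution at roughly $\frac{1}{4n^2}$ or $\frac{1}{2n^2}$ and expect the theorem to follow by summing such terms over the $n-1$ PIIs. In fact a single occupied PII can contribute $\Theta(\nicefrac{1}{n})$: \lime{} sends the users of an occupied PII to the nearest facility \emph{outside} it, so if the only facilities outside that PII sit at distance of order $1$, a user mass of $\frac{1}{n}$ is forced to travel a constant distance. This is exactly the lower-bound construction of Lemma \ref{lemma:lime lower bound}, where only the two \emph{extreme} PIIs are occupied and each contributes about $\frac{n-2}{n^2}$. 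Your guess that the extreme PIIs ``contribute strictly less than the interior ones'' is backwards: the supremum is approached when only the two extreme PIIs are occupied, and $2n-3.5=2(n-2)+\frac{1}{2}$ reflects two long-range terms, not $n-1$ local ones. Relatedly, your candidate worst case (facilities at both endpoints of a PII, possibly one inside) contributes only $O(\nicefrac{1}{n^2})$, since there \lime{} and \nim{} nearly agree.

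Because of this, the coupling across PIIs is not a bookkeeping correction; it is the heart of the proof, and your fallback plan (``an upper bound obtained by independently maximizing each PII is valid'') is valid but useless: independent maximization lets every PII's nearest outside facility be at distance $\Theta(1)$, so the sum of per-PII maxima is $\Theta(1)$, not $O(\nicefrac{1}{n})$. The paper instead (i) restricts attention to \emph{occupied} PIIs (users in unoccupied PIIs are routed identically by \lime{} and \nim{}); (ii) bounds the \lime{} cost of an occupied PII $j$ by $\frac{1}{n}$ times the distance from its center $\frac{j}{n}$ to the nearest facility in some \emph{other} occupied PII (Proposition \ref{prop:median in interval} plus the triangle inequality), while bounding the forgone \nim{} cost below by the optimal $\frac{1}{4n^2}$ per occupied PII; and (iii) thereby reduces the supremum over profiles to the discrete problem $\max_{D\subseteq[n-1]}\sum_{j\in D}\min_{i\in D\setminus\{j\}}\left(\abs{i-j}+\frac{1}{4}\right)$, solved by the graph argument of Lemma \ref{lemma:aux for lime ic}: the maximum is $2n-3.5$, attained at $D=\{1,n-1\}$. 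This tradeoff---many occupied PIIs force nearby outside facilities and hence small per-PII costs, while few occupied PIIs allow large per-PII costs but contribute few terms---is the step absent from your sketch, and without it the claimed bound cannot be reached.
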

The proof of Theorem \ref{thm:LIME-IC} assumes for simplicity that $\epsilon$ (in Line \ref{alg:lime l or r} of Algorithm \ref{alg:lime}) is arbitrarily small.
Notice that the upper bound almost matches the lower bound. We summarize the intervention cost of \lime{} in Table \ref{table:sc+ic}. Observe that not only does \lime{} intervene less than \dict{}, but also its intervention cost diminishes in a $\nicefrac{1}{n}$ scale, similarly to the optimal social cost (which is $\nicefrac{1}{4n}$).

\section{Extensions}
\label{sec:extensions}
\omer{add $n=2$ for any dist PNE with improved SC, greg's project from back then}
Beyond the main analysis of the paper given in the previous sections, we find it important to examine the three following extensions.

\subsection{Neutral mediators}
\label{subsec:neutral mediators}
Dilemmas other than intervention cost may arise when implementing a mediator. To illustrate such dilemmas and intertwine our recommendation games with social choice theory, we analyze the two-player game. As Theorem \ref{thm: lime unique pne} shows, $\bl o^n$ is the unique PNE for $n\geq 3$. In $G(2,\lime)$, however, \lime{} does not induce a game with a unique PNE. 
\begin{proposition}
\label{prop:two-player eq}
Consider $G(2,\lime)$. A strategy profile $(s_1,s_2)$ is a PNE if and only if $(s_1,s_2)\in \left\{\frac{1}{4},\frac{3}{4}\right\}\times \left\{\frac{1}{4},\frac{3}{4}\right\}$.
\end{proposition}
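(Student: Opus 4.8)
The plan is to invoke Lemma~\ref{notininterval} to restrict attention to profiles in which both facilities lie in the exterior $E\defeq[0,\tfrac14]\cup[\tfrac34,1]$ of the single PII $M\defeq(\tfrac14,\tfrac34)$, and then to classify these by how the two facilities split between the left block $L\defeq[0,\tfrac14]$ and the right block $R\defeq[\tfrac34,1]$. There are three regimes: both facilities in $L$, both in $R$, or one in each. First I would write down, in each regime, the closed-form payoff $\pi_i(\bl s)$ by integrating over $L$, $R$ and $M$ separately, using that users in $L\cup R$ go to their nearest facility (Line~\ref{alg:lime outside}) while users in $M$ are routed by Lines~\ref{alg:lime two sides if}--\ref{alg:lime all in one} of Algorithm~\ref{alg:lime}.

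The key computations are: (i) in the split regime with $s_1=a\in L$, $s_2=b\in R$, each block is captured by its resident and $M$ is split at the midpoint $\tfrac{a+b}{2}$ (Line~\ref{alg:lime l and r}), giving $\pi_1=\tfrac{a+b}{2}$ and $\pi_2=1-\tfrac{a+b}{2}$; (ii) in the both-in-$L$ regime with distinct positions, only one side of $M$ is occupied, so up to the $\epsilon$-mass of Line~\ref{alg:lime l or r} all of $M$ and all of $R$ are routed to the rightmost (frontier) facility, leaving the non-frontier player at most $\tfrac14+\tfrac{\epsilon}{4}$; and (iii) a facility placed inside $M$ collects none of the PII mass beyond the random $\tfrac{\epsilon}{2}$ share. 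I would record the symmetric statements for the both-in-$R$ regime via the reflection $t\mapsto 1-t$, together with the tie-split values that apply when the two facilities coincide.

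With these formulas the positive direction is a finite check: at each of the four corners both players earn exactly $\tfrac12$, and the payoff expressions show that no unilateral move---staying in the same block, crossing to the other block, or entering $M$---yields more than $\tfrac12$ for every $\epsilon<\tfrac12$, so the corners are PNE; player-swap and reflection symmetry reduce this to verifying $(\tfrac14,\tfrac34)$ and $(\tfrac14,\tfrac14)$ alone. For the converse I would exhibit a beneficial deviation for every non-corner exterior profile: in the split regime the $L$-player strictly gains by moving toward $\tfrac14$ (and the $R$-player toward $\tfrac34$) whenever she is not already there, so only $(\tfrac14,\tfrac34)$ and $(\tfrac34,\tfrac14)$ survive; in the both-in-$L$ regime the leftmost facility's owner has payoff at most $\tfrac12$ and strictly gains by jumping to $\tfrac34$, where as the lone resident of $R$ she collects $\tfrac58-\tfrac{b}{2}\ge\tfrac12$, and a short case split (distinct versus coincident positions) shows this is a strict improvement unless both facilities already sit at $\tfrac14$. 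The both-in-$R$ case is symmetric and leaves only $(\tfrac34,\tfrac34)$.

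The main obstacle is the careful bookkeeping at the boundaries: the payoff formulas change discontinuously when the two facilities coincide (tie-breaking replaces ``frontier takes all of $M$'' by an even split) and at the block endpoints $\tfrac14,\tfrac34$, which is precisely why $(\tfrac14,\tfrac14)$ and $(\tfrac34,\tfrac34)$ are equilibria while every interior coincident profile $(c,c)$ with $c<\tfrac14$ is not. Getting these degenerate cases and the $\epsilon$-terms right---and verifying that each claimed deviation is a \emph{strict} improvement for all sufficiently small $\epsilon$---is where the argument needs the most care.
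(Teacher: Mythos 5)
Your proof is correct, and it takes a somewhat different route than the paper's. The paper disposes of this proposition in one line, as the special case $\lambda=\tfrac{1}{4}$ of Proposition \ref{prop:2 players lambda eq} for the parameterized mediator $\clime(\lambda)$; the content there is Claim \ref{claim:clime n=2} (no equilibrium facility sits inside the PII) followed by an undercutting argument on exterior profiles: with distinct locations one player profitably moves toward the other, and from a coincident exterior point $x\notin\{\tfrac12-\lambda,\tfrac12+\lambda\}$ a deviation to $x\pm\delta$ earns nearly $\max\{x,1-x\}>\tfrac12$. You instead stay entirely within \lime{}, replacing Claim \ref{claim:clime n=2} by Lemma \ref{notininterval} --- a legitimate reuse, since that lemma is stated for all $n\geq 2$ even though Theorem \ref{thm: lime unique pne} requires $n\geq 3$ --- and then classify exterior profiles into three regimes with closed-form payoffs (split regime: $\pi_1=\tfrac{a+b}{2}$; both-in-$L$: the non-frontier player earns $\tfrac{s_1+s_2}{2}+\tfrac{\epsilon}{4}<\tfrac12$) and explicit jumps to $\tfrac14$ or $\tfrac34$ as the beneficial deviations. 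Both arguments share the same skeleton (PII exclusion, then exterior case analysis), but each buys something different: the paper's detour through $\clime(\lambda)$ yields the statement for the whole family of configurable mediators at once, whereas your version introduces no new claim and spells out the corner verification --- that all four profiles in $\{\tfrac14,\tfrac34\}^2$, including the coincident ones, are equilibria --- which the paper dismisses as ``a simple case analysis.'' Your strictness bookkeeping is also right where it matters: in the distinct both-in-$L$ case the deviator's old payoff is at most $\tfrac14+\tfrac{\epsilon}{4}<\tfrac12$ while the jump to $\tfrac34$ yields $\tfrac58-\tfrac{b}{2}\geq\tfrac12$, and at a coincident $(c,c)$ with $c<\tfrac14$ the old payoff is exactly $\tfrac12$ while the jump yields $\tfrac58-\tfrac{c}{2}>\tfrac12$; this is precisely the boundary behavior that lets $(\tfrac14,\tfrac14)$ and $(\tfrac34,\tfrac34)$ survive as equilibria.
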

Notice that the selection of the socially optimal locations remains a PNE, similarly to greater values of $n$, but this PNE is not \textit{unique}. To clarify the intuition behind the latter phenomena, consider the profile $(\nicefrac{1}{4},\nicefrac{1}{4})$. According to Line \ref{alg:lime l or r} in Algorithm \ref{alg:lime}, \lime{} directs all users in the segment $(\nicefrac{1}{4},\nicefrac{3}{4})$ to a random player w.p. $\epsilon$. Under this profile, random direction collides with directing these users to their nearest facility; hence, the randomness \lime{} employs for breaking symmetry is not enough and other tools should be applied. In fact, this problem can be easily solved by adding a lexicographic tie breaker in case both players select the same location. When augmented to \lime{}, this tie-breaker induces a game with the socially optimal locations as the unique PNE. 

Indeed, our notion of intervention cost is not affected by such tie-breaking intervention. However, the mediator should treat players \textit{fairly}, in some sense of the word. This fact is meaningful with respect to the generality of our agenda, since it introduces other considerations that involve the players rather than the users, borrowing ideas from social choice theory. The most immediate characterization of fairness in our setting is through the notion of \textit{neutrality}. Formally, a mediator $\mM$ is neutral if for every $i,j \in [n]$ and $\bl s= (\bl s_{-\{i,j\}},s_i,s_j)$ it holds that $\pi_i(\bl s)=\pi_j(\bl s_{-\{i,j\}},s_j,s_i)$. Noticeably, the \dict{} mediator is not neutral, while the \lime{} mediator and all other mediators discussed in the paper and the appendix are neutral.
The inability of \lime{} to handle the two-player game then becomes apparent: it turns out that for every neutral mediator that induces a game with a PNE, there is a PNE when both players select the same location (see the proof in the appendix). So, no neutral mediator implementing the socially  optimal locations as a unique PNE exists for $n=2$, while for $n \geq 3$ we show \lime{} does precisely that under low intervention cost.



\subsection{Configurable intervention cost}
In some situations it might be beneficial to control the amount of intervention. A mediator may offer a tradeoff between social cost in equilibrium and intervention cost. Namely, we devise a parameterized mediator, where its parameter determines a desired intervention cost. The mediator attains the desired intervention cost but suffers some increase in its social cost in equilibrium (compared to the optimal one). For some values of $n$ we show decisive positive results for the applicability  of such a mediator in creating desired tradeoffs. This extension is deferred to the appendix.

\subsection{Non-uniform user distribution}
\begin{algorithm}[t]
 \caption{\glime\label{alg:glime}}
\DontPrintSemicolon
 \KwIn{ A strategy profile $\bl s$, a user $t$}
 \KwOut{A location in $\bl s$}
  $a_i \gets q_{\nicefrac{2i-1}{2n}}$ for $i\in [n-1]$ \;
  \uIf{ $\exists i\in [n]$ such that $t \in (a_i,a_{i+1})$}{
    $\bl l \gets \bl s \cap \left[0,a_i  \right], \bl r \gets \bl s \cap \left[a_{i+1},1  \right]$ \;
		\uIf{$\bl l \neq \emptyset$ and $\bl r \neq \emptyset$}
		{w.p. $\frac{1}{2}$ \KwRet $\nim(\bl l, t)$, otherwise \KwRet $\nim(\bl r, t)$ \label{alg:glime:half half}}
		\uElseIf{
		$\bl l \neq \emptyset$ or $\bl r \neq \emptyset$
		}    
		{
        w.p. $1-\epsilon$ \KwRet $\nim(\bl l \cup \bl r, t)$, otherwise \KwRet $random(\bl s,t)$ \label{alg:clime l or r}
		}
		\uElse{
		\KwRet $\nim(\bl s, t)$
		}
  }
  \uElse{
    \KwRet $\nim(\bl s, t)$ \; 
  }  
\end{algorithm}
The results obtained for \lime{} do not hold for general user distribution, since players may have beneficial deviations. However, the case of non-uniform user distribution is appealing, especially given the fact that PNEs are hard to characterize (see, e.g., \cite{shilony1981hotelling,ewerhart2015mixed}) or may not even exist \cite{osborne1993candidate}. In this subsection we devise a mediator for general user distribution, showing initial results for this setting as well. Let $g$ be an arbitrary continuous function supported in the $[0,1]$ segment, and denote by $G(n,\mM,g)$ the game induced by the mediator $\mM$, $n$ players and $g$ as the user distribution function. Let $q_a$ be the $a$-th quantile of $g$. Consider the General Limited Intervention Mediator described in Algorithm \ref{alg:glime}, and referred to as \glime{} hereinafter for abbreviation. \glime{} operates similarly to \lime{}, but carries out two things differently. First, it incorporates the quantiles of $g$ instead of those of the uniform distribution ($\bl o^n$ defined earlier). The other difference is given in Line \ref{alg:glime:half half}. Instead of directing users in PIIs to their nearest facility outside that PII, it directs such users w.p. $\nicefrac{1}{2}$ to the nearest facility right of the PII (if such exists), and with the remaining probability to the nearest facility left of the PII. Importantly, \glime{} induces games with a unique PNE, which is highly desired. 
\begin{lemma}
\label{lemma:glime pne}
Consider $G(n,\lime{},g)$ for any $n\geq 3$ and general density function $g$. The unique PNE (up to renaming the players) is $\bl s = (q_{\nicefrac{2i-1}{2n}})_{i=1}^{n}$.
\end{lemma}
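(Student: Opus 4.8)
The mediator in the statement is \glime{} (the label reads \lime{} but the construction referenced is Algorithm~\ref{alg:glime}), and my plan is to mirror the two-stage structure of the uniform case, i.e.\ Lemma~\ref{notininterval} followed by Theorem~\ref{thm: lime unique pne}, replacing the socially optimal grid $\bl o^n$ by the quantile anchors $a_i=q_{(2i-1)/(2n)}$ and by exploiting the one genuinely new ingredient of \glime{}: the half--half splitting rule of Line~\ref{alg:glime:half half}. The point of that rule is exactly that for non-uniform $g$ a nearest-exterior-facility routing would split a PII's mass between its two bordering facilities in a $g$-dependent fashion, whereas the half--half rule makes every two-sided PII $(a_i,a_{i+1})$ donate precisely half its mass to the nearest facility on each side, independently of $g$. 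Because the quantile choice gives $F(a_{i+1})-F(a_i)=1/n$ for each PII and $F(a_1)=1-F(a_n)=1/(2n)$ for the two peripheral cells, the entire payoff bookkeeping then reduces to the uniform one, and in particular the candidate profile $\bl s=(q_{(2i-1)/(2n)})_{i=1}^n$ awards every player exactly $1/n$.

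Stage~1 is the \glime{} analogue of Lemma~\ref{notininterval}: in any PNE no facility lies strictly inside an open PII. The key observation is that a facility at $x\in(a_i,a_{i+1})$ earns nothing from the mass of its own PII, since users there are routed only to $\bl l\cup\bl r$, i.e.\ to facilities outside $(a_i,a_{i+1})$. I would then produce a beneficial deviation to a boundary: moving onto $a_{i+1}$ the interior facility closest to $a_{i+1}$ turns it into the nearest right facility of $(a_i,a_{i+1})$, capturing the half-mass $1/(2n)$ while retaining its nearest-left role for the PII on the right, for a strict gain of order $1/(2n)$ (up to the $O(\epsilon)$ terms); the symmetric move onto $a_i$ covers the case where the own PII has no left facility, and the degenerate ``all facilities in one PII'' profile is discarded directly since a deviator can seize an entire under-served half of the segment.

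Stage~2 establishes uniqueness given Stage~1, so that all facilities lie in $[0,a_1]\cup\{a_2,\dots,a_{n-1}\}\cup[a_n,1]$. I would first rule out one-sided PIIs: if some PII had an empty side, the $\epsilon$-randomization of Line~\ref{alg:clime l or r} leaves an adjacent region served only through the single nonempty side, and a player strictly profits by moving to the empty side to collect its half-mass, exactly the spreading incentive of the uniform proof. With all $n-1$ PIIs two-sided, each donates $1/(2n)$ left and $1/(2n)$ right and the two peripheries $1/(2n)$ each by nearest facility. The crucial pinning step is then a sliding-deviation argument: if the facility bordering a PII sat strictly inside a peripheral cell rather than on the anchor (e.g.\ at $a_1-\delta$ instead of $a_1$), a neighbour could slip into the vacated slack, become the nearest-left (or nearest-right) facility of the adjacent PII(s) and additionally pick up peripheral mass, a strict improvement; hence every anchor must be occupied, and a counting argument ($n$ facilities, $n$ anchors) forces exactly one facility per anchor, giving $\bl s=(q_{(2i-1)/(2n)})_{i=1}^n$. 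The converse (that this profile is a PNE) follows because a single facility can be nearest-left and nearest-right of at most one adjacent PII each, capping any deviation's payoff at $1/n$.

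The step I expect to be hardest is the pinning inside Stage~2: showing not merely ``one facility per side of each cut'' but that each bordering facility sits \emph{exactly} on its anchor. The subtlety is precisely that the half--half rule, unlike \lime{}'s nearest rule, provides no midpoint ``restoring force'', so the argument must instead squeeze the extreme facilities onto $a_1$ and $a_n$ via the peripheral-mass capture of the adjacent sliding deviations and then induct inward, all while tracking the $O(\epsilon)$ one-sided-branch contributions and verifying that each deviation is strictly (not merely weakly) improving. This is the routine-but-delicate part, just as in the appendix proof of Theorem~\ref{thm: lime unique pne}; the hypothesis $n\ge3$ enters here because, as in Proposition~\ref{prop:two-player eq}, with a single PII there is no second PII to block the incursions and uniqueness genuinely fails for $n=2$.
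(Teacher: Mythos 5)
Your proposal has the same skeleton as the paper's proof: the paper's Step 1 verifies directly that $\bl s^*=(q_{\nicefrac{2i-1}{2n}})_{i=1}^{n}$ is a PNE (each player earns $\frac{1}{n}$, and every deviation is capped at $\frac{1}{n}$ up to $O(\epsilon)$ terms), and its Step 2 proves uniqueness by first establishing the \glime{} analogue of Lemma~\ref{notininterval} (your Stage 1) and then ``reiterating the steps'' of Theorem~\ref{thm: lime unique pne} with $o^n_i$ replaced by $q_{\nicefrac{2i-1}{2n}}$ (your Stage 2). The decomposition, the role you assign to the half--half rule (each two-sided PII donates $\frac{1}{2n}$ to each side independently of $g$), and the identification of where $n\geq 3$ enters all match the paper.

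The one place you genuinely diverge is the pinning step, and there your version is the stronger one. The paper asserts that Theorem~\ref{thm: lime unique pne}'s steps transfer verbatim, but Step 3 of that proof does not: under \lime{} a lone facility at $o^n_1-\delta$ (others at $o^n_2,\dots,o^n_n$) earns $\frac{1}{n}-\frac{\delta}{2}<\frac{1}{n}$, so deviating to $o^n_1$ is a strict self-improvement; under \glime{}'s half--half rule the analogous lone facility at $a_1-\delta$ already collects the entire peripheral mass $\frac{1}{2n}$ plus half of PII $(a_1,a_2)$, i.e.\ exactly $\frac{1}{n}$, so the transferred self-deviation argument yields no contradiction. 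What actually breaks the profile $(a_1-\delta,a_2,\dots,a_n)$ is precisely your neighbour deviation: the player at $a_2$ moves to $a_1$, remains the nearest-left facility of $(a_2,a_3)$ (with $a_2$ vacated, the rightmost facility in $\bl s\cap[0,a_2]$ is hers at $a_1$, and the half--half rule pays that role regardless of distance), becomes nearest-left of $(a_1,a_2)$, and strictly gains the peripheral mass in $\left(a_1-\nicefrac{\delta}{2},a_1\right]$, for a total strictly above $\frac{1}{n}$. So your ``slip into the vacated slack'' idea is not merely an alternative route; it is what the proof actually requires, and your remark that the half--half rule removes the midpoint restoring force is exactly the right diagnosis. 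The loose ends remaining in your sketch --- the occupancy count must follow the paper's Steps 4--6 rather than bare pigeonhole (one has to exclude, e.g., doubled anchors adjacent to empty ones), and your $\frac{1}{n}$ cap in the converse must also account for peripheral mass and for ties when deviating onto an occupied anchor --- are routine and no worse than what the paper itself leaves implicit.
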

Next, we bound the intervention cost of \glime{}. Since the analysis for general distributions is highly challenging, we leave it for future work and only focus on its intervention cost under the uniform distribution.
\begin{proposition}
\label{lemma:glime ic lower}
For any $n\geq 2$, it holds that $\IC_n(\glime)\geq \frac{1}{4}-\frac{1}{2n}+\frac{1}{2n^2}$. 
\end{proposition}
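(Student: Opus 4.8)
The plan is to prove this lower bound exactly as one lower-bounds $\IC_n(\dict)$ in Lemma~\ref{lemma:dict high intervention cost}: since $\IC_n(\glime)=\sup_{\bl s}\{\SC(\glime,\bl s)-\SC(\nim,\bl s)\}$ is a supremum, it suffices to exhibit a single family of strategy profiles whose induced difference is at least $\frac14-\frac1{2n}+\frac1{2n^2}$. Under the uniform distribution the quantiles satisfy $q_a=a$, so $a_i=o_i^n$ and the PIIs of \glime{} coincide with those of \lime{}, namely the open intervals $(o_i^n,o_{i+1}^n)$ for $i\in[n-1]$. Because users outside the PIIs are always routed to their nearest facility, they contribute nothing to the difference, and I only need to reason about users inside PIIs. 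The feature to exploit is the half-half rule (Line~\ref{alg:glime:half half}): a user inside a two-sided PII travels $\frac12(R-L)$ in expectation, where $L$ and $R$ are the nearest facilities to the left and right of that PII, \emph{irrespective} of how close she actually is to one side; this is precisely where \glime{} wastes distance relative to \nim{}.

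For $n\ge 3$ I would use the profile $\bl s$ that places one facility at $0$, one facility at $p=\frac1{2n}+\delta$ strictly inside the first PII $(o_1^n,o_2^n)$ (with $\delta\to 0^+$), and the remaining $n-2$ facilities at $1$. Every PII is then two-sided. The crucial observation is that in the first PII the $p$-facility is \emph{interior}, hence ignored by the half-half rule, so there $L=0$ and $R=1$ and every user travels $\tfrac12$ in expectation under \glime{}; under \nim{}, however, these same users reach the nearby $p$-facility at cost $\approx t-\frac1{2n}$. This mismatch over an interval of length $\frac1n$ produces the leading $\frac1{n^2}$-order excess. In all later PIIs the $p$-facility lies to their left, so $L=p$ and $R=1$.

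The computation then amounts to integrating the piecewise-linear per-user costs of \glime{} and of \nim{} over the PIIs and subtracting; the regions outside the PIIs cancel. Letting $\delta\to0$ and $\epsilon\to0$, the difference tends to $\frac14-\frac1{2n}+\frac{9}{16n^2}$, which strictly exceeds the claimed $\frac14-\frac1{2n}+\frac1{2n^2}=\frac14-\frac1{2n}+\frac{8}{16n^2}$, so the bound follows (for small enough $\delta$ the profile already witnesses it). The case $n=2$ must be handled separately, since with only two facilities the bracket construction is unavailable: there I would take $\bl s=(0,\tfrac12)$. The unique PII $(\tfrac14,\tfrac34)$ then has an empty right side ($\bl r=\emptyset$), so by the one-sided rule (Line~\ref{alg:clime l or r}) all its users are sent left to $0$ at cost $t$, whereas \nim{} routes them to the adjacent facility at $\tfrac12$; the resulting difference is $\int_{1/4}^{3/4}\big(t-\lvert t-\tfrac12\rvert\big)\,dt=\frac3{16}\ge\frac18$.

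The main obstacle is the bookkeeping of the nearest-left and nearest-right facilities $L,R$ for each PII, which depend on the global configuration and, critically, on whether a facility sits at a PII boundary or strictly inside it --- this distinction decides whether the half-half rule ``uses'' that facility in a given PII. Placing the extra facility strictly interior to the first PII is exactly what lets me keep the widest possible bracket $[0,1]$ there while simultaneously collapsing the \nim{} distance, and pinning down the coefficient of the $\frac1{n^2}$ term requires tracking how that same facility narrows the bracket in every downstream PII and shifts the \nim{} Voronoi boundary near the center. Once these dependencies are laid out the integrals are routine; the only genuinely separate argument is the one-sided $n=2$ construction above.
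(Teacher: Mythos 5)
Your proof is correct, but it takes a genuinely different route from the paper's. The paper exhibits a single symmetric witness profile: half the players at $\frac{1}{2n}$ and half at $\frac{2n-1}{2n}$, so that \emph{every} PII is two-sided with the same bracket $\left[\frac{1}{2n},\frac{2n-1}{2n}\right]$. Under \glime{} every user in between then pays $\frac{1}{2}\cdot\frac{n-1}{n}$ by the half--half rule, while \nim{} splits these users at $\frac{1}{2}$, and the whole bound is one computation, $\frac{1}{2}\left(\frac{n-1}{n}\right)^2-\left(\frac{1}{2}-\frac{1}{2n}\right)^2$, with no limiting argument, no case analysis, and no separate treatment of $n=2$. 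Your construction --- a facility at $0$, one strictly \emph{inside} the first PII at $\frac{1}{2n}+\delta$, and the rest at $1$ --- additionally exploits the fact that \glime{} ignores facilities interior to a PII, and your arithmetic checks out: the $\delta\to 0$ limit is $\frac{1}{4}-\frac{1}{2n}+\frac{9}{16n^2}$ for $n\ge 3$, and the one-sided $n=2$ profile $(0,\frac{1}{2})$ gives $\frac{3}{16}\ge\frac{1}{8}$. Two small remarks: for $n\ge 3$ all PIIs in your profile are two-sided, so the $\epsilon$-randomization never fires and ``$\epsilon\to 0$'' is needed only in your $n=2$ case; and there the difference is really $\left(1-\frac{\epsilon}{2}\right)\frac{3}{16}$, which still exceeds $\frac{1}{8}$ for small $\epsilon$. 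As for what each approach buys: the paper's is much shorter and uniform in $n$, while yours yields a strictly larger constant --- and this actually matters here. The paper's own computation evaluates to $\frac{1}{4}-\frac{1}{2n}+\frac{1}{4n^2}$, which falls short of the coefficient $\frac{1}{2n^2}$ claimed in the statement (an apparent erratum in the paper), whereas your $\frac{9}{16n^2}>\frac{8}{16n^2}=\frac{1}{2n^2}$ establishes the proposition exactly as stated.
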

Indeed, while the intervention cost of \glime{} is greater than that of \lime, it is better than that of \dict.
\begin{lemma}
\label{lemma:glime ic upper}
For every $n$, $\IC_n(\dict) > \IC_n(\glime)$.
\end{lemma}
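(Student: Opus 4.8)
The plan is to sandwich the two intervention costs: I will combine the already-available lower bounds on $\IC_n(\dict)$ with a fresh upper bound on $\IC_n(\glime)$, and show the former dominates. Recall from the preceding text that $\IC_n(\dict)\ge \nicefrac14$ for \emph{every} $n$ (the extreme all-disobey profile extends to any number of players), that this is tight at $n=2$, and that Lemma \ref{lemma:dict high intervention cost} strengthens it to $\IC_n(\dict)\ge \frac12-\frac{3}{4n}+\frac{1}{4n^2}$ for $n\ge 3$. A one-line check shows $\frac12-\frac{3}{4n}+\frac{1}{4n^2}>\nicefrac14$ precisely when $n\ge 3$, since the numerator $n^2-3n+1$ is positive there. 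Consequently it suffices to establish the single inequality $\IC_n(\glime)<\nicefrac14$ for all $n$: for $n\ge 3$ this yields $\IC_n(\dict)>\nicefrac14>\IC_n(\glime)$, and for $n=2$ it yields $\IC_n(\dict)=\nicefrac14>\IC_n(\glime)$.

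To bound $\IC_n(\glime)$ from above I would first reduce to the potentially intervened intervals. By Algorithm \ref{alg:glime}, every user lying outside all PIIs is routed to her nearest facility exactly as under \nim{}, and hence contributes nothing to $\SC(\glime,\bl s)-\SC(\nim,\bl s)$; so only the PIIs matter. The structural fact I would exploit is that in the two-sided branch (Line \ref{alg:glime:half half}) a user at $t$ inside a PII travels expected distance $\tfrac12(R-L)$, \emph{independent of $t$}, where $L\le o^n_i$ and $R\ge o^n_{i+1}$ are the facilities flanking that PII; in the one-sided branch she travels the distance to the single flanking side (up to the negligible $\epsilon$-mass). Thus the excess of each PII is $\tfrac12(R-L)-d_{\nim}(t)$ integrated over the interval, and bounding $\IC_n(\glime)$ becomes a maximization of the summed excess over all facility placements.

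The heart of the argument, and the step I expect to be hardest, is that these per-PII excesses are coupled and cannot all be made large simultaneously. A large span $R-L$ inflates \glime{}'s cost, but such a span can only arise when the neighborhood of the PII is free of facilities, which forces $d_{\nim}(t)$ to be large for exactly those users; moreover the flanking facilities are shared between adjacent intervals, so a naive per-interval sum is hopelessly lossy. I would organize the bound as a global optimization: control the total two-sided contribution $\sum_i\big[\tfrac12(R-L)\cdot\tfrac1n-\int \min(t-L,R-t)\,dt\big]$ by the ``two clusters at the extremes'' configuration, whose value is the closed form $\tfrac14-\frac{1}{2n}+\frac{1}{4n^2}<\nicefrac14$, and then argue that the additional gains are harmless. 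Here two effects must be dispatched: interior facilities (which lower $d_{\nim}$ yet are ignored by the two-sided rule) only add $O(1/n^2)$ because every facility they place inside a PII simultaneously shrinks a neighbor's span; and the one-sided branch helps in at most one interval, since a one-sided PII requires an entire facility-free region to its right, contributing only $O(1/n)$ and being incompatible with the two-endpoint clustering that drives the main term. Carrying out this variational/exchange argument so that the \emph{coupled} maximum provably stays strictly below $\nicefrac14$ (the supremum being of order $\tfrac14-\Theta(1/n)$) is the technical crux.

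Finally I would assemble the pieces. The case $n=2$ I would also settle by hand, since a single PII makes the maximization two-dimensional and gives $\IC_2(\glime)=\nicefrac{3}{16}<\nicefrac14$, closing the one case where \dict{}'s cost is only $\nicefrac14$. Together with the upper bound $\IC_n(\glime)<\nicefrac14$ for general $n$ and the lower bounds $\IC_n(\dict)\ge\nicefrac14$ recalled above, this yields $\IC_n(\dict)>\IC_n(\glime)$ for every $n$; for $n\ge 3$ the stronger bound of Lemma \ref{lemma:dict high intervention cost} leaves additional slack, so even a cruder upper bound on $\IC_n(\glime)$ than $\nicefrac14$ would suffice there.
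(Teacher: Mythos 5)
Your plan hinges on a single quantitative claim, that $\IC_n(\glime)<\nicefrac{1}{4}$ for all $n$, and that claim is false once $n\geq 5$; this is a genuine gap, not a fixable technicality. Consider the profile in which some players locate at $\frac{1}{n}$ (the center of the first PII $(\frac{1}{2n},\frac{3}{2n})$) and the remaining players at $1-\frac{1}{n}$ (the center of the last PII). For the first PII we then have $\bl l=\emptyset$ and $\bl r=\{1-\frac{1}{n}\}$, so the one-sided branch (Line \ref{alg:clime l or r} of Algorithm \ref{alg:glime}) ships its users, whose nearest facility is at distance at most $\frac{1}{2n}$, all the way across the segment; ignoring the $O(\epsilon)$ correction, this PII alone contributes excess
\[
\int_{\nicefrac{1}{2n}}^{\nicefrac{3}{2n}}\left(\left(1-\tfrac{1}{n}-t\right)-\abs{t-\tfrac{1}{n}}\right)dt=\frac{1}{n}-\frac{9}{4n^2},
\]
and the last PII contributes the same by symmetry. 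The middle PIIs are two-sided with span $1-\frac{2}{n}$, and a direct computation gives their total excess as $\frac{(n-3)(n-2)}{2n^2}-\left(\frac{1}{4}-\frac{1}{n}+\frac{3}{4n^2}\right)=\frac{1}{4}-\frac{3}{2n}+\frac{9}{4n^2}$. Summing, the excess of this profile is $\frac{1}{4}+\frac{1}{2n}-\frac{9}{4n^2}$, which exceeds $\nicefrac{1}{4}$ exactly when $2n>9$, i.e., for every $n\geq 5$ (taking $\epsilon$ small). So the ``variational/exchange argument'' you defer cannot be carried out: the target inequality itself is wrong, and indeed the paper never asserts it (its own Proposition \ref{lemma:glime ic lower} already signals that $\IC_n(\glime)$ crowds $\nicefrac{1}{4}$ from below, with no room for the one-sided effect you neglect).

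The breakdown is precisely at the step you flagged and then dismissed: the one-sided branch does not help ``in at most one interval'' with only harmless $O(1/n)$ gain, and it is not incompatible with endpoint clustering. Both extreme PIIs can be one-sided simultaneously, each contributes $\Theta(1/n)$ --- the same order as the deficit $\nicefrac{1}{4}-(\text{two-sided optimum})=\Theta(1/n)$ --- and hiding the facilities inside the extreme PIIs (so that \nim{} is cheap there while \glime{} routes across the whole segment) is perfectly compatible with the two-cluster structure that drives your main term. Tellingly, your own figure $\IC_2(\glime)=\frac{3}{16}$ is attained by exactly this mechanism (facilities at $0$ and $\frac{1}{2}$, the latter inside the unique PII), not by any two-sided configuration, which caps at $\frac{5}{64}$. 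The paper avoids all of this by proving a much weaker upper bound with $\Theta(1)$ slack: it shows the excess of every profile is strictly below $\frac{1}{2}-\frac{3}{4n}+\frac{1}{4n^2}$ --- decomposing the PII excess into the \glime{} travel term and the \nim{} cost term, bounding the first by $\frac{n-1}{2n}$ and the second from below via Claim \ref{koneplayer_opt}, and checking the two extremes cannot occur at the same profile --- and then compares against the matching \dict{} lower bound of Lemma \ref{lemma:dict high intervention cost} rather than against $\nicefrac{1}{4}$. If you want to salvage your write-up, you must switch to that comparison for $n\geq 3$ (keeping your $\nicefrac{1}{4}$ argument only for $n=2$), since configurations like the one above live strictly between $\nicefrac{1}{4}$ and $\frac{1}{2}-\frac{3}{4n}+\frac{1}{4n^2}$.
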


\section{Discussion}
\label{sec:discussion} 
Another interesting solution concept that can be employed, if the domain permits it, is to allow the mediator to direct users in some cases with w.p. less than 1. Namely, to skip the players offers and avoid producing a recommendation. \citeauthor{ben2018game} \shortcite{ben2018game} implement this idea while adopting an axiomatic approach to player fairness, and introduces a mediator that induces a PNE for games with any user space (e.g., $k$-dimensional space or even non-metric space). However, they do not discuss user welfare. This suggests one should seek mediators that balance user welfare and player welfare simultaneously, hopefully by not intervening too much. 

{\ifnum\Includeackerc=1{
\section*{Acknowledgments}\label{sec:Acknowledgments}
This project has received funding from the European Research Council (ERC) under the European Union's Horizon 2020 research and innovation programme (grant agreement n$\degree$  740435).}\fi}


{\ifnum\Includeappendix=1{ 
\appendix
\onecolumn

\section{Omitted proofs from Section \ref{sec:Mathematical Formulation}}

\begin{proof}[Proof of Lemma \ref{lemma:dict high intervention cost}]
Let $\delta > 0$ be an arbitrary small constant, and let $\bl s^{\delta}=(s_1^\delta,s_2^\delta,\dots, s_n^\delta)$ denote a strategy profile such that $s^\delta_1 = \frac{1}{2n}$ and $s_i^{\delta} = \frac{2i-1}{2n}+\delta$ for $i\in\{2,\dots,n\}$.  Since $s_1=o^n_1$, all users are directed to $s_1$; hence,
\[
\SC(\dict,\bl s^\delta) = \frac{1}{2}\left( \frac{1}{2n} \right)^2+\frac{1}{2}\left( 1-\frac{1}{2n} \right)^2=\frac{1}{2}-\frac{1}{2n}+\frac{1}{4n^2}.
\]
In addition, due to the continuity of the social cost, $\SC(\nim,\bl s^\delta)=\SC(\nim,\bl o^n)+O(\delta)$ where $O(\delta)$ approaches 0 as $\delta $ decreases; Therefore,
\[
\SC(\dict,\bl s^\delta) - \SC(\nim,\bl s^\delta)=\frac{1}{2}-\frac{3}{4n}+\frac{1}{4n^2}-O(\delta).
\]
Finally, 
\begin{align*}
\IC_n(\dict)&=\sup_{\bl s\in S}  \left\{  \SC(\dict,\bl s) - \SC(\nim,\bl s) \right\}\\
&\geq \lim_{\delta \rightarrow o^+}
\left\{ \SC(\dict,\bl s^\delta) - \SC(\nim,\bl s^\delta) \right\}
\\
&=\lim_{\delta \rightarrow 0^+}\left(\frac{1}{2}-\frac{3}{4n}+\frac{1}{4n^2}-O(\delta)\right)=\frac{1}{2}-\frac{3}{4n}+\frac{1}{4n^2}.
\end{align*}
\end{proof}

\section{Omitted proofs from Section \ref{sec:lime}}

\begin{proof}[Proof of Lemma \ref{notininterval}]
Let $\bl s$ be an arbitrary profile, and assume $s_j \in\left(o^n_i,o^n_{i+1}  \right)$ holds for some $j$ and $i$. For simplicity, we assume that $\epsilon < \frac{1}{3}$. In addition, for every strategy profile $\bl s$, let $\epsilon(\bl s)$ denote the expected amount of users directed according to Line \ref{alg:lime l or r} of \lime{} (i.e. sent w.p. $\epsilon$ to a random player), and notice that $\epsilon(\bl s) < \epsilon$.

We prove the lemma by showing that player $j$ must have a beneficial deviation. We proceed by an exhaustive case analysis:
\begin{itemize}
\item If both $o^n_i$ and $o^n_{i+1}$ are occupied under $\bl s$: player $j$ gets only randomly-sent users; thus, by moving to $o^n_{i+1}$ her payoff strictly increases.
\item If exactly one of $o^n_i,o^n_{i+1}$ is occupied under $\bl s$ (w.l.o.g. $o^n_i$): we have two sub-cases. If 
$\bl s \cap \left(o^n_{i+1} ,1 \right] \neq  \emptyset$, the player controlling the rightmost facility in  $\left( o^n_i,o^n_{i+1} \right)$ can improve her payoff if she deviates to $o^n_{i+1}$.

Else, if $\bl s \cap \left(o^n_{i+1} ,1 \right] = \emptyset$, let w.l.o.g. $s_j$ be the rightmost location under $\bl s$,
\[
\pi_j(\bl s) = \frac{1}{2n}+\frac{n-i-1}{n}(1-\epsilon)+\frac{\epsilon(\bl s)}{n}.
\]
Let $s_j'=o^n_{i+1}$. It holds that
\[
\pi_j(\bl s_{-j},s_j') =  \frac{1}{2n}+\frac{n-i-1}{n}(1-\epsilon)+\frac{1}{2n}+\frac{\epsilon(\bl s_{-j},s_j')}{n} >\pi_j(\bl s).
\]

\item  If both of $o^n_i,o^n_{i+1}$ are empty under $\bl s$: we have three sub-cases.
\begin{enumerate}
\item If $s_j$ has both left and right neighbors outside $\left(o^n_i,o^n_{i+1}  \right)$: Let $j'$ denote a player with the leftmost facility in $\left(o^n_i,o^n_{i+1}  \right)$. We now show that $j'$ has a beneficial deviation. If the facility of player $j'$ is lone, then she can increase her payoff by deviating to $o^n_i$. Otherwise, if player $j's$ facility is paired, let $\alpha$ and $\beta$ denote the user mass served by the facilities in $s_{j'}$ from left and right correspondingly. If the facility of player $j'$ has a right neighbor inside $\left(o^n_i,o^n_{i+1}  \right)$, then $\beta=0\neq \alpha$, and she could deviate to $o^n_i$ and increase her payoff. Alternatively, if her facility does not have a right neighbor inside $\left(o^n_i,o^n_{i+1}  \right)$, then $\pi_{j'}(\bl s) = \frac{\alpha+\beta}{2}$, and by deviating to one  of $\left\{o^n_i,o^n_{i+1}  \right\}$ she could get at least
\[
\max\left\{\alpha+\frac{1}{2n}  ,\beta+\frac{1}{2n}\right\} > \pi_{j'}(\bl s).
\]
\item If $s_j$ has only left (symmetric for right) neighbor outside $\left(o^n_i,o^n_{i+1}  \right)$: Let $j'$ denote a player with the leftmost facility in $\left(o^n_i,o^n_{i+1}  \right)$. If her facility has a right neighbor inside $\left(o^n_i,o^n_{i+1}  \right)$, then she can increase her payoff by deviating to $o^n_i$. Otherwise, it does not have a right neighbor inside that interval. If her facility is lone, she can again increase her payoff by deviating to $o^n_i$. If her facility is paired, let $\alpha$ and $\beta$ denote the user mass served by the facilities in $s_{j'}$ from left and right correspondingly, and let $\gamma$ denote the amount of randomly-directed users she obtains. Notice that $\pi_{j'}(\bl s) = \frac{\alpha+\beta}{2}+\gamma$. However, by deviating to one of $\left\{o^n_i,o^n_{i+1}  \right\}$ she could get at least
\[
\max\left\{
\alpha +\frac{1}{n}+\gamma 
,\beta +\frac{1}{2n}+\gamma - \frac{\epsilon}{n^2}
\right\} >\pi_{j'}(\bl s).
\]
\item Else, all the facilities are inside $\left(o^n_i,o^n_{i+1}  \right)$. If all the facilities are located in the same point, the payoff of each player is $\frac{1}{n}$. By deviates to one  of $\left\{o^n_i,o^n_{i+1}  \right\}$, player $j$  can get $\max\{\frac{2i-1}{2n},\frac{2i+1}{2n}\}$, which is greater than $\frac{1}{n}$ for every $i\in[n-1]$. Otherwise, if the facilities are scattered over two points or more inside the interval, let $j'$ denote the player with the rightmost facility. Clearly, she can increase her payoff by deviating to $o^n_{i+1}$.
\end{enumerate}
\end{itemize}
We conclude that if $\bl s$ in a PNE profile, then all PIIs are free of facilities.
\end{proof}

\begin{proof}[Proof of Theorem \ref{thm: lime unique pne}]
Let $\bl s=(s_1,\dots s_n)$ be a profile such that $s_1 \leq s_2 \leq \cdots \leq s_n$, let $o^n_i = \frac{2i-1}{2n}$ for $i\in [n]$ and assume $\bl s \neq (o^n_1,\dots ,o^n_n)= \bl o^n$ and that $\bl s$ is in equilibrium. The proof consists of several steps, where in each step we further characterize $\bl s$, relying on the fact that it is a PNE. Finally, we show that $\bl s$ must be $\bl o^n$.

\noindent \textit{Step 1:} Under $\bl s$, every player gets exactly $\frac{1}{n}$. 
Otherwise, there exists a player $j$ with payoff less than $\frac{1}{n}$. Notice that $\bl o^n \setminus \bl s \neq \emptyset$ since $\bl s \neq \bl o^n$; thus, there exists a free location in $\bl o^n$, and player $j$ has a beneficial deviation: if she deviates to that location her payoff becomes greater or equal to $\frac{1}{n}$.

\noindent \textit{Step 2:} It holds that $\bl s \cap  \cup_{i=1}^{n-1} \left(o^n_i,o^n_{i+1}  \right) =\emptyset$. This follows immediately from Lemma \ref{notininterval}.

\noindent \textit{Step 3:} It holds that $\bl s \cap  \left[0,o^n_1  \right) =\emptyset$ and $\bl s \cap  \left(o^n_{n} ,1 \right] =\emptyset$.
Otherwise, assume by contradiction that $\bl s \cap  \left[0,o^n_1  \right) \neq \emptyset$ (symmetric for the other argument), and denote by $j$ a player such that $s_j \in  \left[0,o^n_1  \right)$. If player $j$'s facility is paired (namely, there is another facility of different player in $s_j$), these two facilities must obtain the same user mass from both sides, or otherwise she has a beneficial deviation. However, if this is true, her payoff is  $s_j<\frac{1}{2n}$ (with, perhaps, an additional $\frac{epsilon}{n}$), which contradicts Step 1. If player $j$'s facility is lone, we have two sub-cases. If $\bl s_{-j} \cap \left[0,o^n_1 \right]=\emptyset$ (i.e. no other facility is located in the interval $\left[0,o^n_1 \right]$),  player $j$ can increase her payoff by deviating to $o^n_1$. Else, if $\bl s_{-j} \cap \left[0,o^n_1 \right]\neq \emptyset$, then there is a facility in the segment $\left[0,o^n_1\right]$ that attracts a user mass of at most $\frac{1}{2n}$, which contradicts Step 1.

\noindent\textit{Step 4:}
We have learned so far that under $\bl s$ all the facilities are located in a subset of the socially optimal locations $\bl o^n$. In this step we show that if there exists $i\in[n-1]$ such that $o^n_i$ is not occupied, then $o^n_{i+1}$ must be occupied with more than one facility. First, assume that there exists $i\in[n-1]$ such that both $o^n_i$ and $o^n_{i+1}$ are not occupied. In this case, any player that moves to $o^n_i$ obtains a payoff that is strictly greater than $\frac{1}{n}$. 
 Second, if there exists $i\in[n-1]$ such that $o^n_i$ is not occupied, and $o^n_{i+1}$ is occupied with a lone facility, then the corresponding player obtains a payoff that is strictly greater than $\frac{1}{n}$. The mirror argument, which holds symmetrically, implies that for every $i\in \{2,\dots n\}$, if $o^n_{i+1}$ is not occupied then $o^n_i$ is occupied with more than one facility. Together, it means that for every $i\in \{2...n-1\}$, if $o^n_i$ is not occupied, then $o^n_{i-1}$ and $o^n_{i+1}$ must be occupied  with multiple facilities.

\noindent\textit{Step 5:} Both $o^n_1$ and $o^n_{n}$ are occupied. Otherwise, assume that $o^n_1$ is not occupied (symmetric for $o^n_{n}$). Since $o^n_{2}$ must be occupied with $k\geq 2$ facilities (Step 4), we have two options: 
\begin{itemize}
\item If $ \bl o^n \setminus \{ o^n_1 \}  =\bl s$, i.e. the only empty location in $\bl o^n$ under $\bl s$ is $o^n_1$, then there exists a player that gets strictly greater than $\frac{1}{n}$ (her share plus $\frac{\epsilon}{n}$). This in the only place in the proof that requires $n\geq 3$.
\item Else, if there is $i\in\{3,\dots,n\}$ such that $o^n_i \notin \bl s$, one of the players in $o^n_{2}$ can move her facility to $o^n_i$ and get a payoff that is strictly greater than $\frac{1}{n}$ (her share plus $\frac{\epsilon}{n}$). 
\end{itemize}

\noindent\textit{Step 6:} To obtain a contradiction, we must show that each location of $\bl o^n$ is occupied with a facility under $\bl s$. Let $\alpha_{mul}$ denote the number of locations in $\bl o^n$ with multiple facilities under $\bl s$. In addition, let $\alpha_{empty}$ denote the number of locations in $\bl o^n$ which are not occupied with facilities under $\bl s$. By the way we defined $\alpha_{mul}$ and $\alpha_{empty}$, it holds that $\alpha_{mul}  \leq \alpha_{empty}$.

On the other hand, Step 5 suggests that $o^n_1$ and $o^n_{n}$ must be occupied, and Step 4 implies that if $o^n_i$ is not occupied for $i\in \{2...n-1\}$, then $o^n_{i-1}$ and $o^n_{i+1}$ must be occupied with multiple facilities. As a result, $\alpha_{mul} \geq \alpha_{empty}+1$, which is clearly a contradiction.

We obtained $\bl s = \bl o^n$, as required.
\end{proof}

\begin{proof}[Proof of Lemma \ref{lemma:lime lower bound}]
Let $n\geq 3$, and let $0<\delta<\frac{1}{2n}$ be an arbitrary constant. denote $\bl s^\delta =( s^\delta_1,\dots  s^\delta_n)$ such that for every player index $i \leq \ceil{ \frac{n}{2}}$, $s_i^{\delta}=\frac{1}{2n}+\delta$, and $ s^\delta_i =\frac{2n-1}{2n}-\delta$ otherwise if $\ceil{\frac{n}{2}}<i\leq n$. To ease notation, let $\SC_A(\lime,\bl s^{\delta})$ denote the social cost when restricted to the users in a $A\subset [0,1]$ only, and similarly for $\SC_A(\nim,\bl s^{\delta})$. Due to the \lime{} operates, it follows that
\[
\SC(\lime,\bl s^{\delta}) - \SC(\nim,\bl s^{\delta})
=
\SC_A(\lime,\bl s^{\delta}) - \SC_A(\nim,\bl s^{\delta})
\]
where $A=\left(\frac{1}{2n},\frac{3}{2n}\right)\cup \left(\frac{2n-3}{2n},\frac{2n-1}{2n}\right)$. More elaborately, 
\lime{} coincides with \nim{} for every user outside of $A$, and hence it suffices to examine the social cost of the users in $A$ only. 

Under \lime{}, every user $t\in\left(\frac{1}{2n},\frac{3}{2n}\right)$ covers a distance of $\frac{2n-1}{2n}-t -\delta$. By integrating over all the users in $\left(\frac{1}{2n},\frac{3}{2n}\right)$ we get 
\begin{align}
&\int_{\nicefrac{1}{2n}}^{\nicefrac{3}{2n}}\left(\frac{2n-1}{2n}-\delta-t\right)dt=\left[\left(\frac{2n-1}{2n}-\delta\right)t -\frac{t^2}{2} \right]\bigg \mid_{\nicefrac{1}{2n}}^{\nicefrac{3}{2n}} \nonumber \\
& = \left(\frac{2n-1}{2n}-\delta\right)\frac{3}{2n} -\frac{9}{8n^2}-\left(\frac{2n-1}{2n}-\delta\right)\frac{1}{2n}+\frac{1}{8n^2}\nonumber \\
& = \frac{2n-3}{2n^2}-\frac{\delta}{n}.
\end{align}
In addition, w.p. $\epsilon$ every user in the interval is served by a random facility, since the PII  $\left(\frac{1}{2n},\frac{3}{2n}\right)$ does not have a facility from its left (Line \ref{alg:lime l or r} of Algorithm \ref{alg:lime}). $\bl s$ contains only two distinct locations; thus, w.p. $\frac{\epsilon\ceil{\frac{n}{2}}}{n}$ \lime{} directs every $t\in\left(\frac{1}{2n},\frac{3}{2n}\right)$ as \nim{}. Due to symmetry (of the uniform distribution), the users in $\left(\frac{2n-3}{2n},\frac{2n-1}{2n}\right)$ cover the same distance. 
Overall,
\begin{equation}
\label{eq:lower bound lime ic 1}
\SC_A(\lime,\bl  s^\delta) = \left( 1-  \frac{\epsilon}{2} \right)\left( \frac{2n-3}{n^2}-\frac{2\delta}{n} \right)+  \frac{\epsilon}{2}\SC_A(\nim,\bl  s^\delta).
\end{equation}

On the other hand, under $\nim$ and the profile $\bl  s^\delta$, the users in $\left(\frac{1}{2n},\frac{3}{2n}\right)$ travel a total distance of
\begin{align}
&\int_{\nicefrac{1}{2n}}^{\nicefrac{1}{2n}+\delta} (\frac{1}{2n}+\delta-t)dt + \int_{\nicefrac{1}{2n}+\delta}^{\nicefrac{3}{2n}}(t-(\frac{1}{2n}+\delta))dt \nonumber\\
& = \left[\left(\frac{1}{2n}+\delta  \right)t-\frac{t^2}{2} \right]\bigg \mid_{\nicefrac{1}{2n}}^{\nicefrac{1}{2n}+\delta}+\left[\frac{t^2}{2}-t(\frac{1}{2}+\delta) \right]\bigg \mid^{\nicefrac{3}{2n}}_{\nicefrac{1}{2n}+\delta}  \nonumber\\
& = \frac{1}{2}\left(\frac{1}{2n}+\delta \right)^2-\frac{1}{4n^2}-\frac{\delta}{2n}+\frac{1}{8n^2}+\frac{9}{8n^2}-\frac{3}{2n}\left(  \frac{1}{2n}+\delta \right)+\frac{1}{2}\left(\frac{1}{2n}+\delta \right)^2\nonumber\\
& = \left(\frac{1}{2n}+\delta  \right)^2-\frac{1}{8n^2}-\frac{\delta}{2n}+\frac{3}{8n^2}-\frac{3\delta}{2n} \nonumber\\
& =\left(\frac{1}{2n}+\delta  \right)^2+\frac{1}{4n^2}-\frac{2\delta}{n} \nonumber \\
& =\frac{1}{2n^2}-\frac{\delta}{n}+\delta^2\nonumber.
\end{align}
A symmetric argument applies for the users in $\left(\frac{2n-3}{2n},\frac{2n-1}{2n}\right)$ ; thus
\begin{equation}
\label{eq:lower bound lime ic 3}
\SC_A(\nim,\bl  s^\delta) = \frac{1}{n^2}-\frac{2\delta}{n}+2\delta^2.
\end{equation}
By combining Equations (\ref{eq:lower bound lime ic 1}) and (\ref{eq:lower bound lime ic 3}) we get
\begin{equation}
\label{eq:lower bound lime ic before final}
\SC_A(\lime,\bl  s^\delta)-\SC_A(\nim,\bl  s^\delta)= \left(1-  \frac{\epsilon}{2}\right)\left(\frac{2n-4}{n^2}-2\delta^2\right).
\end{equation}
The proof is completed by taking the limit,
\[
\IC_n(\lime)=\sup_{\bl s\in S}  \left\{  \SC(\lime,\bl s) - \SC(\nim,\bl s) \right\} \geq \lim_{\delta \rightarrow 0^+} \left(\SC(\lime,\bl  s^\delta)-\SC(\nim,\bl  s^\delta)\right)=\left(1-  \frac{\epsilon}{2}\right)\left(\frac{2n-4}{n^2}\right).
\]
\end{proof}
\section{Proof of Theorem  \ref{thm:LIME-IC}}
\begin{proof}[Proof of Theorem \ref{thm:LIME-IC}]

Let $o^n_i = \frac{2i-1}{2n}$ for $i\in [n]$, and denote
\[
I_{\bl s} \defeq \left\{j\in [n-1] : \bl s\cap \left(o^n_{j},o^n_{j+1} \right) \neq \emptyset   \right\}  , \quad 
A_{\bl s} \defeq \bigcup_{j\in I_{\bl s}} \left(o^n_{j},o^n_{j+1} \right).
\]
By definition\footnote{to ease readability, we omit the analysis of user \lime{} directs to random facilities, as written in the body of the paper.} of \lime{}, it suffices to examine all users $t$ such that $t\in A_{\bl s}  $; hence, it holds that 
\begin{align} 
\label{eq: lime ic main 1}
\IC_n(\lime)&=\sup_{\bl s \in S} \left\{ \SC(\lime,\bl s)-\SC(\nim,\bl s) \right\} \nonumber \\
&=\sup_{\bl s \in S}\left\{ \int_{t\in A_{\bl s}}  \sum_{i=1}^n\lime(\bl s,t)_i \cdot\abs{s_i-t}dt -\int_{t\in A_{\bl s}}\min_{i' \in [n]} \abs{s_i'-t}  dt  \right\}.
\end{align}
Let 
\[
f(\bl s)\defeq \int_{t\in A_{\bl s}}  \sum_{i=1}^n\lime(\bl s,t)_i \cdot\abs{s_i-t}dt -\int_{t\in A_{\bl s}}\min_{i' \in [n]} \abs{s_i'-t}  dt.
\]
We proceed by showing that $f(\bl s) \leq \frac{2n-3.5}{n^2}$ for every $\bl s \in S$. 

$\bullet$ First, we analyze the cases of $\abs{I_{\bl s}}=0$ and $\abs{I_{\bl s}}=1$. Line \ref{alg:lime outside} in Algorithm \ref{alg:lime} implies that for every $\bl s$ such that  $\abs{I_{\bl s}}=0$, \lime{} operates exactly as \nim{}; hence, $f(\bl s)=0$ for such profiles.
In case $\abs{I_{\bl s}}=1$, denote the only element in $I_{\bl s}$ by $j$. If $\bl s \cap \left(o^n_{j},o^n_{j+1} \right) = \bl s$, then due to Line \ref{alg:lime all in one} in Algorithm \ref{alg:lime}   \lime{} operates as \nim{}, and hence $f(\bl s)=0$.
Otherwise, $\bl s \cap \left(o^n_{j},o^n_{j+1} \right) \neq \emptyset$, and \lime{} directs all the users in $\left(o^n_{j},o^n_{j+1} \right)$ to their nearest facility outside that interval (Line \ref{alg:lime l and r}). Since we wish to bound $f$, we can w.l.o.g. assume that $\bl s \setminus \left(o^n_{j},o^n_{j+1} \right)$  contains a single facility, as otherwise $f$ can only be smaller; thus, denote the location of the only facility outside $\left(o^n_{j},o^n_{j+1} \right)$ by $x$. It holds that
\[
f(\bl s)=\int\displaylimits_{t\in (o^n_{j},o^n_{j+1})} \left(\abs{t-x}  -\min_{s_i \in\bl s} \abs{s_i-t}\right)dt \leq \int\displaylimits_{t\in (o^n_{j},o^n_{j+1})}\abs{t-x}dt -\frac{1}{4n^2}.
\]
Due to monotonicity, the worst case (i.e., highest value of $f$) is obtained for $x=1$ and $j=1$, or equivalently $x=0$ and $j=n-1$; thus,
\begin{align*}
f(\bl s) \leq \left(\frac{1}{n}\right)^2\frac{1}{2} +\frac{1}{n}\left( 1-\frac{3}{2n} \right)-\frac{1}{4n^2}  = \frac{2n-2.5}{2n^2} \leq \frac{2n-3.5}{n^2}.
\end{align*}
$\bullet$ We are left to analyze the case of $\abs{I_{\bl s}}\geq 2$, which turns to be trickier. Notice that 
\begin{equation}
\label{eq:using opt to bound ic ads}
\IC_n(\lime) <  \sup_{\bl s \in S} \left\{\int_{t\in A_{\bl s}}  \sum_{i=1}^n\lime(\bl s,t)_i \cdot\abs{s_i-t} dt -\abs{I_{\bl s}}\frac{1}{4n^2}\right\},
\end{equation}
where the last inequality follows from the bound on optimal social cost, $\nicefrac{1}{4n}$. Due to the definition of \lime{}, 
\begin{equation}
\label{eq: lime ic main 1.5}
\text{RHS of Eq. }(\ref{eq:using opt to bound ic ads})=
\sup_{\bl s \in S}\left\{\sum_{j\in I_{\bl s}}\mkern5mu\int\displaylimits_{t\in (o^n_{j},o^n_{j+1})}   \min_{i \in\bl s\setminus (o^n_{j},o^n_{j+1})} \abs{s_i-t}dt-\abs{I_{\bl s}}\frac{1}{4n^2}\right\} .
\end{equation}
Clearly, if under a strategy profile there are facilities outside the PIIs, the term inside the supremum can only decrease; thus, it is suffice to consider strategy profiles with facilities inside PIIs only. Next,

\begin{proposition}
\label{prop:median in interval}
Let $\bl s$ be a strategy profile, and let $(a,b)$ be an interval such that $\bl s \cap (a,b)=\emptyset$. It holds that 
\[
\int_a^b \min_{i\in [n]}\abs{s_i-t}dt \leq (b-a)\min_{i\in [n]}\abs{s_i-\frac{a+b}{2}}.
\]
\end{proposition}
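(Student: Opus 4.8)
The plan is to exploit the geometry forced by the emptiness of $(a,b)$: since $\bl s \cap (a,b)=\emptyset$, every facility $s_i$ satisfies either $s_i \le a$ or $s_i \ge b$. Consequently, for $t\in(a,b)$ the distance to a left facility equals $t-s_i$ and to a right facility equals $s_i-t$, each of which is affine in $t$. Writing $h(t)\defeq \min_{i\in[n]}\abs{s_i-t}$, I would observe that on $[a,b]$ the function $h$ is the pointwise minimum of finitely many affine functions of $t$, and is therefore \emph{concave} on $[a,b]$.

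The key step is a midpoint bound for concave functions. Setting $m\defeq\frac{a+b}{2}$, I would invoke the existence of a supporting line of the concave function $h$ at the interior point $m$: there is a constant $c$ with $h(t)\le h(m)+c(t-m)$ for all $t\in[a,b]$. Integrating this inequality over $[a,b]$ and using that the affine correction is odd about the midpoint, $\int_a^b (t-m)\,dt=0$, gives
\[
\int_a^b h(t)\,dt \le \int_a^b \bigl( h(m)+c(t-m) \bigr)\,dt = (b-a)\,h(m).
\]
Since $h(m)=\min_{i\in[n]}\abs{s_i-\frac{a+b}{2}}$ by definition, this is exactly the claimed inequality.

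The only point requiring mild care is the appeal to a supporting line, because $h$, being a minimum of affine pieces, may fail to be differentiable at the single kink where the nearest facility switches from the left block to the right block. I expect this to be the main (though minor) obstacle, and it is handled by the standard fact that a concave function admits a supergradient, hence a supporting line, at every interior point of its domain, so differentiability is never needed. The degenerate situation in which all facilities lie on one side of $(a,b)$ is subsumed automatically: there $h$ is affine, the inequality holds with equality, and the symmetry argument still applies. As an alternative to the concavity route, one could split $[a,b]$ at the kink and bound the two resulting trapezoidal pieces by direct computation, but the supporting-line argument is cleaner and avoids case analysis.
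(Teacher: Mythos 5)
Your proof is correct, but it reaches the inequality by a different route than the paper. The paper's proof never mentions concavity: it lets $\alpha$ be the facility attaining $\min_{i}\abs{s_i-\frac{a+b}{2}}$, observes that $\alpha\notin(a,b)$ (WLOG $\alpha\geq b$), bounds the integrand pointwise by $\abs{\alpha-t}=\alpha-t$, and computes $\int_a^b(\alpha-t)\,dt=(b-a)\left(\alpha-\frac{a+b}{2}\right)$ exactly, with a short case split according to whether some facility to the left of $a$ also serves part of $(a,b)$. Your argument is an abstraction of exactly this step: the paper's majorant $t\mapsto \alpha-t$ is precisely a supporting line of $h(t)=\min_i\abs{s_i-t}$ at the midpoint (it has slope $-1$, dominates $h$, and touches it at $m$ by the choice of $\alpha$), so both proofs ultimately rest on the same identity that the average of an affine function over $[a,b]$ equals its value at the midpoint. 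What your route buys: no case analysis, no WLOG on the side of $\alpha$, and a statement that generalizes verbatim to any concave integrand; the supergradient handles the single kink of $h$ without any explicit bookkeeping. What the paper's route buys: it is entirely self-contained and elementary, needing only a pointwise comparison with one distance function and a one-line integral, with no appeal to the existence of supergradients. One trivial point you leave implicit is the degenerate case $a=b$ (where the midpoint is not interior), but there both sides vanish, so nothing is lost.
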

The proof of this proposition appears below. By invoking Proposition \ref{prop:median in interval} we have 
\begin{align}
\label{eq: lime ic main 2}
\text{Eq. (\ref{eq: lime ic main 1.5})} &\leq \sup_{\bl s \in S}\sum_{j\in I_{\bl s}}\frac{1}{n} \min_{i \in I_{\bl s}\setminus \{j\}} \left(\abs{s_i-\frac{j}{n}}-\frac{1}{4n} \right) \nonumber \\
& \leq \sup_{\bl s \in S}\sum_{j\in I_{\bl s}} \frac{1}{n}\min_{i \in I_{\bl s}\setminus \{j\} } \left(\abs{\frac{i}{n}-\frac{j}{n}} +\frac{1}{2n}-\frac{1}{4n}  \right),
\end{align}
where the last inequality is due to triangle inequality, since the distance between every facility $s_i$ and the center of the interval $\left(\frac{2l-1}{2n},\frac{2l+1}{2n}  \right)$ it relies in is at most half of the interval size. Since there is an onto mapping from $S$ to $2^{[n-1]}$ (as there $n$ players but $n-1$ intervals),
\begin{align}
\label{eq: lime ic main 3}
\text{Eq. (\ref{eq: lime ic main 2})} = \max_{D \subseteq [n-1]}\sum_{j\in D} \frac{1}{n}\min_{i \in D\setminus \{j\} } \left(\abs{\frac{i}{n}-\frac{j}{n}} +\frac{1}{4n} \right)=\frac{1}{n^2}\max_{D \subseteq 2^{[n-1]}} \sum_{j\in D}\min_{i \in D\setminus \{j\} }\left( \abs{i-j}+\frac{1}{4} \right).
\end{align}
Next, we prove the following lemma:
\begin{lemma}
\label{lemma:aux for lime ic}
For every $D \subseteq {[n-1]}$, it holds that 
\[
\sum_{j\in D}\min_{i \in D\setminus \{j\} }\left( \abs{i-j}+\frac{1}{4} \right)\leq 2n-3.5
\]
\end{lemma}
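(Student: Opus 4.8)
The plan is to strip away the additive constant and reduce the claim to a clean statement about nearest-neighbour distances inside $D$. Since $\abs{i-j}+\frac14$ is minimised over $i$ exactly when $\abs{i-j}$ is, for each $j\in D$ we have $\min_{i\in D\setminus\{j\}}(\abs{i-j}+\frac14)=\frac14+d_j$, where $d_j\defeq\min_{i\in D\setminus\{j\}}\abs{i-j}$ is the distance from $j$ to its nearest other element of $D$. Writing $k=\abs{D}$, the target inequality becomes $\frac{k}{4}+\sum_{j\in D}d_j\le 2n-3.5$, so the whole task is to control the nearest-neighbour sum $\sum_{j\in D}d_j$.

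Next I would introduce coordinates. Write $D=\{a_1<a_2<\dots<a_k\}\subseteq[n-1]$ and set the gaps $g_\ell=a_{\ell+1}-a_\ell$ for $\ell\in[k-1]$, so that $S\defeq a_k-a_1=\sum_{\ell=1}^{k-1}g_\ell\le n-2$ (because $a_1\ge1$ and $a_k\le n-1$). The nearest neighbour of each endpoint uses a single adjacent gap while each interior point uses the smaller of its two adjacent gaps, giving the exact expression $\sum_{j\in D}d_j=g_1+g_{k-1}+\sum_{m=2}^{k-1}\min(g_{m-1},g_m)$. The naive bound $\min(g_{m-1},g_m)\le g_{m-1}$ only yields $\sum_j d_j\le 2S$, and this is where the argument must be sharpened: $2S$ is tight for $k=2$ but too weak once $k\ge3$.

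The key step is to bound the interior sum with the averaging inequality $\min(a,b)\le\frac{a+b}{2}$. Telescoping, $\sum_{m=2}^{k-1}\tfrac{g_{m-1}+g_m}{2}=S-\tfrac{g_1+g_{k-1}}{2}$, so that $\sum_{j\in D}d_j\le S+\tfrac{g_1+g_{k-1}}{2}$. I would then split on $k$. For $k=2$ there is a single gap, $\sum_j d_j=2S\le2(n-2)$, and the total is at most $\frac12+2n-4=2n-3.5$; this is the extremal configuration $D=\{1,n-1\}$ and shows the bound is attained. For $k\ge3$ the gaps $g_1$ and $g_{k-1}$ are distinct, whence $g_1+g_{k-1}\le S$ and $\sum_j d_j\le\frac32 S\le\frac32(n-2)$. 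Combining this with $k\le n-1$ gives $\frac{k}{4}+\sum_j d_j\le\frac{n-1}{4}+\frac{3(n-2)}{2}=\frac{7n-13}{4}\le\frac{8n-14}{4}=2n-3.5$, completing the proof.

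I expect the only real obstacle to be recognising that the crude double-counting bound $\sum_j d_j\le 2S$ is insufficient for $k\ge3$ and replacing it by the averaging estimate that trades total gap length against the number of chosen intervals; once that refinement is in place, the two regimes $k=2$ and $k\ge3$ both reduce to elementary algebra, with $D=\{1,n-1\}$ emerging as the unique tight instance.
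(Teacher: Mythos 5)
Your proof is correct, but it takes a genuinely different route from the paper's. The paper reduces the optimization to an edge-counting argument on a directed path graph $G=(V,E)$ with $V=\{v_1,\dots,v_{n-1}\}$: each term $\abs{i-j}$ is read as a shortest-path length, the set $D^*=\{v_1,v_{n-1}\}$ achieves value $\abs{E}+\tfrac12=2(n-2)+\tfrac12=2n-3.5$, and every interior vertex included in $D$ ``wastes'' one of its two outgoing edges (no nearest-neighbour path can pass through another element of $D$), so any $\abs{D}>2$ strictly loses: $f(D)\le\frac{\abs{D}}{4}+\abs{E}-(\abs{D}-2)<\abs{E}$. You instead work directly in gap coordinates: writing $\sum_j d_j=g_1+g_{k-1}+\sum_{m=2}^{k-1}\min(g_{m-1},g_m)$, bounding the interior minima by averages and telescoping to get $\sum_j d_j\le S+\frac{g_1+g_{k-1}}{2}$, then splitting on $k=2$ (where $2S\le 2(n-2)$ gives exactly $2n-3.5$) versus $k\ge3$ (where $g_1+g_{k-1}\le S$ gives $\sum_j d_j\le\frac32(n-2)$, and $\frac{n-1}{4}+\frac{3(n-2)}{2}=\frac{7n-13}{4}\le 2n-3.5$). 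Both arguments identify $D=\{1,n-1\}$ as the extremal configuration; yours is more elementary and, frankly, more airtight than the paper's somewhat informal ``loses one edge'' counting, while the paper's graph formulation makes the structural reason for extremality (interior vertices waste edges) more transparent. Your key observation that the naive bound $\sum_j d_j\le 2S$ fails precisely when $k\ge3$ (since $\frac34+2(n-2)>2n-3.5$) is exactly the sharpening needed, and the averaging step supplies it. The only degenerate point, shared with the paper, is that $\abs{D}\le1$ makes the inner minimum vacuous; in context $\abs{I_{\bl s}}\ge2$ is the relevant regime, so this is harmless.
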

The proof of this lemma appears below. By combining Lemma \ref{lemma:aux for lime ic} with Equation (\ref{eq: lime ic main 3}), we obtain
\begin{align}
\label{eq: lime ic main 4}
\text{Eq. (\ref{eq: lime ic main 3})} \leq  \frac{2n-3.5}{n^2}.
\end{align}
Overall, we conclude that $ \IC_n(\lime) < \frac{2n-3.5}{n^2}$.
\end{proof}

\begin{proof}[Proof of Proposition \ref{prop:median in interval}]
Let $\alpha$ such that $\alpha =\argmin_{s_i \in \bl s}\abs{s_i-\frac{a+b}{2}}$ for some $\alpha \geq 0$. Since $\bl s \cap (a,b)=\emptyset$,  $\alpha \notin (a,b)$  holds, so assume w.l.o.g. that $\alpha \geq b$. We continue by highlighting two cases. In case all the users in $(a,b)$ are served by the facility in $\alpha$, i.e., $\argmin_{s_i \in \bl s}\abs{s_i-t}=\alpha$ for every $t\in(a,b)$, we have
\begin{align*}
\int_a^b \min_{i\in [n]}\abs{s_i-t}dt &=\int_a^b \left(  \alpha-t\right)dt = (b-a)\alpha-\left(\frac{b^2-a^2}{2}\right)=(b-a)\left(  \alpha-\frac{a+b}{2}\right).
\end{align*}
Otherwise, if for some values of $t\in (a,b)$ it holds that $\beta =\argmin_{s_i \in \bl s}\abs{s_i-t}$ for $\beta\neq \alpha$ (and therefore $\beta \leq a$), we have
\begin{align*}
\int_a^b \min_{s_i \in \bl s}\abs{s_i-t}dt &=\int_a^b \min \{\abs{\alpha-t},\abs{\beta - t}\}dt<\int_a^b \left(  \alpha-t\right)dt ,
\end{align*} 
which is identical to the previous case.
\end{proof}

\begin{proof}[Proof of Lemma \ref{lemma:aux for lime ic}]
We solve the optimization problem by a reduction to graph problem. Consider the directed graph $G=(V,E)$, where $V=\{v_1,\dots,v_{n-1}\}$, and define 
\[
E=\bigcup_{i\in [n-2]}\left\{ (v_i,v_{i+1}),(v_{i+1},v_{i}) \right\}.
\]
Next, observe that the function given in the lemma is equivalent to the selection of $D\subseteq V$ as to maximize
\[
f(D)\defeq\sum_{v_j\in D}\min_{v_i \in D\setminus \{v_j\} }\left( \abs{i-j}+\frac{1}{4} \right)=\frac{\abs{D}}{4}+\sum_{v_j\in D}\min_{v_i \in D\setminus \{v_j\} } \abs{S(v_i,v_j)},
\]
where $S(v_i,v_j)$ is the set of edges corresponding to the shortest path from $v_i$ to $v_j$. Let $D^* = \{v_1,v_{n-1}\}$, and observe that $f(D^*)= \abs{E}+\frac{1}{2}$. In addition, notice that by selecting a vertex $v_i$ for $i\in \{2,3,\dots n-2\}$,  $S(v_i,v_j)$ contains exactly one of $\{(v_i,v_{i+1}),(v_i,v_{i-1})\}$. As a result, for every internal (i.e. one of $\{v_2,\dots,v_{n-2}  \}$) $D$ contains, it loses one edge from $E$; thus, for every $D$ of size greater than two (that must contain at least one of $\{v_2,\dots,v_{n-2}  \}$),
\[
f(D) \leq \frac{\abs{D}}{4}+ \abs{E} -(\abs{D}-2)=\abs{E}-\frac{3\abs{D}}{4}+2 \stackrel{\abs{D}>2}{<}\abs{E}<f(D^*).
\]
This completes the proof of the lemma.
\end{proof}

\section{Omitted proofs from Section \ref{sec:extensions}}

\begin{proof}[Proof of Proposition \ref{prop:two-player eq}]
This proposition is a special case of Proposition \ref{prop:2 players lambda eq}.
\end{proof}

\begin{claim}
\label{claim:no neutral}
There is no neutral mediator such that
\begin{itemize}
\item $(x,y)$ is a PNE for some $x,y \in [0,1]$ such that $x\neq y$
\item For every $z$, $(z,z)$ is not a PNE
\end{itemize}
\end{claim}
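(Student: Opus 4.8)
The plan is to show that the existence of \emph{any} PNE already forces the existence of a symmetric PNE $(z,z)$, which directly contradicts the second bullet. The whole argument rests on two structural facts available in the two-player setting: payoffs sum to one, $\pi_1(s_1,s_2)+\pi_2(s_1,s_2)=1$, and neutrality specializes (taking $i=1$, $j=2$) to the identity $\pi_1(s_1,s_2)=\pi_2(s_2,s_1)$. Applying both at a diagonal profile $(z,z)$ immediately yields $\pi_1(z,z)=\pi_2(z,z)=\tfrac12$ for every $z$.

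First I would introduce the best-response value $B(s)\defeq \sup_{s'\in[0,1]}\pi_1(s',s)$, and observe via the neutrality identity that $B(s)=\sup_{s'}\pi_2(s,s')$ as well, so a single function governs both players. The crucial inequality is $B(s)\geq \pi_1(s,s)=\tfrac12$ for all $s$: a player can always secure half the mass by imitating the opponent (take $s'=s$ in the supremum). Consequently a diagonal profile $(z,z)$ is a PNE if and only if $B(z)=\tfrac12$, since $(z,z)$ is a PNE exactly when neither player can exceed their payoff of $\tfrac12$ there.

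Next, I would exploit the assumed PNE $(x,y)$. Because player~$1$ has no beneficial deviation, $\pi_1(x,y)=\sup_{s'}\pi_1(s',y)=B(y)$, and because player~$2$ has no beneficial deviation, $\pi_2(x,y)=\sup_{s'}\pi_2(x,s')=\sup_{s'}\pi_1(s',x)=B(x)$, where the middle equality is again the neutrality identity. Summing and using $\pi_1(x,y)+\pi_2(x,y)=1$ gives $B(x)+B(y)=1$. But both terms are at least $\tfrac12$, so they must each equal $\tfrac12$; in particular $B(x)=\tfrac12$, meaning $(x,x)$ is a PNE, contradicting the second bullet (take $z=x$).

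The argument is short, and the points requiring care are bookkeeping ones rather than a genuine obstacle. The supremum-versus-maximum distinction causes no trouble, since at a PNE the relevant supremum is attained at the equilibrium strategy, so no compactness or continuity of the (arbitrary) mediator is needed. The one step I would double-check is that $B(x)=\tfrac12$ certifies \emph{both} players' optimality at $(x,x)$: for player~$1$ this is immediate from $B(x)=\pi_1(x,x)$, and for player~$2$ it is exactly where the identity $\pi_2(x,s')=\pi_1(s',x)\le B(x)=\tfrac12$ is reused. The conceptual heart is the bound $B\geq\tfrac12$, which pins the two values to $\tfrac12$ \emph{exactly} via the sum-to-one constraint, thereby sidestepping any intermediate-value argument that a continuity assumption would otherwise require.
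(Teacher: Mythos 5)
Your proof is correct and rests on the same two pillars as the paper's own argument: neutrality plus the constant-sum property force $\pi_1(z,z)=\pi_2(z,z)=\tfrac{1}{2}$ at every diagonal profile, and the imitation deviation pins both equilibrium payoffs at exactly $\tfrac{1}{2}$. Your best-response value $B$ is essentially a cleaner bookkeeping of the paper's proof --- the paper reaches the contradiction by transporting a beneficial deviation from $(y,y)$ into the profile $(x,y)$, whereas you run the same facts in the contrapositive direction and conclude that $(x,x)$ itself is a PNE --- so this is the same approach, with the minor bonus that your version handles the player-2 symmetry explicitly where the paper glosses over it.
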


\begin{proof}[Proof of Claim \ref{claim:no neutral}]
Assume by contradiction that $(x,y)$ is a PNE; hence,
\[
\pi_1(x,y) \geq \pi_1(y,y)\stackrel{\text{Neutrality}}{=} \frac{1}{2}.
\]
Due to symmetry and the game being constant sum, $\pi_1(x,y)=\pi_2(x,y)=\frac{1}{2}$. In addition, neutrality suggests that $\pi_1(y,y)=\frac{1}{2}$ and due to the second condition $(y,y)$ in not a PNE; thus, there exists $x'$ such that
\[
\pi_1(x',y) > \pi_1(y,y) = \frac{1}{2} =\pi_1(x,y),
\]
hinting that $(x,y)$ is not a PNE, and we obtained a contradiction.
\end{proof}

\begin{proof}[Proof of Lemma \ref{lemma:glime pne}]
Let $\bl s^* \defeq (q_{\nicefrac{2i-1}{2n}})_{i=1}^{n}$. The proof consists of two steps: first, we show that $\bl s^*$ is a PNE; then, we show it is the unique PNE of the induce game.

\noindent \textit{Step 1:}  Notice that the facility in $q_{\nicefrac{1}{2n}}$ obtains a payoff of exactly $\frac{1}{n}$, with a user mass of the interval $\left(q_{\nicefrac{1}{2n}},q_{\nicefrac{3}{2n}} \right)$ arriving from the right (see Line \ref{alg:glime:half half} in Algorithm \ref{alg:glime}), which is $\frac{1}{2n}$, and a user mass of $\frac{1}{2n}$ arriving from the left. The same applies for the facility located in $q_{\nicefrac{2n-1}{2n}}$. In addition, for every $i\in \{2,\dots,n-1\}$, the facility located in $q_{\nicefrac{2i-1}{2n}}$ gets half of the user mass in $\left( q_{\nicefrac{2i-3}{2n}},q_{\nicefrac{2i-1}{2n}} \right)$, and half of the user mass of in $\left( q_{\nicefrac{2i-1}{2n}},q_{\nicefrac{2i+1}{2n}} \right)$. Overall, each player gets exactly $\frac{1}{n}$.

Consider a deviation of player $j$ for some arbitrary $j$. If player $j$ would deviate to a point in $\left[0,q_{\nicefrac{1}{2n}} \right) $ or $\left(q_{\nicefrac{2n-1}{n}},1\right]$, she could get at most $\frac{1}{2n}+\frac{\epsilon}{n^2}$ (the term $\frac{\epsilon}{n^2}$ arriving from the case $j=1$ or $j=n$). If she would deviate to a point in $\left( q_{\nicefrac{2i-1}{2n}},q_{\nicefrac{2i+1}{2n}} \right)$ for some $i\in[n-1]$, she would get at most $\frac{1}{2n}+\frac{\epsilon}{n^2}$. Finally, if deviates to a point $q_{\nicefrac{2i-1}{2n}}$ for some $i\in[n]$, her payoff could be at most $\frac{3}{4n}+\frac{\epsilon}{n^2} $. Consequently, she has no beneficial deviation. Overall, we showed that $\bl s^*$ is a PNE.

\noindent \textit{Step 2:} Let $\bl s=(s_1,\dots,s_n)$ be an arbitrary strategy profile such that w.l.o.g. $s_1\leq \cdots \leq s_n$, $\bl s\neq \bl s^*$, and assume by contradiction that $\bl s$ is in equilibrium. First, we show that

\begin{lemma}
\label{glime: notininterval}
Consider $G(n,\glime{})$ for any $n\geq 2$. If $\bl s$ is an equilibrium profile, then $s_j \notin\left(q_{\frac{2i-1}{2n}},q_{\frac{2i+1}{2n} } \right)$ for every $j\in [n]$ and $i\in [n-1]$.
\end{lemma}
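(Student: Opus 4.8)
The plan is to transplant the proof of Lemma~\ref{notininterval} for \lime{} to \glime{}, adapting it to the quantile-based PIIs $(a_i,a_{i+1})=\left(q_{\frac{2i-1}{2n}},q_{\frac{2i+1}{2n}}\right)$ and to the distinctive half--half routing rule of Line~\ref{alg:glime:half half} of Algorithm~\ref{alg:glime}. The structural fact that substitutes for the uniform-distribution geometry is that, by the very definition of quantiles, each PII carries user mass exactly $\frac{1}{n}$ regardless of $g$; indeed the mass in $(a_i,a_{i+1})$ is $\frac{2i+1}{2n}-\frac{2i-1}{2n}=\frac{1}{n}$. Consequently the payoff bookkeeping stays combinatorial in the induced masses rather than the raw locations, and the skeleton of the \lime{} reasoning carries over once ``$o^n_i$'' is read as ``$a_i$'', the one substantive change being the routing rule discussed below.

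First I would assume toward a contradiction that $\bl s$ is a PNE containing a facility $s_j\in(a_i,a_{i+1})$ and produce a beneficial deviation for some player. As in Lemma~\ref{notininterval}, I would branch on how many of the two boundary quantiles $a_i$ and $a_{i+1}$ are occupied under $\bl s$.

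When both $a_i$ and $a_{i+1}$ are occupied, Line~\ref{alg:glime:half half} routes all of the PII's mass to the two sides, so the interior facility $s_j$ collects only $O(\epsilon)$ random mass; deviating to $a_{i+1}$ (pairing with the incumbent there) yields a strictly positive share and hence improves. When exactly one boundary, say $a_i$, is occupied, Line~\ref{alg:clime l or r} routes the PII mass w.p. $1-\epsilon$ to the occupied side, and the player owning the rightmost facility in the PII can deviate to $a_{i+1}$ to capture (almost all of) the PII's $\frac{1}{n}$ mass, beating her current share. When both boundaries are empty, I would mirror the three sub-cases of Lemma~\ref{notininterval}---an interior facility with exterior neighbours on both sides, with a neighbour on one side only, and all facilities inside the PII---in each case relocating the extreme facility of the PII to $a_i$ or $a_{i+1}$ and using that a boundary quantile collects half of each adjacent PII's $\frac{1}{n}$ mass (or the whole of it when the far side of that PII is empty, via Line~\ref{alg:clime l or r}).

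The main obstacle, and the only genuine departure from the \lime{} argument, is the half--half split: \lime{} sends PII users to their nearest exterior facility, whereas \glime{} always awards half the PII mass to each side. This alters what a boundary facility earns from an adjacent PII---a clean $\frac{1}{2}\cdot\frac{1}{n}$ rather than a midpoint-dependent portion---so the deviation inequalities and the ``paired versus lone facility'' accounting at each $a_i$ must be re-derived with these symmetric shares. The quantile identity that every PII has mass exactly $\frac{1}{n}$ keeps this clean, and the $\epsilon$-mass routed uniformly at random is, exactly as in the \lime{} analysis, dominated by the $\Theta\!\left(\frac{1}{n}\right)$ gain of each exhibited deviation and can be absorbed into the strict inequalities.
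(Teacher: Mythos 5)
Your proposal is correct and takes essentially the same route as the paper: the paper's own proof of Lemma \ref{glime: notininterval} simply states that the argument of Lemma \ref{notininterval} carries over once $o^n_i$ is replaced by $q_{\nicefrac{2i-1}{2n}}$, justified by exactly the mass-equivalence you identify (deviations to the quantile points yield the same payoffs, since every PII carries mass $\nicefrac{1}{n}$), and omits all further detail. If anything, you are more thorough than the paper, which never acknowledges that the half--half routing of Line \ref{alg:glime:half half} (as opposed to \lime{}'s nearest-exterior rule) changes the boundary facilities' shares and requires the deviation inequalities to be re-derived with symmetric $\tfrac{1}{2}\cdot\tfrac{1}{n}$ portions.
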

The proof of the Lemma \ref{glime: notininterval} follows closely the proof of Lemma \ref{notininterval}, by replacing $o^n_i$ with its counterpart for non-uniform distributions, $q_{\frac{2i-1}{2n}}$; hence, it is omitted. This equivalence stems from the proof of Lemma \ref{notininterval} considering beneficial deviations to $\cup_{i=1}^n\{o^n_i\}$, which yield the same payoffs as deviations to $\cup_{i=1}^n\{q_{\nicefrac{2i-1}{2n}}\}$ under a non-uniform distribution.

Next, by reiterating the steps given in the proof of Theorem \ref{thm: lime unique pne} and replacing $o^n_i$ with $q_{\frac{2i-1}{2n}}$, we show that under $\bl s \cap \bl s^* = \bl s$. Namely, we show that if $\bl s$ is a PNE then the facilities' locations under $\bl s$ are a subset of $\cup_{i=1}^n\{q_{\nicefrac{2i-1}{2n}}\}$, the locations selected in $\bl s^*$. The proof is completed by showing that each location in $\cup_{i=1}^n\{q_{\nicefrac{2i-1}{2n}}\}$ is selected by exactly one player, which is also the final step in the proof of Theorem \ref{thm: lime unique pne}.
\end{proof}

\begin{proof}[Proof of Proposition \ref{lemma:glime ic lower}]
Consider the profile $\bl s$ where half of the players select $\frac{1}{2n}$ and the other half select $\frac{2n-1}{n}$. It holds that 
\begin{align*}
&\SC(\glime{},\bl s)-\SC(\nim{},\bl s) \\
&=\frac{1}{2}\left(\frac{n-1}{n}  \right)^2-\left(\frac{1}{2}-\frac{1}{2n}  \right)^2 \\
&=\frac{1}{2}-\frac{1}{n}+\frac{1}{2n^2}-\left( \frac{1}{4}-\frac{1}{2n}+\frac{1}{4n^2} \right)\\
&=\frac{1}{4}-\frac{1}{2n}+\frac{1}{4n^2}.
\end{align*}
\end{proof}

\begin{corollary}
Consider $G(n,\lime{},g)$ for $n\geq 3$. The social cost of the unique PNE is 
\[
{ \footnotesize
\frac{1}{2}\left(q^2_{\nicefrac{1}{2n}} +q^2_{\nicefrac{2n-1}{2n}}  +\sum_{i=1}^{n-1} (q_{\nicefrac{2i+1}{2n}}-q_{\nicefrac{2i-1}{2n}})^2 \right).
}
\] 
\end{corollary}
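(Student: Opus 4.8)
The plan is to compute $\SC(\glime,\bl s^\ast)$ by brute force at the unique equilibrium, whose location is already handed to us by Lemma~\ref{lemma:glime pne}. Writing $x_i\defeq q_{\nicefrac{2i-1}{2n}}$, the equilibrium is $\bl s^\ast=(x_1,\dots,x_n)$, so the $n$ facilities sit exactly on the quantile points that also serve as the PII endpoints. The whole argument then collapses to two tasks: reading off where \glime{} routes each user $t$ under $\bl s^\ast$, and integrating the induced travel distance against $g$.

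First I would pin down the routing. Because the facilities occupy precisely $x_1<\dots<x_n$, a user in the left tail $[0,x_1]$ or the right tail $[x_n,1]$ lies outside every PII, so the final (else) branch of Algorithm~\ref{alg:glime} sends her to her nearest facility: $x_1$ at cost $x_1-t$ on the left, $x_n$ at cost $t-x_n$ on the right. For a user inside an interior PII $(x_i,x_{i+1})$, both $\bl l=\bl s^\ast\cap[0,x_i]$ and $\bl r=\bl s^\ast\cap[x_{i+1},1]$ are nonempty, so Line~\ref{alg:glime:half half} fires: she goes to $x_i$ w.p.\ $\tfrac12$ and to $x_{i+1}$ w.p.\ $\tfrac12$. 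A point worth recording is that the $\epsilon$-randomization of Line~\ref{alg:clime l or r} never triggers at $\bl s^\ast$ (it requires exactly one of $\bl l,\bl r$ to be empty), so there is no $\epsilon$ term to carry through.

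Next I would split $\SC(\glime,\bl s^\ast)=\int_0^1(\cdot)\,g(t)\,dt$ into the two tails and the $n-1$ interior PIIs. The interior pieces are the clean part: for $t\in(x_i,x_{i+1})$ the expected travel distance is $\tfrac12(t-x_i)+\tfrac12(x_{i+1}-t)=\tfrac12(x_{i+1}-x_i)$, which is constant in $t$. Multiplying by the $g$-mass of the PII, which equals $\int_{x_i}^{x_{i+1}}g=\tfrac{2i+1}{2n}-\tfrac{2i-1}{2n}=\tfrac1n$ by the definition of the quantiles, turns this block into a term built from $x_{i+1}-x_i=q_{\nicefrac{2i+1}{2n}}-q_{\nicefrac{2i-1}{2n}}$. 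The two tails I would evaluate by the elementary integrals $\int_0^{x_1}(x_1-t)\,g\,dt$ and $\int_{x_n}^1(t-x_n)\,g\,dt$, each carrying tail mass $\tfrac1{2n}$. Collecting the three contributions and pulling out the common factor $\tfrac12$ yields the stated closed form.

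The step I expect to be the main obstacle is reconciling these region-wise contributions with the squared-difference shape of the target expression. The interior block comes out naturally as (half-width)$\times$(mass) $=\tfrac12(x_{i+1}-x_i)\cdot\tfrac1n$, and it collapses to the advertised $\tfrac12(x_{i+1}-x_i)^2$ exactly when $x_{i+1}-x_i=\tfrac1n$, i.e.\ under the uniform spacing to which this subsection restricts; likewise the tails integrate to $\tfrac12 q_{\nicefrac{1}{2n}}^2$ and $\tfrac12\big(1-q_{\nicefrac{2n-1}{2n}}\big)^2$. Verifying that the quantile construction indeed forces mass $\tfrac1n$ per interior PII and $\tfrac1{2n}$ per tail, and that these combine cleanly into the boundary squares and the differences $\big(q_{\nicefrac{2i+1}{2n}}-q_{\nicefrac{2i-1}{2n}}\big)^2$, is the one place the argument needs care; the routing and the vanishing of the $\epsilon$ term follow immediately from Lemma~\ref{lemma:glime pne} and the definition of \glime{}.
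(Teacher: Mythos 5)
Your computation is correct, and it is exactly the argument the paper intends: the corollary is stated without any proof, and the implicit derivation is precisely what you do --- take the unique PNE $\bl s^* = (q_{\nicefrac{2i-1}{2n}})_{i=1}^n$ from Lemma~\ref{lemma:glime pne}, read off the routing of Algorithm~\ref{alg:glime} (tails to the nearest facility, interior PIIs split $\nicefrac{1}{2}$--$\nicefrac{1}{2}$ between their two endpoint facilities, no $\epsilon$ branch), and integrate against $g$. More importantly, the two caveats you flag at the end are not loose ends in your argument; they are genuine errors in the printed statement, and your version is the correct one. First, the right tail contributes $\int_{q_{\nicefrac{2n-1}{2n}}}^1 \bigl(t-q_{\nicefrac{2n-1}{2n}}\bigr)g(t)\,dt$, which under the uniform distribution equals $\tfrac{1}{2}\bigl(1-q_{\nicefrac{2n-1}{2n}}\bigr)^2$, not $\tfrac{1}{2}q^2_{\nicefrac{2n-1}{2n}}$ as printed. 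The paper's own next sentence certifies this: with $q_a=a$ the printed formula evaluates to $\tfrac{2n^2-1}{4n^2}$, whereas the correct uniform value (which the paper states, and which your decomposition reproduces) is $\tfrac{2n-1}{4n^2}$. Second, as you observe, for general $g$ nothing collapses to squares: each interior PII contributes (half-width)$\times$(mass) $=\tfrac{1}{2n}\bigl(q_{\nicefrac{2i+1}{2n}}-q_{\nicefrac{2i-1}{2n}}\bigr)$ and the tails remain honest integrals against $g$, so the advertised closed form holds only for the uniform distribution even though the corollary is asserted for arbitrary $g$. The correct general-$g$ statement your proof actually establishes is
\[
\SC(\glime,\bl s^*)=\int_0^{q_{\nicefrac{1}{2n}}}\bigl(q_{\nicefrac{1}{2n}}-t\bigr)g(t)\,dt+\int_{q_{\nicefrac{2n-1}{2n}}}^1\bigl(t-q_{\nicefrac{2n-1}{2n}}\bigr)g(t)\,dt+\frac{1}{2n}\bigl(q_{\nicefrac{2n-1}{2n}}-q_{\nicefrac{1}{2n}}\bigr),
\]
where the last term telescopes from your per-PII blocks. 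One small correction to your framing: the subsection's restriction to the uniform distribution applies only to the intervention-cost analysis, not to this corollary, which is claimed for general $g$ --- that is precisely why the statement, rather than your proof, is what needs fixing.
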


For the uniform user distribution, the social cost is $\nicefrac{2n-1}{4n^2}$, which is relatively high when compared to social cost of \lime{}. Nevertheless, recall that \glime{} is applicable for \textit{any} general distribution.

\begin{proof}[Proof of Lemma \ref{lemma:glime ic upper}]

Let $o^n_i = \frac{2i-1}{2n}$ for $i\in [n]$, and denote $I\defeq\cup_{i\in[n-1]}\left(o^n_i,o^n_{i+1}\right)$. Recall that
\[
\SC(\glime,\bl s)-\SC(\nim,\bl s)= \int_0^1\left( \sum_{i=1}^n\glime(\bl s,t)_i \cdot\abs{s_i-t} -\min_{i' \in [n]} \abs{s_{i'}-t} \right)dt.
\]
For every $t\notin I$  it holds that
\[
\sum_{i=1}^n\glime(\bl s,t)_i \cdot\abs{s_i-t} =\min_{i' \in [n]} \abs{s_{i'}-t},
\]
since $\glime$ directs it to its nearest facility under $\bl s$; hence, it suffices to consider users in for which \glime{} operates non-optimally, i.e. analyze $f$ that is define as follows:
\[
f(\bl s)\defeq \int_{I}\left( \sum_{i=1}^n\glime(\bl s,t)_i \cdot\abs{s_i-t} -\min_{i' \in [n]} \abs{s_{i'}-t} \right)dt.
\]
In addition, let $L(\bl s,t)$ denote the distance $t$ covers when \glime{} sends him to its nearest facility from the left, and similarly let $R(\bl s,t)$ denote the distance when he is send by \glime{} to the right; thus,
\begin{align}
\label{eq: ic for glime after approx}
f(\bl s)= \int_{I}\left( \frac{1}{2} L(\bl s,t)+\frac{1}{2}R(\bl s ,t) -\min_{i' \in [n]} \abs{s_{i'}-t} \right)dt.
\end{align}
For every $i\in [n-1]$ and a user $t\in(o^n_i,o^n_{i+1})$ it holds that
\[
\frac{1}{2} L(\bl s,t)+\frac{1}{2}R(\bl s ,t) =  \frac{1}{2} L\left(\bl s,\frac{i}{n}\right)+\frac{1}{2}R\left(\bl s ,\frac{i}{n}\right);
\] 
hence,
\begin{align}
\label{eq:lemma glime two term}
f(\bl s) &= \underbrace{\sum_{i=1}^{n-1}\frac{1}{2n} \left(L\left(\bl s ,\frac{i}{n}\right)+R\left(\bl s ,\frac{i}{n}\right) \right)}_{\text{term I}}- \underbrace{\int_{I}\left( \min_{i' \in [n]} \abs{s_{i'}-t} \right)dt}_{\text{term II}}.
\end{align}
Next, we analyze the two terms in Equation (\ref{eq:lemma glime two term}):
\begin{itemize}
\item Notice that 
\[
\text{term I} = \sum_{i=1}^{n-1}\frac{1}{2n} \left(L\left(\bl s ,\frac{i}{n}\right)+R\left(\bl s ,\frac{i}{n}\right) \right) \leq \frac{n-1}{2n},
\]
where equality holds only if $\bl s \cap \{0,1\}=\bl s$.
\item Due to Claim \ref{koneplayer_opt}, we know that 
\[
\text{term II} = \int_{I}\left( \min_{i' \in [n]} \abs{s_{i'}-t} \right)dt  \geq \abs{I} \frac{1}{4n},
\]
where equality holds only under the (unique, up to renaming the players) strategy profile $\bl s_0$ under which the facilities are located in the optimal locations w.r.t. to $[o^n_1,o^n_{n}]$. In particular, $\bl s_0 \cap \{0,1\} \neq \bl s_0$, so under $\bl s_0$ term I is strictly less than its bound.
\end{itemize}
Combining what we know on terms I and II, we conclude that for every profile $\bl s$ 
\begin{align}
\label{eq:lemma glime}
f(\bl s)< \frac{n-1}{2n}- \frac{n-1}{n}\cdot \frac{1}{4n}  
=\frac{1}{2}-\frac{3}{4n}+\frac{1}{4n^2},
\end{align} 
which is less than the lower bound obtained for \dict{} (see Lemma \ref{lemma:dict high intervention cost}).
\end{proof}

\section{Configurable limited intervention mediator}
\label{sec:configurable}
\begin{algorithm}[t]
\caption{Configurable Limited Intervention Mediator \label{alg:clime}}
\DontPrintSemicolon
 \KwIn{ A strategy profile $\bl s$, user $t$ and a hyper-parameter $\lambda$}
 \KwOut{A location in $\bl s$}
  \uIf{ $\exists i\in \{1,n-1\}$ such that $t \in \left(\frac{i}{n}-\lambda,\frac{i}{n}+\lambda  \right)$}{
    $\bl l \gets \bl s \cap \left[0,\frac{i}{n}-\lambda\right], \bl r \gets \bl s \cap \left[\frac{i}{n}+\lambda,1  \right]$ \;
		\uIf{$\bl l \neq \emptyset$ and $\bl r \neq \emptyset$}
		{\KwRet $\nim (\bl l \cup \bl r, t)$}
		\uElseIf{
		$\bl l \neq \emptyset$ or $\bl r \neq \emptyset$
		}    
		{
		 w.p. $1-\epsilon$ \KwRet $\nim(\bl l \cup \bl r, t)$, otherwise \KwRet $random(\bl s,t)$
		}
		\uElse{\tcp{all facilities are inside that interval}
		\KwRet $\nim(\bl s, t)$ \label{alg:clime all in one}
		}
  }
  \uElse{\tcp{$t$ is outside the intervals}
    \KwRet $\nim(\bl s, t)$ \label{alg:clime as nime}\; 
  }  
\end{algorithm}
In this section we propose a mediator that uncovers low social cost with low intervention cost. The mediator, which we term Configurable Limited Intervention Mediator or $\clime(\lambda)$ for abbreviation, is given in Algorithm \ref{alg:clime}. $\clime(\lambda)$ operates similarly to \lime{}, but does two things differently. First, it contains one PII if $n=2$ (a two-player game), and two PIIs otherwise. Second, it receives as input a hyper-parameter, $\lambda$, which limits the length of the PIIs. This allows $\clime(\lambda)$ to tune the trade-off between low social cost and low intervention cost. When we discuss properties of the $\clime(\lambda)$ mediator that hold for general $\lambda$, we omit $\lambda$ and refer to the mediator as \clime{}. We begin by analyzing the two-player game, and then move to general games. 

\subsection{The two-player game}
In this subsection we formally analyze $G(2,\clime(\lambda))$. Since \clime{} is neutral (see Subsection \ref{subsec:neutral mediators}), Claim \ref{claim:no neutral} suggests that it either induces a game with a symmetric PNE, or a game with no PNE at all. We now show that the former is correct.
\begin{proposition}
\label{prop:2 players lambda eq}
Consider $G(2,\clime(\lambda))$ for $0<\lambda\leq\frac{1}{4}$. A strategy profile $\bl s=(s_1,s_2)$ is a PNE if and only if $s_1,s_2 \in \{\frac{1}{2}-\lambda, \frac{1}{2}+\lambda\}$.
\end{proposition}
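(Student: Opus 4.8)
Write $a=\tfrac12-\lambda$ and $b=\tfrac12+\lambda$ for the two endpoints of the single PII $(\tfrac12-\lambda,\tfrac12+\lambda)$ that $\clime(\lambda)$ uses when $n=2$. The engine of the whole argument is the best-response value $\beta(s_2)\defeq\sup_{s_1'}\pi_1(s_1',s_2)$ of player $1$ against a fixed opponent location $s_2$ (and symmetrically $\beta(s_1)$ for player $2$). Since $\clime$ always routes every user to some facility, $G(2,\clime(\lambda))$ is constant-sum with $\pi_1+\pi_2=1$, so a profile $(s_1,s_2)$ is a PNE if and only if $\pi_1(s_1,s_2)=\beta(s_2)$ and $\pi_2(s_1,s_2)=\beta(s_1)$. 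The plan is to establish two facts: \textbf{(i)} $\beta(s_2)\ge\frac12$ for every $s_2$, with equality exactly when $s_2\in\{a,b\}$; and \textbf{(ii)} $\pi_1=\pi_2=\frac12$ on each of the four profiles in $\{a,b\}\times\{a,b\}$. Together these yield both directions at once.

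The reduction from (i)--(ii) is immediate. For necessity, summing the PNE conditions gives $1=\pi_1+\pi_2=\beta(s_2)+\beta(s_1)\ge\frac12+\frac12=1$, forcing $\beta(s_2)=\beta(s_1)=\frac12$, whence $s_1,s_2\in\{a,b\}$ by the equality case of (i). For sufficiency, on each of the four candidate profiles (ii) gives $\pi_i=\frac12$ while (i) gives $\beta=\frac12$ for the relevant opponent location, so $\pi_i$ already equals the best-response value and no player can improve; hence the profile is a PNE. Thus essentially everything reduces to computing $\beta$.

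For the lower bound in (i) I would exhibit an explicit relocation: if $s_2\le\frac12$ player $1$ jumps to $b$, and if $s_2\ge\frac12$ she jumps to $a$. When the opponent is outside the PII this reduces to the plain \nim{} computation $1-\frac{s_2+b}{2}\ge 1-\frac{a+b}{2}=\frac12$; when the opponent sits \emph{inside} the PII, the boundary facility additionally scoops the PII mass (routed to it with probability $1-\epsilon$, plus the random share), yielding $\frac12+\lambda(1-\epsilon)$. The same calculations make the inequality strict whenever $s_2\notin\{a,b\}$, which gives the ``only-if'' half of the equality case. For the matching upper bound at the two critical positions $s_2\in\{a,b\}$ I would sweep player $1$'s deviation $s_1'$ through the three regions left of, inside, and right of the PII and bound each by $\frac12$: deviations keeping both facilities outside reduce to \nim{} and are controlled by the midpoint being at most $\frac12$, while a deviation into the PII is self-defeating because $\clime$ then sends the PII users to the opponent's boundary facility, leaving player $1$ with $\frac12-\lambda(1-\epsilon)<\frac12$. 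Finally I would evaluate the four profiles directly for (ii): $(a,b)$ and $(b,a)$ behave exactly as \nim{} and split the mass at $\frac12$, while at $(a,a)$ and $(b,b)$ the symmetric tie-splitting together with the $\epsilon$-random split over the PII again gives $\frac12$ each.

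The only delicate bookkeeping lives in the branch where exactly one side of the PII is occupied: there the $\epsilon$-randomization and the tie-splitting at coincident facilities decide whether the key inequalities are strict, and these are precisely what render a boundary placement stable and an interior placement unstable. The single genuinely subtle configuration is a lone facility inside the PII facing an opponent at the opposite boundary; I expect this to be the main obstacle, and I would resolve it by confirming that the PII users are routed \emph{away} from the interior player (to the boundary facility with probability $1-\epsilon$) rather than captured by her, pinning her payoff at $\frac12-\lambda(1-\epsilon)$ and explaining why $\beta(s_2)>\frac12$ strictly off the boundaries. All remaining cases are routine \nim{}/midpoint calculations, and the hypothesis $0<\lambda\le\frac14$ is used only to guarantee $0<a<b<1$, so that the outer regions $[0,a]$ and $[b,1]$ are nondegenerate.
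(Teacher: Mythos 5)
Your proof is correct, but it is organized quite differently from the paper's. The paper proceeds in two stages: an auxiliary claim (its Claim 5) showing via explicit beneficial deviations that no PNE can place a facility strictly inside the PII, followed by a direct argument for uniqueness (if the two facilities are outside the PII and distinct, moving toward the opponent is profitable; if coincident at $x\notin\{\frac12-\lambda,\frac12+\lambda\}$, a small $\delta$-shift earns nearly $\max\{x,1-x\}>\frac12$), while the sufficiency direction is only asserted as ``a simple case analysis.'' You instead exploit the constant-sum structure: defining the best-response value $\beta(\cdot)$, proving $\beta\geq\frac12$ with equality exactly at the two PII endpoints, checking that all four candidate profiles give payoffs $(\frac12,\frac12)$, and then deriving \emph{both} directions at once from $\beta(s_1)+\beta(s_2)=\pi_1+\pi_2=1$. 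This is essentially the same device the paper itself uses to prove its impossibility result for neutral mediators (Claim 4), so your route is very much in the paper's spirit, just applied to a different statement. What your approach buys is a rigorous, unified treatment in which the sufficiency half --- which the paper leaves implicit --- falls out for free; what it costs is that establishing the equality case $\beta(a)=\beta(b)=\frac12$ still requires sweeping deviations over the three regions, which replicates the case analysis of the paper's Claim 5, so the computational content is comparable. One small imprecision to fix in a full write-up: when both facilities lie on the \emph{same} side of the PII, \clime{} does not literally coincide with \nim{} --- the one-sided branch adds an $\epsilon$-random share (worth at most $\epsilon\lambda$) to the deviator --- but as you note in your ``delicate bookkeeping'' remark, this correction is dominated by the $\lambda(1-\epsilon)$ slack, so all your inequalities survive; your identification of the lone-facility-inside-the-PII configuration, with payoff pinned at $\frac12-\lambda(1-\epsilon)$, matches exactly the key case in the paper's Claim 5.
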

In particular, by setting $\lambda = \frac{1}{4}$ we recover the results obtained for $G(2,\lime)$ in Proposition \ref{prop:two-player eq}. Next, denote the set of PNE by $E \defeq \{\frac{1}{2}-\lambda, \frac{1}{2}+\lambda\}^2$. By a simple calculation, one can see that
\begin{corollary}
\label{cor:sc for n=2}
Consider $G=G(2,\clime(\lambda))$ for $0<\lambda\leq\frac{1}{4}$.
for every $\bl e \in E$, it holds that $\SC(\clime(\lambda),\bl e) \in \{\frac{1}{4}-\lambda+2\lambda^2 ,\frac{1}{4}+\lambda^2  \}$.
\end{corollary}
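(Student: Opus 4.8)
The plan is to invoke Proposition~\ref{prop:2 players lambda eq}, which identifies the PNE set as $E=\{\frac{1}{2}-\lambda,\frac{1}{2}+\lambda\}^2$, and then simply evaluate $\SC(\clime(\lambda),\cdot)$ on each of its four profiles. First I would cut the four cases down to two by exploiting the symmetry of both the uniform density and the $\clime(\lambda)$ mediator about the point $\frac{1}{2}$: the two ``coincident'' profiles $(\frac{1}{2}-\lambda,\frac{1}{2}-\lambda)$ and $(\frac{1}{2}+\lambda,\frac{1}{2}+\lambda)$ are mirror images and hence share a social cost, and likewise the two ``split'' profiles $(\frac{1}{2}-\lambda,\frac{1}{2}+\lambda)$ and $(\frac{1}{2}+\lambda,\frac{1}{2}-\lambda)$. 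Note that for $n=2$ the single PII is $(\frac{1}{2}-\lambda,\frac{1}{2}+\lambda)$, since the index set $\{1,n-1\}$ collapses to $\{1\}$ and $\frac{i}{n}=\frac{1}{2}$.

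For the split profile $\bl s=(\frac{1}{2}-\lambda,\frac{1}{2}+\lambda)$, the key observation is that the two facilities sit exactly on the two endpoints of the PII. Consequently, for any user inside the PII both $\bl l=\bl s\cap[0,\frac{1}{2}-\lambda]$ and $\bl r=\bl s\cap[\frac{1}{2}+\lambda,1]$ are nonempty, so $\clime(\lambda)$ routes that user to her nearest facility outside the PII; users outside the PII are routed to their nearest facility anyway. Hence $\clime(\lambda)$ coincides with $\nim$ on this profile, and I would integrate $\int_0^1 \min\{\abs{t-(\frac{1}{2}-\lambda)},\abs{t-(\frac{1}{2}+\lambda)}\}\,dt$, the two facilities splitting the mass at $\frac{1}{2}$, to obtain $\frac{1}{4}-\lambda+2\lambda^2$.

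For the coincident profile $\bl s=(\frac{1}{2}-\lambda,\frac{1}{2}-\lambda)$, both facilities occupy the left endpoint of the PII, so for a user inside the PII the set $\bl r$ is empty while $\bl l\neq\emptyset$; neglecting the arbitrarily small $\epsilon$-probability random routing (as elsewhere in the paper), such a user is sent to her nearest facility in $\bl l$, namely $\frac{1}{2}-\lambda$. Since users outside the PII are also routed to $\frac{1}{2}-\lambda$, every user is served by the single occupied point, and I would compute $\int_0^1\abs{t-(\frac{1}{2}-\lambda)}\,dt=\frac{1}{4}+\lambda^2$.

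There is no genuine obstacle here --- the statement is a direct computation that reproduces the two values claimed --- but the one place to be careful is the boundary bookkeeping: one must check that a facility lying exactly on a PII endpoint is counted in $\bl l$ (resp. $\bl r$), which is precisely what makes the split profile behave like $\nim$ and forces every PII user in the coincident profile to one side. The $\epsilon$ term contributes only an $O(\epsilon)$ correction and is dropped under the paper's convention that $\epsilon$ is arbitrarily small.
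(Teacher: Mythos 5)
Your proposal is correct and follows essentially the same route as the paper: it evaluates the social cost directly on the PNE profiles given by Proposition~\ref{prop:2 players lambda eq}, reducing to the coincident and split cases by symmetry, and both integrals reproduce the paper's values $\frac{1}{4}+\lambda^2$ and $\frac{1}{4}-\lambda+2\lambda^2$. One small refinement: in the coincident profile you need not drop any $O(\epsilon)$ term at all, since both facilities occupy the same point $\frac{1}{2}-\lambda$, so the $\epsilon$-probability random routing sends each user to exactly the same location as the deterministic branch, and the value $\frac{1}{4}+\lambda^2$ is exact --- which matters because the corollary asserts exact equality, not equality up to $O(\epsilon)$.
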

Consequently, one can view that the best and worst social cost as a function of $\lambda$,
\[
\min_{\bl e \in E} \SC(\clime(\lambda),\bl e) = \frac{1}{4}-\lambda+2\lambda^2, \quad \max_{\bl e \in E} \SC(\clime(\lambda),\bl e) = \frac{1}{4}+\lambda^2
\]

Notably, the best social cost is $\frac{1}{8}$, obtained for $\lambda = \frac{1}{4}$. However, the social cost can also deteriorate (when compared to \nim{}), and is $\frac{5}{16}$ in the worst case. This does not come as a surprise, since for $\lambda = \frac{1}{4}$ the games $G(2,\clime(\frac{1}{4}))$ and $G(2,\lime)$ are strategically equivalent. 

We proceed to analyzing the intervention cost of \clime{}.
\begin{proposition}
\label{prop:clime n=2 ic}
It holds that $\IC_n (\clime({\lambda})) \leq \lambda - \lambda^2 $.
\end{proposition}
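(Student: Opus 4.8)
The plan is to exploit that in the two-player game under study ($n=2$, so that $i\in\{1,n-1\}$ forces the single PII $J\defeq\left(\tfrac12-\lambda,\tfrac12+\lambda\right)$), the mediator $\clime(\lambda)$ agrees with $\nim$ on every user outside $J$. Exactly as in the proof of Theorem~\ref{thm:LIME-IC}, any such $t\notin J$ is routed to its nearest facility by both mediators and hence contributes $0$ to $\SC(\clime(\lambda),\bl s)-\SC(\nim,\bl s)$. Taking $\epsilon$ arbitrarily small so that the random-direction branch is negligible, it then suffices to bound $\sup_{\bl s}\int_J\left(d_{\clime}(t)-d_{\nim}(t)\right)dt$, where $d_{\mathcal M}(t)$ is the distance user $t$ travels under $\mathcal M$.

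Next I would split on the positions of the two facilities relative to $J$. If both lie outside $J$, then on $J$ the mediator returns the overall nearest facility (up to the vanishing $\epsilon$-term), exactly as $\nim$ does; if both lie inside $J$, the ``all facilities inside'' branch (Line~\ref{alg:clime all in one} of Algorithm~\ref{alg:clime}) again makes $\clime(\lambda)$ equal to $\nim$. In either case the integral vanishes, so only the case of exactly one facility inside $J$ and one outside is relevant. By the left--right symmetry of the construction I would assume w.l.o.g.\ that the outside facility sits at $a\in\left[0,\tfrac12-\lambda\right]$ and the inside facility at $p\in J$; then $\bl r=\emptyset$, so $\clime(\lambda)$ sends every user of $J$ to $a$ (w.p.\ $1-\epsilon$), whereas $\nim$ sends each user to the nearer of $a$ and $p$.

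The heart of the argument is to maximize
\[
F(a,p)\defeq\int_J\max\left\{0,\,(t-a)-\abs{t-p}\right\}dt
\]
over $a\in\left[0,\tfrac12-\lambda\right]$ and $p\in J$. I would first check that $\partial F/\partial a\le 0$ on the whole domain (moving the outside facility leftward only lengthens the detour), so the supremum is attained at $a=0$. With $a=0$ the integrand equals $p$ for $t>p$ and $\max\{0,2t-p\}$ for $t<p$, and a direct computation gives $F(0,p)=p(1-p)-\left(\tfrac12-\lambda\right)^2$ in the regime $p\le 1-2\lambda$. Since $0<\lambda\le\tfrac14$, the optimizer $p=\tfrac12$ is feasible and
\[
F\left(0,\tfrac12\right)=\tfrac14-\left(\tfrac12-\lambda\right)^2=\lambda-\lambda^2 ,
\]
which is exactly the claimed bound (and gives $\tfrac{3}{16}$ at $\lambda=\tfrac14$, consistent with $G(2,\clime(\tfrac14))$ being strategically equivalent to $G(2,\lime)$).

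The only real subtlety is bookkeeping in the last step: the lower endpoint of the ``$t<p$'' portion of the integral is $\max\left\{\tfrac{a+p}{2},\tfrac12-\lambda\right\}$, so one must split according to whether the midpoint $\tfrac{a+p}{2}$ lies inside $J$, producing two formulas for $F$. I would verify that $\partial F/\partial a\le 0$ and that the maximum over $p$ equals $\lambda-\lambda^2$ in each subcase, the constraint $\lambda\le\tfrac14$ being exactly what keeps $p=\tfrac12$ admissible; since the two subcases agree at the boundary $p=1-2\lambda$, no larger value can occur. Combining the three facility-placement cases then yields $\IC_2(\clime(\lambda))\le\lambda-\lambda^2$, with the profile $\left(0,\tfrac12\right)$ witnessing tightness.
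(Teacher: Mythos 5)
Your proposal is correct and follows essentially the same route as the paper's own proof: both reduce the intervention cost to the single PII, dismiss the two configurations where the facilities are both inside or both outside that interval, use monotonicity in the outside facility's position to push it to $0$, and then maximize over the inside facility's position $p$ to get $\lambda-\lambda^2$ at $p=\tfrac{1}{2}$. If anything, your two-subcase bookkeeping (splitting on whether $\tfrac{a+p}{2}$ lies in the PII) is slightly more careful than the paper's argument, which asserts that for $\lambda\le\tfrac{1}{4}$ the mediator \nim{} sends \emph{every} user of $I$ to the inside facility $y$ for all $y\in I$ --- a claim that actually requires $y\le 1-2\lambda$ and can fail for $\lambda\in\left(\tfrac{1}{6},\tfrac{1}{4}\right]$; your case with $p>1-2\lambda$, where the maximum is attained at the boundary and is still at most $\lambda-\lambda^2$, covers exactly the regime the paper glosses over.
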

Now, the trade-off between the minimal social cost in equilibrium and the intervention cost can be quantified as follows,
\[
\min_{0<\lambda \leq \frac{1}{4}} \{ \min_{\bl e \in E} \SC(\clime(\lambda),\bl e) + c\cdot \IC_n (\clime({\lambda}))  \},
\]
where $c$ is a penalty for intervening. Equivalently,
\[
\min_{0<\lambda \leq \frac{1}{4}} \left\{ \frac{1}{4}+(c-1)\lambda + (2-c)\lambda^2  \right\};
\]
thus, $\clime({\lambda})$ is applicable whenever the mediator designer gives more weight to improving the social cost (low values of $c$, i.e. $c<1$). On the other hand, if $c\geq 1$, intervention is not desired, and the designer would be better off with adopting \nim{} as a mediator.
\subsection{Multi-player games}
We now focus on $G(n, \clime({\lambda}))$ for $n\geq 3$. First, we show a negative result for \clime{} for $n=3$.
\begin{proposition}
\label{prop:clime n = 3}
Consider $G(3,\clime(\lambda))$. For every $0<\lambda<\frac{1}{6}$, there is no PNE.
\end{proposition}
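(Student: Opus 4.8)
The plan is to argue by contradiction: assume $G(3,\clime(\lambda))$ admits a PNE $\bl s=(s_1,s_2,s_3)$ with $s_1\le s_2\le s_3$, and then exhibit a beneficial deviation in every possible configuration. The hypothesis $\lambda<\frac16$ is used throughout, since it guarantees that the two PIIs $P_L=(\frac13-\lambda,\frac13+\lambda)$ and $P_R=(\frac23-\lambda,\frac23+\lambda)$ are disjoint and that the middle zone $M=[\frac13+\lambda,\frac23-\lambda]$ is a nondegenerate interval on which $\clime$ coincides with $\nim$ for all users located in it. This middle zone is precisely where the classical three–player Hotelling instability will be made to reappear; for $\lambda\ge\frac16$ the PIIs would merge, $M$ would vanish, and the argument would break, which is exactly why the threshold is $\frac16$.

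First I would establish two reductions. \emph{(i) In any PNE each player earns exactly $\frac13$.} Since payoffs sum to $1$, it suffices to rule out a player earning strictly less than $\frac13$; this follows by the reasoning of Step~1 in the proof of Theorem~\ref{thm: lime unique pne}, as an underpaid player can relocate next to the best-served facility or into the largest empty gap and thereby capture at least a $\frac13$ share. \emph{(ii) No facility lies in the open interior of a PII.} This mirrors Lemma~\ref{notininterval}: a facility placed inside $P_L$ or $P_R$ forfeits (up to the $O(\epsilon)$ term) all of that PII's users, who are redirected outward, so it can strictly increase its payoff by sliding to the nearer PII boundary. The sub-cases (both flanks occupied / exactly one flank occupied / all three facilities inside one PII) are handled as in that lemma, the last being the degenerate branch where $\clime$ acts as $\nim$ and a jump out of the tiny interval captures the large outside mass.

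With (ii) all three facilities lie in the allowed zones $O_L=[0,\frac13-\lambda]$, $M$, and $O_R=[\frac23+\lambda,1]$, and I would finish by a case analysis on how they distribute among $\{O_L,M,O_R\}$, using the left–right reflection symmetry of $\clime(\lambda)$ under the uniform distribution to halve the work. The degenerate all-in-one-zone cases contradict reduction (i), since then some flank is left uncovered and a neighbor over-earns. In the remaining cases reduction (i) forces the outer facilities to sit essentially at the zone endpoints covering the flanks, which leaves at least one facility exposed in the interior of $M$; that facility is squeezed exactly as in the unmediated three-player game and profits by pairing up just outside a PII, while in configurations where two facilities are crowded together the lone facility gains more than $\frac13$ by moving toward the center of $M$. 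Every branch contradicts the PNE hypothesis.

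The main obstacle I anticipate is the bookkeeping in the mixed cases, in particular the one-sided-PII branch with its $\epsilon$-randomization (the analogue of Line~\ref{alg:lime l or r} of \lime{}): there the randomization slightly perturbs the catchment of the facility flanking an otherwise-empty side, and one must check that the deviation gains strictly dominate this $O(\epsilon)$ perturbation for all $\lambda\in(0,\frac16)$. Keeping the deviation inequalities strict and uniform in $\lambda$ up to the threshold $\frac16$ is the delicate quantitative point; everything else is the standard Hotelling ``unstable middle'' phenomenon transplanted onto the zone $M$.
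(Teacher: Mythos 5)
Your reduction (ii) mirrors the paper's first step (Claim \ref{claim:clime n=3}, which empties the PIIs of facilities), and that part is fine. The genuine gap is reduction (i). The claim that an underpaid player ``can relocate next to the best-served facility or into the largest empty gap and thereby capture at least a $\frac13$ share'' is false for $\clime(\lambda)$, and it fails exactly in the regime that matters. Take $\bl s=\left(\frac13-\lambda,\,x,\,\frac23+\lambda\right)$ with $x$ in the interior of the middle zone $M$ and $\frac1{12}\le\lambda<\frac16$. All three facilities lie outside the open PIIs, so \clime{} coincides with \nim{} on this profile, and the middle player earns $\frac12\left(\left(\frac23+\lambda\right)-\left(\frac13-\lambda\right)\right)=\frac16+\lambda<\frac13$. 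She is underpaid, yet she has \emph{no} beneficial deviation at all in this range of $\lambda$: undercutting an outer facility from the outside yields at most $\frac13-\lambda\le\frac16+\lambda$; pairing with an outer facility yields $\frac14\le\frac16+\lambda$ (the pair's catchment is $[0,\frac12]$); every other point of $M$ yields the same $\frac16+\lambda$; and any point inside a PII yields strictly less, since both flanks of that PII remain occupied and she forfeits all of its users. The Step-1 reasoning of Theorem \ref{thm: lime unique pne} does not transplant: \lime{} has $n-1$ PIIs whose endpoints form $n$ slots, each guaranteeing a deviator a payoff of $\frac1n$, whereas $\clime(\lambda)$ has only two PIIs of width $2\lambda<\frac13$ and no such slot structure.

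Consequently your case analysis is circular precisely where it must work hardest. Profiles like the one above are indeed not PNEs, but the profitable deviation belongs to an \emph{over}-earning outer player, who squeezes inward to $x-\delta$ and collects roughly $x$ instead of $\frac12\left(\frac13-\lambda+x\right)$ --- exactly the paper's argument in its case $s_1<s_2<s_3$ (player 1 deviates to $s_2-\delta$). Your proof never runs that argument on these profiles, because they are discarded by (i) before the squeeze step. Moreover, your description of the endgame is inverted in this regime: the exposed middle facility cannot ``profit by pairing up just outside a PII'' (that gives at most $\max\{\frac14,\frac13-\lambda\}\le\frac16+\lambda$); it is the outer facilities that profit by moving in. The repair is to drop (i) entirely and, after reduction (ii), run the case analysis directly on the ordering pattern of $(s_1,s_2,s_3)$ (all equal, all distinct, exactly two equal), using undercutting and inward-squeeze deviations as the paper does.
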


As it turns out, the equilibrium structure varies for $n\geq 4$. As mentioned above, we do not study it in this paper but rather exemplify some cases. Before we do so, we give a crude bound on the intervention cost of games with $n\geq 3$.
\begin{proposition}
\label{prop: ic of clime}
For $n\geq 3$, it holds that $\IC(\clime{}) \leq 4 \lambda$.
\end{proposition}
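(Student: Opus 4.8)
The plan is to exploit that $\clime(\lambda)$ coincides with $\nim$ everywhere except on the two PIIs, so that the entire intervention cost is concentrated there. Writing
\[
\IC_n(\clime) = \sup_{\bl s}\int_0^1\left(\sum_{i=1}^n \clime(\bl s,t)_i\abs{s_i-t} - \min_{i'}\abs{s_{i'}-t}\right)dt,
\]
I would first observe that for every $t$ lying outside $(\tfrac{1}{n}-\lambda,\tfrac{1}{n}+\lambda)\cup(\tfrac{n-1}{n}-\lambda,\tfrac{n-1}{n}+\lambda)$ the integrand vanishes: by the terminal else-branch (Line \ref{alg:clime as nime}) and by the all-facilities-inside branch (Line \ref{alg:clime all in one}), such a user $t$ is routed to its nearest facility, exactly as under $\nim$. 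Hence the integral collapses to an integral over the two PIIs alone.

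Second, I would bound the integrand pointwise on each PII by the crude constant $1$. The first term $\sum_i \clime(\bl s,t)_i\abs{s_i-t}$ is a convex combination of the distances $\abs{s_i-t}$, each of which lies in $[0,1]$ because both $s_i$ and $t$ lie in $[0,1]$; thus it is at most $1$. The subtracted term $\min_{i'}\abs{s_{i'}-t}$ is nonnegative. Consequently the per-user excess cost never exceeds $1$, and this holds uniformly across all branches of Algorithm \ref{alg:clime}, including the probability-$\epsilon$ random-direction branch, which merely mixes in further facility distances that are themselves at most $1$.

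Third, since the user distribution is uniform, integrating the constant bound $1$ over the two PIIs — each an open interval of length $2\lambda$ — yields at most $4\lambda$. This estimate survives the case where the two PIIs overlap (which can happen for moderate $\lambda$ when $n$ is small, as e.g. in Proposition \ref{prop:clime n = 3}), since overlap only shrinks the measure of their union below $4\lambda$. Taking the supremum over $\bl s$ then gives $\IC_n(\clime)\le 4\lambda$.

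Because the bound is deliberately loose — reflected by the word \emph{crude} preceding Proposition \ref{prop: ic of clime} — I do not anticipate any real obstacle from tightness; the only step demanding care is confirming that every branch of Algorithm \ref{alg:clime} keeps the per-user excess inside $[0,1]$, which reduces to the elementary fact that all facility locations and the user position lie in $[0,1]$.
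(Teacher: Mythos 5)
Your proposal is correct and follows essentially the same route as the paper's proof: restrict the intervention-cost integral to the two PIIs (since $\clime$ coincides with $\nim$ for every user outside them, by Line \ref{alg:clime as nime}), bound the per-user excess cost pointwise by $1$, and integrate over a set of measure at most $4\lambda$. Your extra remarks — justifying the pointwise bound via convex combinations of distances in $[0,1]$ and noting that overlapping PIIs only shrink the measure — are details the paper leaves implicit, but they do not change the argument.
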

We now showcase a few examples.
\subsubsection{Four players}

Consider $G(4,\clime(\frac{1}{8}))$. The strategy profile $\bl s=(\frac{1}{8},\frac{3}{8},\frac{5}{8},\frac{7}{8})$ is the unique PNE (up to renaming the players) of this game, with a social cost of  $\SC(\clime,\bl s)=\frac{1}{16}$, i.e., the optimal social cost in a four-player game. In fact, this equilibrium structure extends to $G(4,\clime(\lambda))$ with any $\lambda \leq \frac{1}{8}$. See Figure \ref{fig:four players} for illustration.

\subsubsection{Five players}
For $n = 5$, and unlike $n=4$, $\clime({\lambda})$ may induce a game with several equilibria, depending on the value of $\lambda$. For example, if $\lambda = \frac{1}{10}$  the socially optimal locations (i.e.,$\frac{1}{10},\frac{3}{10},\frac{5}{10},\frac{7}{10},\frac{9}{10}$) are the unique equilibrium. However, when $\lambda = \frac{1}{40}$, agglomerating on the two distant endpoints of the PIIs can be in equilibrium. See Figure \ref{fig:four players} for illustration.

\subsubsection{Games with $n\geq 6$}
See Figure \ref{fig: greater n} for illustration. Here too, several equilibria may exist. To conclude, although not always the case, \clime{} illustrate in a constructive manner an approach for trading some social cost for getting lower intervention cost.

\begin{figure*}[h]
\centering
\includegraphics[scale=1.0]{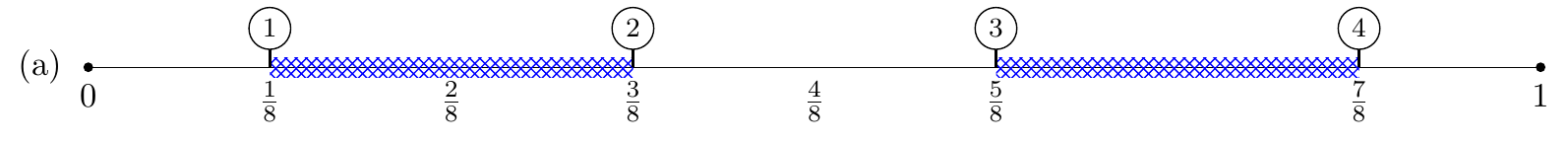}
\includegraphics[scale=1.0]{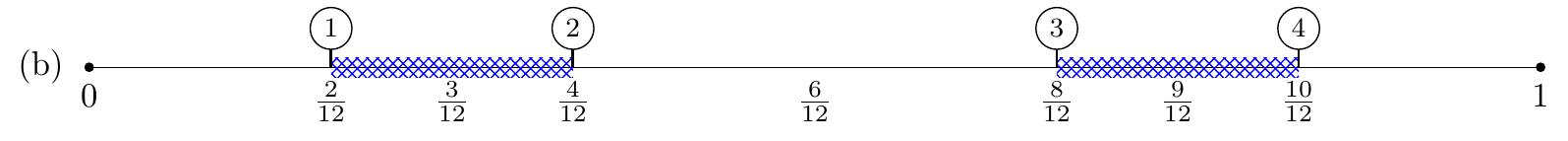}
\caption{Two PNE profiles in four-player games with \clime{}. Sub-figure (a) corresponds to G(4,$\clime({\lambda})$) with $\lambda = \frac{1}{8}$, and Sub-figure (b) corresponds to G(4,$\clime({\lambda})$) with $\lambda = \frac{1}{12}$. Each profile is the unique PNE of the game induces by $\clime({\lambda})$ with the corresponding $\lambda$.
\label{fig:four players}}
\end{figure*}
\begin{figure*}[h]
\centering
\includegraphics[scale=1.0]{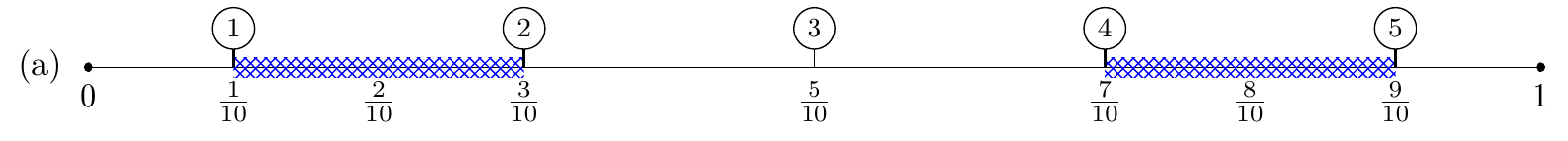}
\includegraphics[scale=1.0]{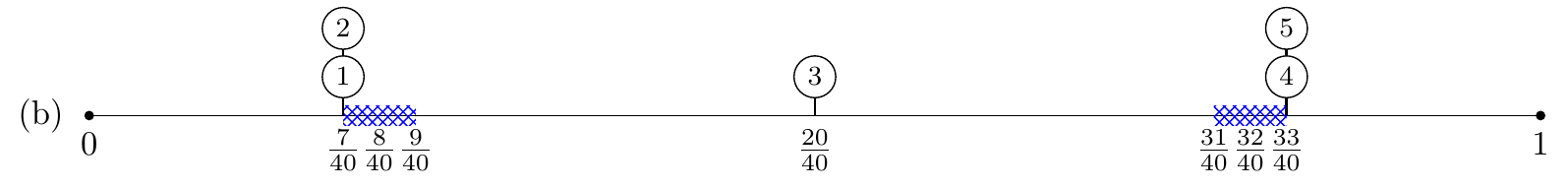}
\caption{Two PNE profiles in five-player games with \clime{}. Sub-figure (a) corresponds to G(4,$\clime({\lambda})$) with $\lambda = \frac{1}{10}$, and Sub-figure (b) corresponds to G(4,$\clime({\lambda})$) with $\lambda = \frac{1}{40}$. Note that in Sub-figure \ref{fig:five players}.(b) there are paired facilities, unlike the examples given for \lime{}.  \label{fig:five players}}
\end{figure*}
\begin{figure*}[h]
\centering
\includegraphics[scale=1.0]{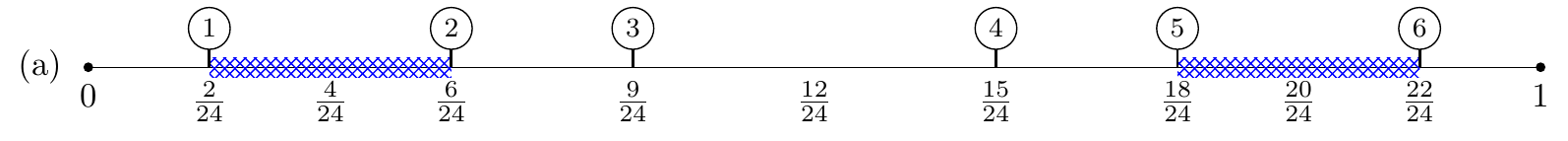}
\includegraphics[scale=1.0]{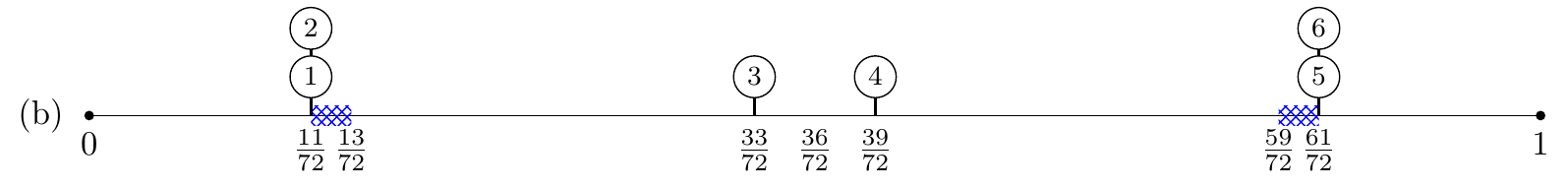}
\includegraphics[scale=1.0]{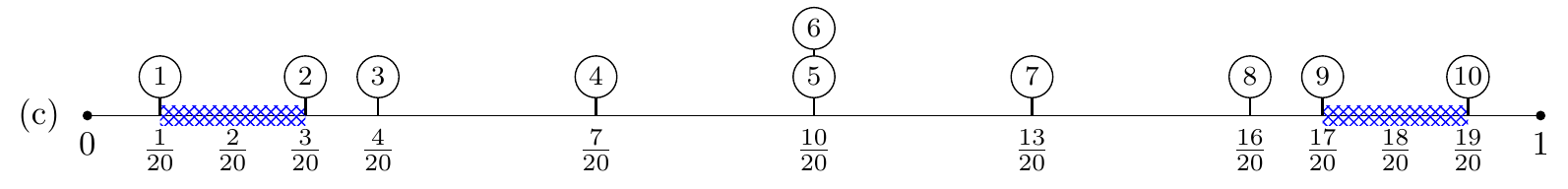}
\caption{Examples of PNE profile of $G(n,\clime({\lambda} ))$, for various values of $n$ and $\lambda$. In Sub-figure \ref{fig: greater n}.(a) $n=6$ and $\lambda = \frac{1}{12}$, in Sub-figure \ref{fig: greater n}.(b) $n=6$ and $\lambda = \frac{1}{72}$, and in Sub-figure \ref{fig: greater n}.(c) and $n=10$ and $\lambda = \frac{1}{20}$.\omer{we removed \ref{fig: greater n}.(d) depict two PNE for} 
\label{fig: greater n}}
\end{figure*}

\section{Proofs of Section \ref{sec:configurable}}
The following claim will be useful in what follows.
\begin{claim}
\label{claim:clime n=2}
Consider $G(2,\clime(\lambda))$ for $0<\lambda\leq\frac{1}{4}$. If a strategy profile $\bl s$ is a PNE, then $\bl s \cap \left( \frac{1}{2}-\lambda,\frac{1}{2}+\lambda \right) = \emptyset$.
\end{claim}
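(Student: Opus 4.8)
The plan is to argue by contradiction in the spirit of the proof of Lemma \ref{notininterval}: I assume a PNE $\bl s=(s_1,s_2)$ places a facility inside the single PII $J\defeq\left(\frac{1}{2}-\lambda,\frac{1}{2}+\lambda\right)$ (for $n=2$ the index set $\{1,n-1\}$ collapses to $\{1\}$, so there is exactly one PII, centered at $\frac12$), and I exhibit a beneficial deviation. Because both the uniform density and $\clime(\lambda)$ are symmetric under the reflection $x\mapsto 1-x$, and because by neutrality I may relabel players, I assume w.l.o.g. that $s_1\in J$ and split on the position of player $2$. Throughout I suppress the $O(\epsilon)$ random-routing mass, exactly as the paper does for \lime{}; the strict inequalities below have $\Theta(\lambda)$ slack, so they survive for $\epsilon$ small.

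\textbf{Both facilities in $J$.} Here $\bl l=\bl r=\emptyset$ for every $t\in J$, so the last branch of Algorithm \ref{alg:clime} fires and $\clime(\lambda)$ coincides with \nim{} on all of $[0,1]$. Payoffs are then ordinary Voronoi masses summing to $1$, so some player—say player $1$—earns at most $\frac12$. I deviate player $1$ to the boundary point $\frac12+\lambda$, leaving player $2$ in $J$. Now the interval $J$ has only a right-exterior facility and is routed entirely to player $1$, the left wing $\left[0,\frac12-\lambda\right]$ is nearest to player $2$, and the right wing is nearest to player $1$; she therefore collects at least $2\lambda+\left(\frac12-\lambda\right)=\frac12+\lambda>\frac12$, a strict improvement.

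\textbf{Exactly one facility in $J$.} W.l.o.g. $s_2\ge\frac12+\lambda$, so for $t\in J$ we have $\bl l=\emptyset$, $\bl r=\{s_2\}$, and all of $J$ is routed to player $2$; hence player $1$'s territory is $\left[0,\frac12-\lambda\right]\cup\left[\frac12+\lambda,m\right]$ with $m=\frac{s_1+s_2}{2}$ (the second piece empty when $m\le\frac12+\lambda$). I again deviate player $1$ to $\frac12+\lambda$. Then $\bl r=\left\{\frac12+\lambda,s_2\right\}$, so $J$ together with the left wing is routed to the leftmost exterior facility $\frac12+\lambda$, i.e.\ to player $1$, and on the right she keeps $\left[\frac12+\lambda,m'\right]$ with $m'=\frac{\frac12+\lambda+s_2}{2}$. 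Her new mass equals $m'\ge\max\left\{\frac12+\lambda,\,m\right\}$ (using $s_1<\frac12+\lambda$ for $m'>m$, and $s_2\ge\frac12+\lambda$ for $m'\ge\frac12+\lambda$), which strictly exceeds her old mass $\max\left\{\frac12-\lambda,\,m-2\lambda\right\}$; so this is a beneficial deviation.

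The crux is the second case. The naive bound "player $2$ owns all of $J$, hence $\pi_1\le 1-2\lambda$" does \emph{not} beat the post-deviation payoff $\frac12+\lambda$ when $\lambda$ is small, so I cannot argue by accounting alone. The decisive observation is that the relevant Voronoi midpoint only moves \emph{outward} under the deviation ($m'>m$) while the excised $J$-mass is reclaimed, which forces a strict gain uniformly in $\lambda\in\left(0,\frac14\right]$. The remaining work—checking which branch of Algorithm \ref{alg:clime} fires on each subinterval and confirming the suppressed $\epsilon$-mass can only help the deviator—is routine. Since both cases contradict the PNE assumption, no equilibrium profile places a facility in $J$, proving the claim.
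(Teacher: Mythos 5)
Your proof is correct, and while it follows the same high-level template as the paper's proof (contradiction, case analysis on how many facilities lie in the PII $J=\left(\frac{1}{2}-\lambda,\frac{1}{2}+\lambda\right)$, and deviations to the endpoints of $J$), it makes a genuinely different—and in fact sounder—choice in the decisive case. When exactly one facility lies in $J$, the paper (in its subcase $s_1<\frac{1}{2}-\lambda$, with the outside facility to the left) moves the \emph{outside} player to the near endpoint $\frac{1}{2}-\lambda$ and asserts this strictly improves her payoff. That assertion can fail: take $\lambda=\frac{1}{4}$, $s_1=0$, $s_2=\frac{3}{4}-\delta$; then both before and after the proposed move the outside player owns the entire left wing and (up to the $\epsilon$-mass) all of $J$, so her payoff is exactly unchanged. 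The profile is still not a PNE, but only because the \emph{inside} player has a profitable deviation—which is precisely the deviation you use. Your argument, moving the inside player to the endpoint adjacent to the outside facility and observing that the Voronoi midpoint can only move outward ($m'\geq\max\left\{\frac{1}{2}+\lambda,m\right\}$) while the excised $J$-mass is reclaimed, yields a uniform improvement of at least $2\lambda$ (minus $O(\epsilon\lambda)$ corrections) in \emph{every} configuration, so it closes a hole that the paper's own case analysis leaves open. Your first case (both facilities in $J$) coincides with the paper's, with the sensible fix of letting the deviator be a player whose payoff is at most $\frac{1}{2}$.

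One small caveat: when $s_2=\frac{1}{2}+\lambda$ exactly, your deviation co-locates the two facilities, and \nim{}'s uniform tie-breaking then gives the deviator a new mass of $\frac{1}{2}$ rather than $m'=\frac{1}{2}+\lambda$, so the identity ``new mass equals $m'$'' holds only for $s_2>\frac{1}{2}+\lambda$. The conclusion is unaffected—$\frac{1}{2}$ still strictly exceeds her pre-deviation payoff of $\frac{1}{2}-\lambda+O(\epsilon\lambda)$—but this boundary subcase should be stated separately rather than absorbed into the midpoint formula.
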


\begin{proof}[Proof of Claim \ref{claim:clime n=2}]
Assume the claim does not hold, and w.l.o.g. let $s_1\leq s_2$. If $\abs{\bl s \cap \left( \frac{1}{2}-\lambda,\frac{1}{2}+\lambda \right)}=2$, then by deviating to $s_2' = \frac{1}{2}+\lambda$ her payoff would increase, since
\[
\pi_2(s_1,s_2') = \left(1-\frac{\epsilon}{2}\right) 2\lambda +\frac{\epsilon\lambda}{2} + \frac{1}{2}-\lambda =\frac{1}{2}+\lambda\left(1-\frac{\epsilon}{2}\right)\stackrel{\epsilon <1}{>} \frac{1}{2} = \pi_2(\bl s);
\]
hence, $\bl s$ cannot be in equilibrium. Otherwise, w.l.o.g. $\bl s \cap \left( \frac{1}{2}-\lambda,\frac{1}{2}+\lambda \right)=s_2$. There are two cases:
\begin{enumerate}
\item If $s_1=\frac{1}{2}-\lambda$, then $\pi_2(\bl s)<\frac{1}{2}$ (as previously explained); thus, player 2 has a beneficial deviation
\item $s_1<\frac{1}{2}-\lambda$, then by deviating to $s_1'=\frac{1}{2}-\lambda$ player 1 would improve her payoff, since $\pi_1(\bl s_) < \pi_1(s_1',s_2)$.
\end{enumerate}
Under both cases the players have beneficial deviations; therefore, $\bl s$ is not in equilibrium.
\end{proof}

\begin{proof}[Proof of Proposition \ref{prop:2 players lambda eq}]
First, by a simple case analysis one can show that every profile in $\{ \frac{1}{2}-\lambda,\frac{1}{2}+\lambda \}^2$ is a PNE, since the players do not have beneficial deviations. On the other hand, let $\bl s = (x,y)$ such that $(x,y)\notin \{ \frac{1}{2}-\lambda,\frac{1}{2}+\lambda \}$, and assume by contradiction that $(x,y)$ is a PNE. In addition, w.l.o.g. assume that $x\notin \{ \frac{1}{2}-\lambda,\frac{1}{2}+\lambda \}$. Due to Claim \ref{claim:clime n=2}, we know that $x$ and $y$ are outside $\left( \frac{1}{2}-\lambda,\frac{1}{2}+\lambda  \right)$. If $x\neq y$, player 1 can move her facility towards that of player 2, and improve her payoff. Otherwise, if $x=y$, by deviating to one of $ \{ x-\delta,x+\delta \}$ for a small $\delta >0$, player 1 can obtain a payoff arbitrarily close to $\max \{ x ,1-x  \}$, which is greater than $\frac{1}{2}$ since $x \notin  \{ \frac{1}{2}-\lambda,\frac{1}{2}+\lambda \}$.
\end{proof}

\begin{proof}[Proof of Corollary \ref{cor:sc for n=2}]
For $\bl s_{\text{worst}} = \left( \frac{1}{2}-\lambda,\frac{1}{2}-\lambda  \right)$ (or equivalently, for $\bl s = \left( \frac{1}{2}+\lambda,\frac{1}{2}+\lambda  \right)$ ),
\[
\SC(\clime,s_{\text{worst}} ) = \left(\frac{1}{2}-\lambda \right)^2 \frac{1}{2}+\left(\frac{1}{2}+\lambda \right)^2 \frac{1}{2} = \frac{1}{4}+\lambda^2
\]
For $\bl s_{\text{best}} = \left( \frac{1}{2}-\lambda,\frac{1}{2}+\lambda  \right)$ it hold that
\[
\SC(\clime,s_{\text{best}} ) =  \left(\frac{1}{2}-\lambda \right)^2  +\lambda^2 =\frac{1}{4}-\lambda+2\lambda^2 
\]
\end{proof}

\begin{proof}[Proof of Proposition \ref{prop:clime n=2 ic}]
Let $\bl s=(x,y)$, and let
\[f\left(x,y\right) \defeq \SC\left(\clime,\left(x,y\right)\right) - \SC\left(\nim,\left(x,y\right)\right).\]
We are interested in 
\begin{equation}
\label{eq: prop ic 2 init}
\IC_2(\clime) = \sup_{x,y \in [0,1]} f(x,y)
\end{equation}
Let $I\defeq\left(\frac{1}{2}-\lambda, \frac{1}{2}+\lambda\right)$, and observe that if $x,y \in I$, then due to Line \ref{alg:clime all in one} in Algorithm \ref{alg:clime} it holds that $\clime(x,y,t)=\nim(x,y,t)$ for every user $t$. Moreover, if $x,y \notin I$, $\clime(x,y,t)=\nim(x,y,t)$ holds for every $t$ as well due to Line \ref{alg:clime as nime}. Consequently, 
\[
\text{Eq. } (\ref{eq: prop ic 2 init})=\sup_{x\notin I,  y\in I} f(x,y)=\sup_{x\leq \frac{1}{2}-\lambda,  y\in I} f(x,y).
\]
where the last equality is due to symmetry. Let $x,y$ such that $x\leq \frac{1}{2}-\lambda$ and $y\in I$, and notice that
\begin{equation}
\label{eq: prop ic 2 full}
f(x,y) = \int_{I} \left( \abs{t-x}-\min\{ \abs{t-y}, \abs{t-x}  \}  \right)dt.
\end{equation}
Next, we define $I_y(x,y)$ to be the set of users in $I$ that are closer to $y$ than to $x$, i.e., $I_y(x,y) = \{t\in I\mid \abs{t-y}\leq \abs{t-x}   \}$, and let $I_x(x,y) =I \setminus I_y(x,y)$. We can rephrase Equation (\ref{eq: prop ic 2 full}) as follows
\begin{align}
\label{eq: prop ic 2 short}
f(x,y) &= \int_{I_x(x,y)} \left( \abs{t-x}-\min\{ \abs{t-y}, \abs{t-x}  \}  \right)dt+\int_{I_y(x,y)} \left( \abs{t-x}-\min\{ \abs{t-y}, \abs{t-x}  \}  \right)dt\nonumber \\
&=\int_{I_y(x,y)} \left( \abs{t-x}-\abs{t-y} \right)dt.
\end{align}
We claim that $f$ is monotonically decreasing in $x$ when $y$ is fixed. This holds not only because the integrand in the last term in Equation (\ref{eq: prop ic 2 short}) is monotonically decreasing in $x$ when $y$ is fixed, but also since the set $I_y(x,y)$ monotonically decreases in $x$. As a result, 
\begin{align}
\label{sup with x=0 and y}
\text{Eq. } (\ref{eq: prop ic 2 init})=\sup_{y\in I} f(0,y).
\end{align}
Since $\lambda \leq \frac{1}{4}$, under every profile $(0,y)$ such that $y\in I$, \nim{} directs every $t\in I$ to $y$. In contrast, under every profile $(0,y)$ such that $y\in I$, \clime{} directs every $t\in I$ to $x$; thus, by a direct calculation we conclude that for every $y\in I$
\begin{align}
f(0,y) = \frac{(2\lambda)^2}{2}+2\lambda\left(\frac{1}{2}-\lambda\right)-\frac{1}{2} \left( \left(y-\frac{1}{2}+\lambda  \right)^2 +\left(\frac{1}{2}+\lambda-y  \right)^2  \right).
\end{align}
Moreover,
\begin{align}
\label{eq: ic 2:derivative}
\frac{d f(0,y)}{dy} = -\frac{1}{2} \left( 2\left(y-\frac{1}{2}+\lambda  \right) -2\left(\frac{1}{2}+\lambda-y  \right)  \right)=1-2y, \quad \frac{d^2 f(0,y)}{dy^2}=-2<0;
\end{align}
hence, $f(0,y)$ is concave in $y$, and attains its maximum in $y=\frac{1}{2}$. Since
\[
f\left(0,\frac{1}{2}\right) = \frac{(2\lambda)^2}{2}+2\lambda\left(\frac{1}{2}-\lambda\right)-\frac{1}{2} \left( \lambda^2+\lambda^2 \right)= \lambda- \lambda^2,
\]
due to Equation (\ref{sup with x=0 and y}) it holds that
\[
\IC_2(\clime) = \lambda- \lambda^2
\]
This concludes the proof of the Proposition. 
\end{proof}

\begin{claim}\label{claim:clime n=3}
Consider $G(3,\clime(\lambda))$ for $0<\lambda < \frac{1}{6}$. If $\bl s \cap \left(\left( \frac{1}{3}-\lambda,\frac{1}{3}+\lambda \right) \cup \left(\frac{2}{3}-\lambda,\frac{2}{3}+\lambda \right)\right) \neq \emptyset$ then $\bl s$ is not a PNE. 
\end{claim}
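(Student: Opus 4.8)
The plan is to reproduce the beneficial-deviation argument of the proof of Lemma~\ref{notininterval}, specialized to the two potentially intervened intervals of $\clime(\lambda)$, namely $P_1=\left(\frac13-\lambda,\frac13+\lambda\right)$ and $P_2=\left(\frac23-\lambda,\frac23+\lambda\right)$. First I would record the geometric consequences of the hypothesis $\lambda<\frac16$: the two PIIs are disjoint and separated by a middle region $\left[\frac13+\lambda,\frac23-\lambda\right]$ of positive mass $\frac13-2\lambda$, and each is flanked by a boundary region ($\left[0,\frac13-\lambda\right]$ on the left, $\left[\frac23+\lambda,1\right]$ on the right) of positive mass $\frac13-\lambda$. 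In particular every endpoint of $P_1$ lies strictly outside $P_2$ and vice versa, so relocating a facility to an endpoint of one PII never places it inside the other. By the left--right symmetry of $\clime(\lambda)$ it suffices to treat the case $\bl s\cap P_1\neq\emptyset$, so I fix a player $j$ with $s_j\in P_1$.

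Next I carry out the case analysis on the two flanks of $P_1$, i.e.\ on $\bl l=\bl s\cap\left[0,\frac13-\lambda\right]$ and $\bl r=\bl s\cap\left[\frac13+\lambda,1\right]$, exactly as in the ``both endpoints occupied / exactly one occupied / both empty'' cases of Lemma~\ref{notininterval}, with the roles of $o^n_i,o^n_{i+1}$ now played by $\frac13-\lambda$ and $\frac13+\lambda$. The governing principle is that a facility sitting in $P_1$ forfeits the entire mass of $P_1$ whenever at least one flank is occupied, since the mediator then routes all $P_1$-users to facilities outside $P_1$, leaving $s_j$ at most the vanishing random share. I would show that whichever of $\frac13\pm\lambda$ has the larger sub-mass of $P_1$ nearest to it is a profitable target for the relevant player: relocating there appends a fixed positive fraction of the mass $2\lambda$ of $P_1$ (at least $(1-\epsilon)\lambda$) to her payoff, together with a positive guaranteed block from the adjacent boundary or middle region, while her $\nim$-served mass changes only favorably or by a lower-order amount, since the move stays within the width-$2\lambda$ window. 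As $\epsilon$ is an arbitrarily small constant, the reclaimed mass strictly dominates, contradicting equilibrium. The only remaining subcase is that all facilities lie in $P_1$, where by Line~\ref{alg:clime all in one} $\clime$ coincides with $\nim$ and the clustered mass is split; here, as in the last subcase of Lemma~\ref{notininterval}, the player with the smallest current share (at most $\frac13$) can deviate to the boundary of the empty region near $\frac23$ and capture a block of mass strictly exceeding $\frac13$, again using $\lambda<\frac16$ to ensure the middle and right regions she grabs are large enough.

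I expect the main obstacle to be the bookkeeping in the one-sided subcases (exactly one of $\bl l,\bl r$ nonempty) and the paired-facility configurations, where the $\epsilon$-randomized routing contributes to several players' payoffs simultaneously. There one must verify that the deviating player's net gain---reclaimed PII mass plus the guaranteed adjacent block, minus any randomization term she relinquishes---is strictly positive uniformly in the small value of $\epsilon$, just as in the corresponding step of Lemma~\ref{notininterval}; the dependence on $\lambda$ only rescales the block sizes and does not flip the sign of the comparison. Since all region masses are exact, each inequality reduces to comparing polynomials in $\lambda$, so once the correct deviation target is fixed in every case the verification is routine, and the conclusion that $\bl s$ cannot be a PNE follows.
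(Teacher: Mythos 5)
Your overall skeleton---reduce to $P_1=\left(\frac13-\lambda,\frac13+\lambda\right)$ by symmetry, then case-split on the occupancy pattern around $P_1$ and exhibit a beneficial deviation in each case---is the same kind of argument the paper gives (the paper organizes its cases by $\abs{\bl s\cap P_1}\in\{1,2,3\}$ rather than by the flanks, but that difference is cosmetic). The genuine gap is in your ``governing principle,'' which is exactly what would have to carry the one-sided and paired configurations you defer as routine, and it is false. You claim the profitable target is the endpoint of $P_1$ on the side receiving the larger routed sub-mass of $P_1$, and that the deviator's $\nim$-served mass ``changes only favorably or by a lower-order amount, since the move stays within the width-$2\lambda$ window.'' This transfers a feature of \lime{} that $\clime(\lambda)$ does not have: under \lime{} the PIIs tile the whole interior, so essentially all contested mass sits in PIIs, whereas in $G(3,\clime(\lambda))$ the complement of the PIIs carries constant mass (the middle region has mass $\frac13-2\lambda$, each outer flank $\frac13-\lambda$), and a facility inside $P_1$ can be serving $\Theta(1)$ of that mass under the \nim{} rule. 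A move of size $O(\lambda)$ can then change the left-to-right ordering of facilities---in particular break a pairing---and reassign $\Theta(1)$ mass away from the deviator.

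Concretely, take $\lambda<\frac{1}{10}$ and $\bl s=\left(0,\frac13,\frac13\right)$: the left flank of $P_1$ is occupied, the right flank is empty, and two facilities are paired at the center of $P_1$ (your ``exactly one of $\bl l,\bl r$ nonempty,'' paired subcase). All of $P_1$'s routed mass goes left, so your rule targets $\frac13-\lambda$. But each paired player currently earns about $\frac{5}{12}-\lambda$ (half of everything to the right of $P_1$, about $\frac13-\frac{\lambda}{2}$, plus a share of the left flank); after moving to $\frac13-\lambda$ she is no longer rightmost, the entire right side goes to her former partner, and she nets only about $2\lambda+\frac16-\frac{\lambda}{2}$---a strict loss whenever $\lambda<\frac{1}{10}$. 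The profile is of course not a PNE, but the profitable deviation points the opposite way: to $\frac13+\lambda$, toward the \emph{empty} flank, where she captures the unclaimed mass $\approx\frac23-\lambda$ to the right of $P_1$ plus (for this range of $\lambda$) all of $P_1$'s mass, now routed to her as the nearest facility outside $P_1$. Note that this is also what the one-sided case of Lemma~\ref{notininterval} itself does---it deviates to the \emph{unoccupied} endpoint $o^n_{i+1}$, not toward the side where the PII mass is currently flowing---and it is what the paper's proof does in its $\abs{\bl s\cap P_1}\in\{1,2\}$ cases. So your case analysis can be repaired, but only by re-deriving the deviation target per configuration (toward empty flanks, or by an outside player jumping across the PII) and by accounting for $\Theta(1)$ reassignments of \nim{}-mass caused by ordering changes; it is not the routine $\lambda$-bookkeeping that your stated principle reduces it to.
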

\begin{proof}[Proof of Claim \ref{claim:clime n=3}] Assume the claim does not hold, and w.l.o.g. $s_1\leq s_2 \leq s_3$. We show by exhaustive case analysis that $\bl s$ is not a PNE.
\begin{enumerate}
\item If $|\bl s \cap \left(\frac{1}{3}-\lambda,\frac{1}{3}+\lambda \right)|=3$: Player 1 obtains a payoff of less than $\frac{1}{3}+\lambda$, but by deviating to $\frac{1}{3}+\lambda$ she could get at least $\left(\frac{2}{3}+\lambda\right)(1-\epsilon)$; hence, she has a beneficial deviation (deviating to the left).
\item  If $|\bl s \cap \left(\frac{1}{3}-\lambda,\frac{1}{3}+\lambda \right)|=2$: In case $s_1,s_2 \in (\frac{1}{3}-\lambda,\frac{1}{3}+\lambda)$ and $s_3 > \frac{1}{3}+\lambda$, player 3 has a beneficial deviation. If $s_3=\frac{1}{3}+\lambda$ then player 1 moves to $\frac{1}{2}$ and the analysis is similar to Case 1. Similar argument applies in case $s_2,s_3 \in (\frac{1}{3}-\lambda,\frac{1}{3}+\lambda)$.
\item If $|\bl s \cap \left(\frac{1}{3}-\lambda,\frac{1}{3}+\lambda \right)|=1$: In case $s_1 \in (\frac{1}{3}-\lambda,\frac{1}{3}+\lambda)$, we have two sub-cases:
\begin{itemize}
\item if $s_2 = \frac{1}{3}+\lambda$, player 1 can improve her payoff by deviating to $\frac{1}{3}-\lambda$. 
\item if $s_2 > \frac{1}{3}+\lambda$, player 1 can improve her payoff by deviating to $\frac{1}{3}+\lambda$.
\end{itemize}
Else, if $s_3 \in (\frac{1}{3}-\lambda,\frac{1}{3}+\lambda)$ player 1 can improve her payoff by deviating to $\frac{1}{3}+\lambda$.

Finally, if $s_2 \in  (\frac{1}{3}-\lambda,\frac{1}{3}+\lambda)$ then 
$(s_1,s_3)$ must be $(\frac{1}{3}-\lambda,\frac{1}{3}+\lambda)$, since otherwise the corresponding player has a beneficial deviation. However, if $s_1=\frac{1}{3}-\lambda$, she can improve her payoff by deviating to $\frac{1}{3}+\lambda +\delta$ for small enough $\delta$.
\end{enumerate}
The analysis for $(\frac{2}{3}-\lambda,\frac{2}{3}+\lambda)$ is symmetric and hence omitted. We conclude that $\bl s$ is not a PNE.
\end{proof}

\begin{proof}[Proof of Proposition \ref{prop:clime n = 3}]
Assume w.l.o.g. that $s_1 \leq s_2 \leq s_3$. According to Claim \ref{claim:clime n=3}, if $\bl s$ is a PNE then the PIIs are free of facilities. As for the other alternatives,
\begin{enumerate}
\item $s_1=s_2=s_3$. Every player gets a payoff of $\frac{1}{3}$. However, by deviating to one of $\{ \frac{1}{2}-\delta,\frac{1}{2}+\delta \}$, a player can get more than that; hence, there exist beneficial deviations.
\item If $s_1<s_2<s_3$: In case $s_2 \not \in \{\frac{1}{3}+\lambda,\frac{2}{3}+\lambda \}$ then player 1 can deviate to $s_1'=s_2-\delta$ for small enough $\delta$ and improve her payoff. Else, if $s_2 =\frac{1}{3}+\lambda$, player 3 has  a beneficial deviation. Otherwise, if $s_2 =\frac{2}{3}+\lambda$, player 1 has a beneficial deviation.
\item Assume w.l.o.g.  $s_1=s_2<s_3$ (symmetrically for $s_1<s_2=s_3$): If $s_1 \not \in \{\frac{1}{3}-\lambda,\frac{2}{3}-\lambda\}$ then player 3 has a beneficial deviation to $s_3' = s_1+\delta$, for small enough $\delta$. If $s_1 = \frac{1}{3}-\lambda$, then unless $s_3=\frac{1}{3}+\lambda$ player 3 can improve her payoff by deviating to $\frac{1}{3}+\lambda$. However, if $s_3=\frac{1}{3}+\lambda$, player 1 can deviate to $\frac{1}{2}$ and improve her payoff. Similar argument holds for $s_1 = \frac{2}{3}-\lambda$.
\end{enumerate}
\end{proof}

\begin{proof}[Proof of Proposition \ref{prop: ic of clime}]
Denote
\[
I \defeq  \left(\frac{1}{n}-\frac{1}{\lambda},\frac{1}{n}+\frac{1}{\lambda}\right) \cup \left(\frac{n-1}{n}-\frac{1}{\lambda},\frac{n-1}{n}+\frac{1}{\lambda}\right).
\]
Observe that for every user $t\not\in I$ it holds that (See Line \ref{alg:clime as nime} in Algorithm \ref{alg:clime})
\[
\sum_{i=1}^n\clime(\bl s,t)_i \cdot\abs{s_i-t} =\min_{i' \in [n]} \abs{s_{i'}-t};
\]
thus,
\begin{align*}
\IC_n(\clime) &= \sup_{s\in S} \{ \SC(\clime,\bl s)-\SC(\nim,\bl s) \} \\
&= \sup_{s\in S} \left\{ \int_I  \left( \sum_{i=1}^n\clime(\bl s,t)_i \cdot\abs{s_i-t} -\min_{i' \in [n]} \abs{s_{i'}-t} \right)dt.  \right\}\\
& < \sup_{s\in S} \left\{ \int_I 1dt \right\} \\
&= 4\lambda.
\end{align*}
\end{proof}

}\fi}

\end{document}